\newcommand{\Cone}{357}
\newcommand{\Ctwo}{931}
\newcommand{\kone}{28}
\newcommand{\Cthreshold}{\ensuremath{\Cone \Delta^{\frac{14}{k-14}}}}
\newcommand{\Sthreshold}{\ensuremath{\Ctwo \Delta^{\frac{16}{k-16/3}}}}
 \newcommand{\abs}[1]{\left\vert#1\right\vert}
\newcommand{\set}[1]{\left\{#1\right\}} \newcommand{\tuple}[1]{{\left(#1\right)}}
\newcommand{\eps}{\varepsilon}  \newcommand{\tp}{\tuple}
\renewcommand{\mid}{\;\middle\vert\;} \newcommand{\cmid}{\,:\,} 
 \newcommand{\defeq}{\triangleq} 
\newcommand{\ul}{\underline} \def\*#1{\mathbf{#1}} \def\+#1{\mathcal{#1}} \def\-#1{\mathrm{#1}}
 \newcommand{\poly}{\mathrm{poly}} \newcommand{\ceil}[1]{\left\lceil #1
  \right\rceil} \newcommand{\floor}[1]{\left\lfloor #1 \right\rfloor}
\renewcommand{\defeq}{:=}
\renewcommand{\Pr}[2][]{ \ifthenelse{\isempty{#1}} {\mathbf{Pr}\left[#2\right]}
  {\mathbf{Pr}_{#1}\left[#2\right]} } \newcommand{\mPr}[2][]{ \ifthenelse{\isempty{#1}}
  {\mathop{\mathbf{Pr}}\left[#2\right]} {\mathop{\mathbf{Pr}}_{#1}\left[#2\right]} }
\newcommand{\E}[2][]{ \ifthenelse{\isempty{#1}} {\mathop{\mathbf{E}}\left[#2\right]}
  {\mathop{\mathbf{E}}_{#1}\left[#2\right]} }
\newtheorem{theorem}{Theorem} \newtheorem{lemma}[theorem]{Lemma}
\newtheorem{corollary}[theorem]{Corollary} \newtheorem{definition}[theorem]{Definition}
\newtheorem{claim}[theorem]{Claim}  
\newcommand\blfootnote[1]{%
  \begingroup \renewcommand\thefootnote{}\footnote{#1}%
  \addtocounter{footnote}{-1}%
  \endgroup }
\begin{document}
\title{Counting Hypergraph Colourings in the Local Lemma
  Regime}
\blfootnote{ A preliminary version of
    this paper appeared in 50th Annual ACM SIGACT Symposium on the
    Theory of Computing (STOC), 2018, Los Angeles.}


\author{Heng Guo} \address[Heng Guo]{School of Informatics, University of Edinburgh, Informatics
  Forum, Edinburgh, EH8 9AB, United Kingdom.}  \email{hguo@inf.ed.ac.uk}


\author{Chao Liao} \address[Chao Liao]{Department of Computer Science and Engineering, Shanghai Jiao
  Tong University, 800 Dongchuan Road, Minhang District, Shanghai, China.}
\email{chao.liao.95@gmail.com}


\author{Pinyan Lu} \address[Pinyan Lu]{ITCS, Shanghai University of Finance and Economics, 100
Wudong Road, Yangpu District, Shanghai, China.}  \email{lu.pinyan@mail.shufe.edu.cn}


\author{Chihao Zhang} \address[Chihao Zhang]{John Hopcroft Center for
  Computer Science, Shanghai Jiao Tong
  University, 800 Dongchuan Road, Minhang District, Shanghai, China.}  \email{chihao@sjtu.edu.cn}

\begin{abstract}
  We give a fully polynomial-time approximation scheme (FPTAS) to count the number of $q$-colourings
  for $k$-uniform hypergraphs with maximum degree $\Delta$ if $k\ge \kone$ and $q > \Cthreshold$ .
  We also obtain a polynomial-time almost uniform sampler if $q>\Sthreshold$.  These are the first
  approximate counting and sampling algorithms in the regime $q\ll\Delta$ (for large $\Delta$ and
  $k$) without any additional assumptions.  Our method is based on the recent work of Moitra (STOC,
  2017).  One important contribution of ours is to remove the dependency of $k$ and $\Delta$ in
  Moitra's approach.
\end{abstract}

\maketitle

\section{Introduction}

Hypergraph colouring is a classic and important topic in combinatorics.  Its study was initiated by
Erd\H{o}s' seminal result \cite{Erd63}, a sufficient upper bound on the number of edges so that a
uniform hypergraph is $2$-colourable.  Many important tools in the probabilistic method have been
developed around this subject, such as the Lov\'asz local lemma \cite{EL75}, and the R\"odl nibble
\cite{Rod85}.

In this paper, we consider the problem of approximately counting colourings in $k$-uniform
hypergraphs.  The most successful approach to approximate counting is Markov chain Monte Carlo
(MCMC).  See \cite{DFK91,JS93,JSV04} for a few famous examples.  Indeed, MCMC has been extensively
studied for graph colourings in low-degree graphs.  Jerrum \cite{Jer95} showed that the simple and
natural Markov chain, Glauber dynamics, mixes rapidly, if $q>2\Delta$, where $q$ is the number of
colours and $\Delta$ is the maximum degree of the graph.  As a consequence, there is a \emph{fully
  polynomial-time randomized approximation scheme} (FPRAS) for the number of colourings if
$q>2\Delta$.  This result initiated a series of research and the best
bound in general requires that $q>(11/6-\eps) \Delta$ for some small
constant $\eps>0$ \cite{Vig00,CDMPP19}.  It is conjectured that Glauber dynamics is
rapidly mixing if $q>\Delta+1$, the ``freezing'' threshold, but current evidences typically require
extra conditions in addition to the maximum degree \cite{HV03,DFHV13}.  On the flip side, see
\cite{GSV15} for some (almost tight) NP-hardness results.

In $k$-uniform hypergraphs, the Markov chain approach still works, if $q > C\Delta$ for $C=1$ when
$k\ge4$ and $C = 1.5$ when $k = 3$ \cite{BDK08,BDK06}.  However, the local lemma implies that a
hypergraph is $q$-colourable if $q > C\Delta^{1/(k-1)}$ for some constant $C$.  This threshold is
much smaller than $\Delta$ when $\Delta$ is large.  Moser and Tardos' algorithmic version of the
local lemma \cite{MT10} implies that we can efficiently find a $q$-colouring under the same
condition.  Indeed, the study of the algorithmic local lemma has been a highly active area.  See
\cite{KS11,HSS11,HS13b,HS13a,HV15,AI16,Kol16,CPS17,HLLWX17} for various recent development.

In view of the success of algorithmic local lemma, it is natural to wonder, whether we can also
randomly generate hypergraph colourings, or equivalently, approximately count their number, beyond
the $q\asymp\Delta$ bound and approaching $q\asymp\Delta^{1/(k-1)}$?  Unfortunately, designing
Markov chains quickly runs into trouble if $q\ll\Delta$.  ``Freezing'' becomes possible in this
regime (see \cite{FM11} for examples\footnote{Interestingly, to prove the existence of frozen
  colourings, we also need to appeal to the local lemma.}), and the state space of proper hypergraph
colourings may not be connected via changing the colour of a single vertex, the building block move
of Glauber dynamics.

The only successful application of MCMC in this regime is due to Frieze et al.~\cite{FM11,FA17},
which requires that $q>\max\{C_k\log n$, $500 k^3\Delta^{1/(k-1)}\}$ and the hypergraph is
simple.\footnote{A hypergraph is simple if the intersection of any two hyperedges contains at most
  one vertex.}  Here $q=\Omega(\log n)$ is necessary to guarantee that ``frozen'' colourings are not
prevalent.  Furthermore, it is reasonable to believe that simple hypergraphs are much easier
algorithmically than general ones, since their chromatic numbers are
$O\left(\frac{\Delta}{\log\Delta}\right)^{1/(k-1)}$ \cite{FM13}, significantly smaller than the
bound implied by the local lemma, and related Glauber dynamics for hypergraph independent sets works
significantly better in simple hypergraphs than in general ones \cite{HSZ19}.

Our main result is a positive step beyond the freezing barrier in general $k$-uniform hypergraphs.
Our result also answers some open problems raised in \cite{FM11}.

\begin{theorem} \label{thm:main} For integers $\Delta\ge 2$, $k\ge \kone$, and $q>\Cthreshold$,
  there is an FPTAS for $q$-colourings in $k$-uniform hypergraphs with maximum degree $\Delta$.
\end{theorem}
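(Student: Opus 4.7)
The plan is to follow the LP-based framework recently introduced by Moitra (STOC 2017) for $k$-SAT, adapting it to $k$-uniform hypergraph colourings. The overall strategy is, via self-reducibility, to estimate for each vertex $v$ and each colour $c\in[q]$ the marginal probability $\Pr[\mu]{\sigma(v)=c}$ under the uniform distribution $\mu$ on proper $q$-colourings. Approximating each such marginal to inverse-polynomial accuracy and multiplying along a fixed ordering of the vertices yields a deterministic FPTAS for the partition function, hence for the count.

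I would first apply a random colour-marking scheme: mark each of the $q$ colours independently with some probability $p$, and let $M\subseteq[q]$ be the marked set. This reduces the task to computing marginals in a ``projected'' instance conditioned on $v$, together with a growing frontier of already-processed vertices, receiving particular colours. The marking splits the hyperedges into two classes: those whose marked-colour restriction is already strong enough that the local-lemma slack can be decoupled from the joint dependence on $k$ and $\Delta$, and residual constraints which are handled by a backup local-lemma argument applied to the induced formula on the marked colours.

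Next, I would set up a polynomial-size linear program whose variables encode marginals of partial colourings on connected subsets of bounded size around $v$ in the dependency hypergraph, with constraints enforcing consistency of these marginals and agreement with the ``local'' distribution induced by the restricted projection. The key technical step is to show that every feasible solution to this LP is close, within $n^{-O(1)}$, to the true marginal, which in turn reduces to establishing exponential decay of correlations along ``witness trees'' of bad hyperedges rooted at $v$. The parameters $q>\Cthreshold$ and $k\ge\kone$ are chosen precisely so that this decay goes through with room to spare.

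The main obstacle, and also the headline contribution flagged in the abstract, is to carry out the correlation-decay analysis in such a way that the resulting threshold depends only on $\Delta^{O(1/k)}$ rather than on quantities that blow up with $k$ and $\Delta$ as in Moitra's original analysis. I expect this to require a careful choice of the marking probability $p$, a tight accounting of how many hyperedges can contribute to each level of the witness tree (so that growth is offset by the shrinkage in conditional bad-event probabilities), and a sharp use of the local-lemma bound on the residual constraints. Balancing these three effects against each other should yield the stated threshold $q>\Cthreshold$ for $k\ge\kone$.
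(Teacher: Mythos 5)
Your plan stands or falls on the marking step, and that is precisely the step that cannot be made to work in this regime. In Moitra's $k$-SAT setting the (strong) LLL condition forces $k\ge C\log\Delta$, and it is only this relation that lets one exhibit a marking in which every constraint retains enough marked and enough unmarked variables, so that conditional marginals stay near-uniform throughout the sequential coupling. Here $q>\Cthreshold$ leaves $k$ and $\Delta$ as independent parameters, and no such marking of vertices exists in general. Switching to marking \emph{colours} with probability $p$, as you propose, does not address the actual failure mode: as the coupling (or the pinning in the self-reduction) proceeds vertex by vertex, a hyperedge can have almost all of its vertices coloured without being satisfied, at which point its remaining conditional marginals are far from uniform and every local-lemma-based estimate you want to feed into the LP breaks down. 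What is needed instead is an adaptive rule inside the coupling itself: once some prescribed number $k_2$ of vertices of an edge have been coloured, the remaining vertices of that edge are declared failed and are never coloured, so every surviving edge always has at least $k_1-k_2$ uncoloured vertices. One then has to prove that the resulting partial-colouring distributions are still ``pre-Gibbs'' (completing them uniformly recovers the true Gibbs measure), since otherwise the identity relating leaf probabilities to $\abs{\+C_1}/\abs{\+C_2}$, on which the LP certification rests, is no longer valid. None of this is supplied or replaced by your marking-plus-witness-tree sketch, and the truncation analysis also needs a finer decomposition than ``decay along witness trees'': quick termination of the coupling with high probability is not enough, because there are proper colourings on which the coupling provably does not terminate; one must separately show that such bad colourings form an exponentially small fraction of $\+C_1$ (via a $\{2,3\}$-tree union bound under the LLL-distorted measure) and then bound the truncation error only for good colourings.

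A second, smaller gap is in the counting step. ``Multiplying marginals along a fixed ordering of the vertices'' requires pinning vertices to the colours of some specific proper colouring, and an arbitrary colouring will in general shrink some hyperedge below the size needed for the marginal estimator to apply before that edge is satisfied. You need an additional application of the local lemma (made constructive via Moser--Tardos) to find a proper colouring in which every hyperedge is already non-monochromatic on its first $k-k_1^C$ vertices under the chosen ordering, so that along the telescoping product every residual edge keeps at least $k_1^C$ unpinned vertices. Without this your self-reduction cannot invoke the marginal-estimation subroutine at every step.
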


When $k$ and $\Delta$ are large, our result is better than the Markov chain results
\cite{BDK08,BDK06} and gets into the freezing regime.  The exponent of our polynomial time bound
depends on the constants $k$ and $\Delta$.

Our method is based on an intriguing result shown by Moitra \cite{Moi19} recently, who gave
\emph{fully polynomial-time deterministic approximation schemes} (FPTAS) to count satisfying
assignments of $k$-CNF formulas in the local lemma regime.  It is not hard to see that Moitra's
approach is rather general, and indeed it works for hypergraph colourings if some strong form of the
local lemma condition holds, and $k \ge C\log\Delta$ for some constant $C$, without any requirement
on the connectedness of the state space.  Unfortunately, the requirement that $k\ge C\log\Delta$ is
necessary for a ``marking'' argument to work in Moitra's approach.  This is not an issue for $k$-CNF
formulas, as in that setting the (strong) local lemma condition dictates that $k\ge C\log\Delta$.
However, for hypergraph colourings, we generally want $k$ and $\Delta$ to be two independent
parameters.  Marking is no longer possible in our general situation.

We briefly describe Moitra's approach before introducing our modifications.  The first observation
is that if the maximum degree is much smaller than the local lemma threshold, variables in the
target distribution are very close to uniform.  As a consequence, if we couple two copies of the
Gibbs distribution while giving different colours at a particular vertex, sequentially and in a
vertex-wise maximal fashion, the discrepancy in the resulting coupling will be logarithmic with high
probability.  Then, one can set up a linear program to do binary search for the marginal
probability, where the variables to solve mimic the transition probabilities in this coupling.  The
marking procedure ensures these locally (almost-)uniform properties to hold at any point of the
coupling process above, by finding a good set of vertices so that we only couple these vertices and
nothing goes awry.

Since marking is no longer possible in our setting, we take an adaptive approach in the coupling
procedure to ensure local (almost-)uniform properties, rather than marking what we are going to
couple in advance.  Although similar in spirit, our proof details are rather different from those by
Moitra \cite{Moi19}.  Since this coupling (or the analysis thereof) is used repeatedly in the whole
algorithm, we have to rework almost all other proofs as well.
A crucial technical contribution of ours is to distinguish two kinds of errors that may 
rise in the linear program.\footnote{These two kinds of errors are not to be confused with the type 
1 and type 2 errors in \cite{Moi19}. Both types are one kind of error in our analysis.}
In particular, the coupling process terminating in logarithmic steps with high probability is \emph{not}
sufficient to bound the number of certain ``bad'' partial colourings and a new exponentially small bound is shown (see Lemma~\ref{lem:outside}).
Moreover, we also streamline the argument and tighten the bounds at various places.  
Hopefully these refinement also sheds some light on where the limit of the method is.

The outline above only gives an approximation of the marginal probabilities.  Due to the lack of
marking, we also need to provide new algorithms for approximate counting and sampling.  For
approximate counting, we use the local lemma again to find a good ordering of the vertices so that
the standard self-reduction goes through.  For sampling, we use the marginal algorithm as an oracle,
to faithfully simulate the true distribution, in an adaptive fashion similar to the coupling
procedure.  At the end of this process, not all vertices will be coloured.  However we show that
with high probability, all remaining connected components have logarithmic sizes and we fill those
in by brutal force enumeration.  The threshold we obtain for sampling is larger than the one for
approximate counting.

\begin{theorem} \label{thm:main2} For integers $\Delta\ge 2$, $k\ge \kone$, and $q>\Sthreshold$,
  there is a sampler whose distribution is $\eps$-close in total variation distance to the uniform
  distribution on all proper colourings, with running time polynomial in the number of vertices and
  $1/\eps$.
\end{theorem}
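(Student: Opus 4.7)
The plan is to bootstrap the marginal FPTAS of Theorem~\ref{thm:main} into an approximate sampler by processing vertices one by one, sampling each vertex's colour from the oracle's output, and ``freezing'' (leaving uncoloured) any vertex whose residual instance would fall outside the regime where the oracle is guaranteed to succeed. The larger threshold $q > \Sthreshold$ compared with the counting threshold $q > \Cthreshold$ supplies exactly the slack needed: for each hyperedge $e$, we can afford to have several of its vertices already restricted to a smaller palette by conditioning and still preserve a strong local lemma condition on the residual instance around any uncoloured neighbour.

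Concretely, I maintain a partial colouring $\sigma$ and process vertices in an arbitrary order. Before handling the next vertex $v$, I check an adaptive \emph{safeness} condition, roughly, that every hyperedge incident to $v$ still has enough uncoloured vertices and enough free colours that the conditioned instance satisfies the local lemma bound used in proving Theorem~\ref{thm:main} no matter which colour is later assigned to $v$. If $v$ is safe, I call the marginal oracle with a tiny error $\eps_0$, sample $v$ from the returned approximate conditional distribution, and add $v$ to the coloured set. If not, $v$ is added to the frozen set $U$ and we move on. After the pass, the induced subproblem on $U$ conditioned on $\sigma$ is sampled exactly, component by component, by brute-force enumeration.

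The main obstacle will be to show that $U$ decomposes, with high probability, into connected components (in the hyperedge dependency graph) of size $O(\log n)$. I would structure this as a percolation-style argument: first, establish that the event ``$v \in U$'' is determined by colours in a bounded neighbourhood of $v$ and occurs with probability at most some small $p(\Delta,k,q)$ under the true Gibbs measure, using the locally almost-uniform property that drives the coupling analysis behind Theorem~\ref{thm:main}; then, transfer this bound to the algorithm's output via the oracle error; and finally, expand along a ``tree of witnesses'' -- analogous to the expansion used to bound the coupling process and to prove the exponentially small bound alluded to around Lemma~\ref{lem:outside} -- so that the probability of a connected bad region of size $s$ decays like $(C\Delta^{O(1)} p)^{s}$. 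The gap between $\Sthreshold$ and $\Cthreshold$ is precisely what drives $C\Delta^{O(1)} p < 1$, giving sub-criticality; taking $s = \Theta(\log n)$ makes all larger components disappear by a union bound.

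Once this structural statement is in hand, the error analysis is routine. Each small frozen component is sampled exactly from its true conditional distribution in time $q^{O(\log n)} = \poly(n)$ by enumeration, and the output distribution is close to uniform because (i) each of the at most $n$ oracle calls contributes at most $\eps_0$ in total variation by Theorem~\ref{thm:main}, (ii) conditional on $\sigma$, every frozen component is sampled exactly, and (iii) the probability that any component of $U$ exceeds logarithmic size is $o(\eps)$. Setting $\eps_0 = \eps/(2n)$ and choosing the constant in the logarithmic bound on component size so that the failure probability is at most $\eps/2$ yields the claimed $\eps$-closeness with running time polynomial in $n$ and $1/\eps$.
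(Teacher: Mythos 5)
Your proposal is correct and takes essentially the same route as the paper: adaptively colour vertices using the marginal estimator while every incident hyperedge retains enough uncoloured vertices, freeze the rest, show via a witness-tree ($\{2,3\}$-tree) union bound driven by the larger threshold that frozen components have size $O(\log n)$ except with probability $\eps/2$, complete them by brute-force enumeration, and combine per-step oracle error $\eps/(2n)$ with the failure probability. The only cosmetic difference is that the oracle must be the marginal estimator for instances with pinnings and shrunken edges (Theorem~\ref{thm:marginal}) rather than Theorem~\ref{thm:main} itself, which is precisely what your adaptive safeness condition makes applicable.
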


The correlation decay approach of approximate counting \cite{Wei06,BG08} have been successfully
applied to graph colouring problems \cite{LY13,LYZZ17} or hypergraph problems \cite{BGGGS19}, but it
seems difficult to combine the two in our setting.  More recently, there are other progresses with
respect to approximate counting in the local lemma regime \cite{HSZ19,GJL19,GJ19}.  However, these
results do not directly apply to our situation either.  Indeed, our result can be seen as one step
further to linking the local lemma with approximate counting, as we made Moitra's approach
applicable in a more general setting, where the constraint size does not have to be directly related
to the probability of bad events or the dependency degree.  However, there still seem to be a few
difficulties, such as constraints that cannot be satisfied by partial assignments, to go further
towards the most general abstract setting of the local lemma, and this is an interesting direction
for the future.

The paper is organized as follows.  Section~\ref{sec:prelim} introduces basic notions as well as the
local lemma, and Section~\ref{sec:coupling} introduces the coupling procedure.  We give the
algorithm of estimating marginal probabilities in Section~\ref{sec:marginal}, and use this algorithm
to do counting and sampling in Sections~\ref{sec:count} and~\ref{sec:sampling}, respectively.  To
maintain flexibility, in Sections~\ref{sec:coupling}, \ref{sec:marginal}, \ref{sec:count}, and
\ref{sec:sampling}, we keep track of various parameters, and all parameters are optimized in
Section~\ref{sec:parameters}.  We conclude in Section \ref{sec:conclusion} by describing the
bottleneck of the current approach, and outlining the difficulties for further generalizations.

\section{Preliminary}\label{sec:prelim}

A hypergraph is a pair $H=(V,\+E)$ where $V$ is the collection of vertices and $\+E\subseteq 2^V$ is
the set of hyperedges. We say a hypergraph $H$ is $k$-uniform if every $e\in \+E$ satisfies
$\abs{e}=k$.  Let $q\in\mathbb{N}$ be the number of available colours.  A proper colouring of $H$ is
an assignment $\sigma\in[q]^V$ so that every hyperedge in $\+E$ is not monochromatic, namely that
$\sigma$ satisfies $\abs{\set{\sigma(v)\cmid v\in e}}>1$ for every $e\in\+E$.

Although our goal is to count colourings in $k$-uniform hypergraphs, as the algorithm progresses,
vertices will be pinned to some fixed value.  Therefore we will work with a slightly more general
problem, namely hypergraph colouring with pinnings. Formally, an instance of hypergraph colouring
\emph{with pinnings} is a pair $(H(V,\+E),\+P)$ where $\+P=\set{P_e\subseteq [q] \cmid e\in \+E}$
and $P_e$ is the set of colours that are already present (pinned) inside the edge $e$.  In the
intermediate steps of our algorithms, $\+P$ will be induced by pinning a subset of vertices, but it
is more convenient to consider this slightly more general setup.  For an instance with pinning, a
colouring $\sigma\in [q]^V$ is \emph{proper} if for every $e\in \+E$, it holds that
$\abs{\set{\sigma(v)\cmid v\in e}\cup P_e}>1$.

Denote by $\+C$ the set of all proper colourings of $(H,\+P)$.  For any $\+C'\subseteq \+C$, we use
$\mu_{\+C'}$ to denote the uniform distribution over $\+C'$.  Since there is no weight involved,
$\mu_{\+C}$ is our targeting Gibbs distribution.

Let $\mu$ be a distribution over colourings $\tp{[q]\cup\set{-}}^V$, where ``$-$'' denotes that the
vertex is not coloured (yet).  We say $\mu\tp{\cdot}$ is \emph{pre-Gibbs} with respect to
$\mu_{\+C}$ if for every $\sigma\in\+C$,
\[
  \frac{1}{\abs{\+C}}=\mu_{\+C}(\sigma)=\sum_{\substack{\sigma'\in\tp{[q]\cup\set{-}}^V\\\sigma\models\sigma'}}\mu\tp{\sigma'}\cdot\mu_{\+C}\tp{\sigma\mid
    \sigma'},
\]
where $\sigma\models\sigma'$ means that the full colouring $\sigma$ is consistent with the partial
one $\sigma'$.  In other words, if we draw a partial colouring $\sigma'$ from a pre-Gibbs
distribution $\mu$, and then complete $\sigma'$ uniformly conditioned on coloured vertices (with
respect to $\mu_{\+C}$), the resulting distribution is exactly $\mu_{\+C}$.  Note that in our
definition we do not require the support of $\mu$ to be all partial colourings.

\subsection{Lov\'asz Local Lemma}

Let $(H(V,\+E),\+P)$ be an instance of hypergraph colourings and $q\in\mathbb{N}$ be a non-negative
integer.  We use $\Delta$ to denote the maximum degree of $H$.  Although we consider $k$-uniform
hypergraphs in Theorem \ref{thm:main}, in both the sampling and the counting procedure we will pin
vertices gradually.  Those pinning operations reduce the size of edges,
but in our algorithms we make sure that the size of edges will not go down too much. 
Throughout the section, for every $e\in\+E$, we assume $k'\le \abs{e}\le k$.
Instances of this kind will emerge in Theorem \ref{thm:FPTAS} and Theorem \ref{thm:sampling}.

Let $\-{Lin}(H)$ be the line graph of $H$, that is, vertices in $\-{Lin}(H)$ are hyperedges in $H$
and two hyperedges are adjacent if they share some vertex in $H$.  The ``dependency graph'' of our
problem is simply the line graph of $H$.  For $e\in\+E$, let $\Gamma(e)$ be the neighbourhood of
$e$, namely the set $\left\{e'\mid e\cap e'\neq\emptyset\right\}$.  It is clear that the maximum
degree of $\-{Lin}(H)$ is at most $k(\Delta-1)$.  Hence $\abs{\Gamma(e)}\le k(\Delta-1)$ for any
$e\in \+E$.  With a little abuse of notation, for $v\in V$, let $\Gamma(v)$ be the set of edges in
$\+E$ incident to $v$, i.e., $\Gamma(v)\defeq\set{e\in\+E\cmid v\in e}$.  Furthermore, for any event
$B$ depending a set of vertices $\-{ver}(B)$, let $\Gamma(B)$ be the set of dependent sets of $B$,
i.e., $\Gamma(B)=\left\{e\mid e\cap \-{ver}(B)\neq\emptyset\right\}$.

The (asymmetric) Lov\'asz Local Lemma~(proved by Lov\'asz and published by Spencer \cite{Spe77})
states a sufficient condition for the existence of a proper colouring.  Note that in the following
$\Pr{\cdot}$ refers to the product distribution where every vertex is coloured uniformly and
independently.

\begin{theorem}\label{thm:LLL}
  If there exists an assignment $x:\+E\to(0,1)$ such that for every $e\in\+E$ we have
  \begin{align}\label{eqn:LLL}
    \Pr{\text{$e$ is monochromatic}} \le x(e)\prod_{e'\in\Gamma(e)}\tp{1-x(e')},
  \end{align}
  then a proper colouring exists.
\end{theorem}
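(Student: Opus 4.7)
The plan is to prove the following stronger inductive statement: for every $S\subseteq \+E$ and every $e\in \+E\setminus S$,
\[
  \Pr{A_e \mid \bigcap_{e'\in S}\overline{A_{e'}}} \le x(e),
\]
where $A_e$ denotes the bad event that hyperedge $e$ is monochromatic under the product colouring distribution. Granted this, a telescoping application of the chain rule over an arbitrary enumeration of $\+E$ yields
\[
  \Pr{\bigcap_{e\in\+E}\overline{A_e}} \ge \prod_{e\in\+E}\tp{1-x(e)} > 0,
\]
so at least one colouring in the support of the product distribution avoids every monochromatic edge, establishing the theorem.

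The induction is on $\abs{S}$. The base case $S=\emptyset$ is immediate from hypothesis~\eqref{eqn:LLL} since $x(e)\prod_{e'\in\Gamma(e)}(1-x(e'))\le x(e)$. For the inductive step, I partition $S = S_1 \cup S_2$ with $S_1 \defeq S\cap \Gamma(e)$ and $S_2 \defeq S\setminus \Gamma(e)$, and expand
\[
  \Pr{A_e \mid \bigcap_{e'\in S}\overline{A_{e'}}} = \frac{\Pr{A_e\cap \bigcap_{e'\in S_1}\overline{A_{e'}} \mid \bigcap_{e''\in S_2}\overline{A_{e''}}}}{\Pr{\bigcap_{e'\in S_1}\overline{A_{e'}} \mid \bigcap_{e''\in S_2}\overline{A_{e''}}}}.
\]
For the numerator, the key observation is that $A_e$ depends only on vertices of $e$, whereas each $A_{e''}$ with $e''\in S_2$ depends on vertices of $e''$, and by definition of $\Gamma(e)$ edges in $S_2$ share no vertex with $e$. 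Hence $A_e$ is independent of $\bigcap_{e''\in S_2}\overline{A_{e''}}$ under the product distribution, so the numerator is at most $\Pr{A_e}\le x(e)\prod_{e'\in \Gamma(e)}\tp{1-x(e')}$ by~\eqref{eqn:LLL}.

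For the denominator, I enumerate $S_1=\set{e_1,\ldots,e_r}$, apply the chain rule to rewrite it as $\prod_{j=1}^r \Pr{\overline{A_{e_j}}\mid\bigcap_{i<j}\overline{A_{e_i}}\cap\bigcap_{e''\in S_2}\overline{A_{e''}}}$, and bound each factor below by $1-x(e_j)$ via the inductive hypothesis applied to a conditioning set of cardinality strictly less than $\abs{S}$. This yields a lower bound of $\prod_{e'\in S_1}\tp{1-x(e')}\ge \prod_{e'\in\Gamma(e)}\tp{1-x(e')}$, and the ratio collapses to $x(e)$, closing the induction. The only delicate point is arranging the split: the numerator needs $A_e$ to be independent of the surviving conditioned events (forcing $S_2\cap\Gamma(e)=\emptyset$), while the denominator must be amenable to induction on a strictly smaller index set; the partition $S=S_1\cup S_2$ is exactly the minimal choice satisfying both constraints. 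There is no genuine obstacle here — this is the classical Spencer argument specialized to the monochromatic-hyperedge bad events considered in the paper.
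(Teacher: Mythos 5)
Your proposal is correct: it is the classical inductive argument (bounding $\Pr{A_e \mid \bigcap_{e'\in S}\overline{A_{e'}}}\le x(e)$ by induction on $\abs{S}$ and then telescoping), which is precisely the proof behind the citation to Spencer that the paper relies on, as the paper states Theorem~\ref{thm:LLL} without proof. The only cosmetic point worth noting is that in this paper $\Gamma(e)$ contains $e$ itself, which merely strengthens hypothesis~\eqref{eqn:LLL} and leaves your argument untouched.
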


When the condition of Theorem~\ref{thm:LLL} is met, we actually have good control over any event in
the uniform distribution $\mu_{\+C}$ due to the next theorem, shown in \cite{HSS11}.

\begin{theorem}\label{thm:LLL-prob}
  If \eqref{eqn:LLL} holds for every $e\in\+E$, then for any event $B$, it holds that
  \[
    \mu_{\+C}(B)\le \Pr{B}\prod_{e\in\Gamma(B)}\tp{1-x(e)}^{-1}.
  \]
\end{theorem}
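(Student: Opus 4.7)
The plan is to interpret $\mu_{\+C}$ as a conditioning of the uniform product distribution $\Pr{\cdot}$ and then invoke the standard conditional-probability form of the Lov\'asz Local Lemma. Concretely, for each $e\in\+E$ let $A_e$ denote the (bad) event that $e$ is monochromatic in the colouring drawn from $\Pr{\cdot}$. Since a sample from $\Pr{\cdot}$ is proper exactly when $\bigcap_{e\in\+E}\overline{A_e}$ holds, and since $\mu_{\+C}$ is uniform on $\+C$,
\[
  \mu_{\+C}(B)\;=\;\Pr{B \mid \bigcap_{e\in\+E}\overline{A_e}}\;=\;\frac{\Pr{B\cap\bigcap_{e\in\+E}\overline{A_e}}}{\Pr{\bigcap_{e\in\+E}\overline{A_e}}}.
\]

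Next I would split the events into those depending on vertices in $\-{ver}(B)$ and those that do not. For $e\notin\Gamma(B)$ the event $A_e$ depends only on vertices outside $\-{ver}(B)$, so under the product distribution $B$ is independent of $\bigcap_{e\notin\Gamma(B)}\overline{A_e}$. Using this together with the trivial upper bound obtained by dropping the conjuncts over $e\in\Gamma(B)$ in the numerator,
\[
  \Pr{B\cap\bigcap_{e\in\+E}\overline{A_e}}\;\le\;\Pr{B\cap\bigcap_{e\notin\Gamma(B)}\overline{A_e}}\;=\;\Pr{B}\cdot\Pr{\bigcap_{e\notin\Gamma(B)}\overline{A_e}}.
\]
Substituting back yields
\[
  \mu_{\+C}(B)\;\le\;\Pr{B}\cdot\frac{1}{\Pr{\bigcap_{e\in\Gamma(B)}\overline{A_e}\mid\bigcap_{e\notin\Gamma(B)}\overline{A_e}}}.
\]

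It then remains to lower-bound that conditional probability by $\prod_{e\in\Gamma(B)}(1-x(e))$. Here I would enumerate $\Gamma(B)=\{e_1,\ldots,e_m\}$ in any fixed order, apply the chain rule
\[
  \Pr{\bigcap_{i=1}^{m}\overline{A_{e_i}}\mid\bigcap_{e\notin\Gamma(B)}\overline{A_e}}\;=\;\prod_{i=1}^{m}\Pr{\overline{A_{e_i}}\mid\bigcap_{j<i}\overline{A_{e_j}}\cap\bigcap_{e\notin\Gamma(B)}\overline{A_e}},
\]
and then invoke the classical conditional LLL bound (proved by induction on the size of the conditioning set from the hypothesis \eqref{eqn:LLL}), which guarantees $\Pr{A_e\mid\bigcap_{f\in S}\overline{A_f}}\le x(e)$ for every $e$ and every $S\subseteq\+E\setminus\{e\}$, hence $\Pr{\overline{A_e}\mid\cdot}\ge 1-x(e)$. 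Multiplying these factors gives the claimed bound.

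The main obstacle is really only the classical conditional LLL bound itself, but that is a standard induction from Theorem~\ref{thm:LLL}'s hypothesis and can be cited. Once that tool is in place, everything above is a straightforward manipulation of conditional probabilities together with the independence of $B$ from events outside $\Gamma(B)$; no combinatorial features of hypergraph colourings beyond the disjointness of vertex supports are used, which is why the bound applies uniformly to every event $B$.
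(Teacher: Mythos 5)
Your argument is correct: the reduction of $\mu_{\+C}(B)$ to a conditional probability under the product measure, the independence of $B$ from the events outside $\Gamma(B)$, and the chain-rule application of the standard inductive LLL claim $\Pr{A_e\mid\bigcap_{f\in S}\overline{A_f}}\le x(e)$ together give exactly the stated bound. The paper itself does not prove Theorem~\ref{thm:LLL-prob} but cites \cite{HSS11}, and your proof is essentially the same standard argument used there, so there is nothing to add.
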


Theorem~\ref{thm:LLL-prob} also allows us to have some quantitative control over the marginal
probabilities.

\begin{lemma} \label{lem:upbnd} If
  $k'\le \abs{e}\le k$ for any $e\in\+E$, $t\ge k$ and $q\ge \tp{et\Delta}^{\frac{1}{k'-1}}$, then
  for any $v\in V$ and any colour $c\in[q]$,
  \[
    \mPr[\sigma\sim\mu_{\+C}]{\sigma(v)=c} \le \frac{1}{q}\tp{1+\frac{4}{t}}.
  \]
\end{lemma}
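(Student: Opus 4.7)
The plan is to apply Theorem~\ref{thm:LLL-prob} directly to the event $B = \set{\sigma(v) = c}$. Under the uniform product distribution on $[q]^V$ we have $\Pr{B} = 1/q$, and since $B$ depends only on the single vertex $v$, its dependency set $\Gamma(B)$ is exactly the set of hyperedges containing $v$, so $\abs{\Gamma(B)} \le \Delta$. What remains is to exhibit weights $x(e)$ satisfying the LLL condition~\eqref{eqn:LLL} for which the correction factor $\prod_{e \in \Gamma(B)}(1 - x(e))^{-1}$ is at most $1 + 4/t$.

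I would take the constant choice $x(e) = x \defeq 1/(t\Delta)$ for every $e \in \+E$. To verify~\eqref{eqn:LLL}, observe that $\Pr{e \text{ is monochromatic}} \le q^{1 - \abs{e}} \le q^{1 - k'}$, and that $\abs{\Gamma(e)} \le 1 + k(\Delta - 1)$, since each of the at most $k$ vertices of $e$ lies in at most $\Delta - 1$ other hyperedges (plus $e$ itself). A routine application of $\ln(1 - y) \ge -y/(1 - y)$ gives $(1 - 1/(t\Delta))^{-(1 + k(\Delta - 1))} \le e^{(k\Delta - k + 1)/(t\Delta - 1)}$, whose exponent is at most $1$ whenever $t \ge k$ and $k \ge 2$. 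Hence~\eqref{eqn:LLL} reduces to the requirement $q^{k' - 1} \ge et\Delta$, which is precisely the hypothesis.

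Granted~\eqref{eqn:LLL}, Theorem~\ref{thm:LLL-prob} gives
\[
  \mu_{\+C}(\sigma(v) = c) \le \frac{1}{q}(1 - x)^{-\abs{\Gamma(B)}} \le \frac{1}{q}\tp{1 - \frac{1}{t\Delta}}^{-\Delta} \le \frac{1}{q}\tp{1 + \frac{2}{t\Delta}}^{\Delta} \le \frac{1}{q} e^{2/t} \le \frac{1}{q}\tp{1 + \frac{4}{t}},
\]
where the successive steps use $(1 - y)^{-1} \le 1 + 2y$ for $y \le 1/2$, $(1 + a)^\Delta \le e^{a\Delta}$, and $e^{2/t} \le 1 + 4/t$ for $t \ge 2$ (which in turn follows from $e^x \le 1 + x + x^2$ on $[0, 1]$).

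The argument is largely mechanical; the main point to be careful about is that the constants line up tightly. Both sides of the hypothesis $q \ge (et\Delta)^{1/(k' - 1)}$ have to absorb different pieces of the estimate (the LLL verification on one side and the correction factor on the other), so one must use the refined bound $\abs{\Gamma(e)} \le 1 + k(\Delta - 1)$ rather than the cruder $k\Delta$, and the scale of $x$ must be chosen so that the $\Delta$-th power of $(1 - x)^{-1}$ stays within the slack $4/t$.
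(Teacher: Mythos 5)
Your proposal is correct and follows essentially the same route as the paper: choose the constant weights $x(e)=\frac{1}{t\Delta}$, verify condition~\eqref{eqn:LLL} using $q^{1-k'}\le \frac{1}{et\Delta}$ and the bound on $\abs{\Gamma(e)}$, and then apply Theorem~\ref{thm:LLL-prob} to the single-vertex event to get $\frac{1}{q}\tp{1-\frac{1}{t\Delta}}^{-\Delta}\le \frac{1}{q}e^{2/t}\le \frac{1}{q}\tp{1+\frac{4}{t}}$. The only (harmless) difference is that you count $\Gamma(e)$ including $e$ itself, giving $1+k(\Delta-1)$ rather than the paper's $k(\Delta-1)$, which your slightly more careful exponent estimate absorbs.
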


\begin{proof}
  Let $x(e)=\frac{1}{t\Delta}$ for every $e\in\+E$.  We first verify that \eqref{eqn:LLL} holds.
  Since $\abs{\Gamma(e)}\le k(\Delta-1)$ and $t\ge k$,
  \begin{align*}
    x(e)\prod_{e'\in\Gamma(e)}\tp{1-x(e')} & \ge \frac{1}{t\Delta} \left( 1- \frac{1}{t\Delta}\right)^{k(\Delta-1)}
                                           & \ge \frac{1}{e t\Delta} \ge q^{1-k'} \ge \Pr{\text{$e$ is monochromatic}}.
  \end{align*}
  Hence, Theorem~\ref{thm:LLL-prob} applies.
  Then,
  \begin{align*}
    \mPr[\sigma\sim\mu_{\+C}]{\sigma(v)=c}
    &\le \frac{1}{q}\tp{1-\frac{1}{t\Delta}}^{-\Delta} \le \frac{1}{q}\exp{\tuple{\frac{2}{t}}} \le \frac{1}{q}\tp{1+\frac{4}{t}}.\qedhere
  \end{align*}
\end{proof}

Unfortunately, Theorem~\ref{thm:LLL-prob} does not give lower bounds directly.  We will instead
bound the probability of blocking $v$ to have colour $c$.

\begin{lemma} \label{lem:lwbnd} If
  $k'\le \abs{e}\le k$ for any $e\in\+E$, $t\ge k$, and $q\ge \tuple{et\Delta}^{\frac{1}{k'-1}}$,
  then for any $v\in V$ and any colour $c\in[q]$,
  \[
    \mPr[\sigma\sim\mu_{\+C}]{\sigma(v)=c} \ge \frac{1}{q}\tuple{1-\frac{1}{t}}.
  \]
\end{lemma}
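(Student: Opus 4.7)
The plan is to exploit the conditional uniformity of $\mu_{\+C}$: fixing $\sigma|_{V\setminus\{v\}}$, the colour $\sigma(v)$ is distributed uniformly on the set $A_v\subseteq[q]$ of colours that, when assigned to $v$, leave every edge non-monochromatic. Since $\abs{A_v}\le q$, we immediately get
\[
  \mu_{\+C}\tp{\sigma(v)=c\mid\sigma|_{V\setminus\{v\}}}=\frac{\mathbf{1}[c\in A_v]}{\abs{A_v}}\ge\frac{\mathbf{1}[c\in A_v]}{q},
\]
so averaging over $\sigma$ yields $\mPr[\sigma\sim\mu_{\+C}]{\sigma(v)=c}\ge\frac{1}{q}\tp{1-\Pr{c\notin A_v}}$. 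It therefore suffices to show $\Pr{c\notin A_v}\le 1/t$.

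The event $c\notin A_v$ requires some edge $e\ni v$ with $P_e\subseteq\{c\}$ and $\sigma(u)=c$ for every $u\in e\setminus\{v\}$. Dropping the pinning condition and union bounding over the at most $\Delta$ edges through $v$, it suffices to prove, for each such $e$,
\[
  \mu_{\+C}\tp{\sigma(u)=c\text{ for all }u\in e\setminus\{v\}}\le\frac{1}{t\Delta}.
\]

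For this I would reuse the local-lemma setup from the proof of Lemma~\ref{lem:upbnd}: the assignment $x(e')=\frac{1}{t\Delta}$ already verifies~\eqref{eqn:LLL}, so Theorem~\ref{thm:LLL-prob} applies to the event $B$ displayed above. Its product-distribution probability is $\Pr{B}=q^{-(\abs{e}-1)}\le q^{-(k'-1)}\le\frac{1}{et\Delta}$, while its dependency set $\Gamma(B)=\set{e'\in\+E\cmid e'\cap(e\setminus\{v\})\neq\emptyset}$ contains at most $(k-1)\Delta$ edges. Using $-\ln(1-x)\le x/(1-x)$ together with $t\ge k$,
\[
  \prod_{e'\in\Gamma(B)}\tp{1-x(e')}^{-1}\le\tp{1-\frac{1}{t\Delta}}^{-(k-1)\Delta}\le\exp\tp{\frac{k-1}{t-1/\Delta}}\le e,
\]
and Theorem~\ref{thm:LLL-prob} then gives $\mu_{\+C}(B)\le\frac{1}{et\Delta}\cdot e=\frac{1}{t\Delta}$, as needed.

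The delicate point is the tightness of the constants: the factor $e$ built into the threshold $q\ge(et\Delta)^{1/(k'-1)}$ must \emph{exactly} absorb the LLL blow-up $(1-1/(t\Delta))^{-(k-1)\Delta}$, and this in turn just fits into the regime $t\ge k$ because $(k-1)/(t-1/\Delta)\le 1$. A looser bound on the blow-up would only give $\frac{1}{q}(1-C/t)$ for some $C>1$, which could be vacuous at the boundary $t=k$.
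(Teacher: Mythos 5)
Your proof is correct and follows essentially the same route as the paper: both reduce the bound to the blocking events (all of $e\setminus\{v\}$ coloured $c$), apply Theorem~\ref{thm:LLL-prob} with $x(e)=\frac{1}{t\Delta}$ to bound each such event by $\frac{1}{t\Delta}$, and union bound over the at most $\Delta$ edges through $v$. The only cosmetic differences are your phrasing of the first step via conditional uniformity over the allowed colour set $A_v$ and your dependency count $(k-1)\Delta$ in place of the paper's $k(\Delta-1)+1$, both of which are fine.
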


\begin{proof}
  Fix $v$ and $c$.  For every $e\in\Gamma(v)$, let $\-{Block}_e$ be the event that vertices in $e$
  other than $v$ all have the colour $c$.  Clearly, conditioned on none of $\-{Block}_e$ occurring,
  the probability of $v$ coloured $c$ is larger than $1/q$.  Hence we have that
  \begin{align}\label{eqn:noblock}
    \mPr[\sigma\sim\mu_{\+C}]{\sigma(v)=c} \ge \frac{1}{q}\left( 1 - \sum_{e\in\Gamma(v)} \mu_{\+C}(\-{Block}_e) \right).
  \end{align}  
  
  Clearly $\Pr{\-{Block}_e} = q^{1-\abs{e}} \le q^{1-k'}$.  Again let $x(e) = \frac{1}{t\Delta}$ for
  every $e\in\+E$ and \eqref{eqn:LLL} holds.  Since $\abs{\Gamma(\-{Block}_e)}\le k(\Delta-1)+1$ and
  $t\ge k$, by Theorem \ref{thm:LLL-prob},
  \begin{align}
    \mu_{\+C}(\-{Block}_e) & \le q^{1-k'} \left( 1-\frac{1}{t\Delta} \right)^{-k(\Delta-1)-1} \le \frac{1}{t\Delta}. \label{eqn:LLL-block}
  \end{align}

  Plugging \eqref{eqn:LLL-block} into \eqref{eqn:noblock} yields
  \begin{align*}
    \mPr[\sigma\sim\mu_{\+C}]{\sigma(v)=c} & \ge \frac{1}{q}\left( 1-\frac{1}{t} \right).\qedhere
  \end{align*}
\end{proof}

Combining Lemma~\ref{lem:upbnd} and Lemma~\ref{lem:lwbnd}, we obtain the following result.

\begin{lemma}\label{lem:LLL-prob}
  If
  $k'\le \abs{e}\le k$ for any $e\in\+E$, $t \ge k$ and $q\ge \tp{et\Delta}^{\frac{1}{k'-1}}$, then
  for any $v\in V$ and any colour $c\in[q]$,
  \[
    \frac{1}{q}\tp{1-\frac{1}{t}} \le \Pr[\sigma\sim \mu_{\+C}]{\sigma(v)=c} \le
    \frac{1}{q}\tp{1+\frac{4}{t}}.
  \]
\end{lemma}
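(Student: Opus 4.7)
The plan is essentially to observe that this statement is nothing more than the conjunction of the two immediately preceding lemmas. The hypotheses here — that $k'\le\abs{e}\le k$ for every $e\in\+E$, that $t\ge k$, and that $q\ge\tp{et\Delta}^{1/(k'-1)}$ — are \emph{identical} to the hypotheses of both Lemma~\ref{lem:upbnd} and Lemma~\ref{lem:lwbnd}. Consequently, no additional hypothesis verification or choice of local lemma weights is needed: I would simply invoke Lemma~\ref{lem:upbnd} to obtain the upper bound $\Pr[\sigma\sim\mu_{\+C}]{\sigma(v)=c}\le\frac{1}{q}\tp{1+\frac{4}{t}}$, and then invoke Lemma~\ref{lem:lwbnd} to obtain the matching lower bound $\frac{1}{q}\tp{1-\frac{1}{t}}\le\Pr[\sigma\sim\mu_{\+C}]{\sigma(v)=c}$, and chain them into a single two-sided inequality.

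Since the result is just a packaging step, there is no real obstacle. The technical substance has already been paid for inside the two prior lemmas: choosing the uniform weight $x(e)=\frac{1}{t\Delta}$ for every $e\in\+E$, verifying the asymmetric LLL condition \eqref{eqn:LLL} using $\abs{\Gamma(e)}\le k(\Delta-1)$ and $t\ge k$, and then applying Theorem~\ref{thm:LLL-prob} either to the event $\set{\sigma(v)=c}$ directly (yielding the upper bound after the estimate $(1-1/(t\Delta))^{-\Delta}\le e^{2/t}\le 1+4/t$) or to each blocking event $\-{Block}_e$ and taking a union bound over $e\in\Gamma(v)$ with $\abs{\Gamma(v)}\le\Delta$ (yielding the lower bound). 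If I wanted to be slightly more self-contained, I would note that both constants $1/t$ and $4/t$ arise from the same Taylor-type estimate around $1/(t\Delta)$, but since both inequalities are already established verbatim upstream, the entire proof collapses to a one-line appeal.
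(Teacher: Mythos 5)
Your proposal is correct and matches the paper exactly: the paper's proof of this lemma is literally the one-line combination of Lemma~\ref{lem:upbnd} and Lemma~\ref{lem:lwbnd}, whose hypotheses coincide with those stated here. Your recap of where the technical work lives (the choice $x(e)=\frac{1}{t\Delta}$, verification of \eqref{eqn:LLL}, and Theorem~\ref{thm:LLL-prob} applied to the colour event and to the blocking events) is also faithful to the upstream proofs.
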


\section{The coupling}\label{sec:coupling}

Recall that a partial colouring is an assignment $\sigma\in\tp{[q]\cup\set{-}}^V$ where ``$-$''
denotes an unassigned colour. Fix a vertex $v\in V$ and two distinct colours $c_1,c_2\in[q]$, we
define two initial partial colourings $X_0$ and $Y_0$ that assign $v$ with colours $c_1$ and $c_2$
respectively and let all other vertices be unassigned. We use $\+C_1$ and $\+C_2$ to denote the set
of proper colourings with $v$ fixed to be $c_1$ and $c_2$ respectively. For a partial colouring $X$,
we use $\+C_X$ to denote the set of proper colourings consistent with $X$.

Moitra~\cite{Moi19} introduced the following intriguing idea (in the setting of CNF) to compute the
ratio of marginal probabilities on $v$.  Couple $\mu_{\+C_1}$ and $\mu_{\+C_2}$ in a sequential way.
Start from $v$, where the colours differ, and proceed in a breadth-first search manner, vertex by
vertex.  At each vertex we draw a colour from $\mu_{\+C_1}$ and $\mu_{\+C_2}$, respectively,
conditioned on all the existing colours, and couple them maximally.  The process ends when the set
of vertices coupled successfully form a cut separating $v$ from uncoloured vertices.  If every
vertex we encounter has its marginal distribution close enough to the uniform distribution, then
this coupling process terminates quickly with high probability.  These local almost-uniform
properties are guaranteed by Lemma \ref{lem:LLL-prob}.  Then Moitra sets up a clever linear program
(LP), where the variables mimic transition probabilities during the coupling (but in some
conditional way), and shows that the LP is sufficient to recover the marginal distribution at $v$ by
a binary search.

We apply the same idea here for hypergraph colourings.  However, one needs to carefully implement
the coupling to guarantee that all marginal distributions encountered are close enough to uniform.
Formally, we describe our coupling process in Algorithm~\ref{alg:coupling}.  The coupling process
applies to hypergraphs with edge size between $k_1$ and $k$ for some parameter $0<k_1\le k$. There
is another parameter $0<k_2<k_1$ and all these parameters will be set in
Section~\ref{sec:parameters}.  The output is a pair of partial colourings $(X,Y)$ extending $X_0$
and $Y_0$ respectively.  Notice that in order to implement the coupling process, we fix an arbitrary
ordering of edges and vertices in advance.

\begin{algorithm}[htbp]
  \caption{The coupling process}
  \label{alg:coupling}
  \begin{algorithmic}[1]
    \State \textbf{Input:} \emph{A hypergraph $H(V,\+E)$ with pinnings $\+P$ and
      $k_1\le\abs{e}\le k$ for every $e\in\+E$, two partial colourings $X_0$ and $Y_0$.}%
    \State \textbf{Output:} \emph{$V_{\mathrm{col}}\subseteq V$, a partition $V_1\sqcup V_2=V$, and
      two partial colourings $X, Y$ defined on $V_{\mathrm{col}}$.}%
    \State $V_1\gets\set{v}$, $V_2\gets V\setminus V_1$, $V_{\mathrm{col}}\gets \set{v}$;%
    \State $X\gets X_0$, $Y\gets Y_0$;%
    \While {$\exists e\in\+E$ s.t. $e\cap V_1\ne\varnothing$ and $e\cap V_2\ne\varnothing$}%
    \State Let $e$ be the first such hyperedge;%
    \State Let $u$ be the first vertex in $e\cap V_2$;%
    \State Sample a pair of colours $(c_x,c_y)$ according to the maximal coupling of the marginal
    distribution at $u$ conditioned on $X$ and $Y$ respectively;%
    \State Extend $X$ and $Y$ by colouring $u$ with $c_x$ and $c_y$, respectively;%
    \State $V_{\mathrm{col}}\gets V_{\mathrm{col}}\cup\set{u}$;%
    \If {$c_x\ne c_y$} \State $V_1\gets V_1\cup\set{u}$, $V_2\gets V_2\setminus\set{u}$;%
    \EndIf \For {$e\in\Gamma(u)\cap\+E$ s.t.~$e$ is satisfied by both $X$ and $Y$} \State
    $\+E\gets \+E\setminus \set{e}$;%
    \EndFor \For {$e\in\Gamma(u)\cap\+E$ s.t.~$e\cap V_1\ne\varnothing$, $e\cap V_2\ne\varnothing$,
      and $\abs{e\cap V_{\mathrm{col}}}= k_2$} \State
    $V_1\gets V_1\cup (e \setminus V_{\mathrm{col}})$, $V_2\gets V\setminus V_1$;%
    \State $\+E\gets \+E\setminus \set{e}$; \EndFor \EndWhile
  \end{algorithmic}
\end{algorithm}

The set $V_{\-{col}}$ consists of all coloured vertices.  Intuitively, the set $V_1$ contains
vertices that have failed the coupling and $V_2$ is its complement.  Once a hyperedge is satisfied
by both partial colourings $X$ and $Y$, it has no effect any more and is thus removed.

The main difference from Moitra's coupling \cite{Moi19} is that we cannot choose what vertices to
couple in advance (``marking'').  Instead, we take an adaptive approach to ensure that no hyperedge
becomes too small.  Once $k_2$ vertices of a hyperedge are
coloured, 
all the rest vertices are considered ``failed'' in the coupling (namely they are added to $V_1$).
However these failed vertices are left uncoloured.

Algorithm \ref{alg:coupling} outputs a pair of partial colourings $X,Y$ defined on $V_{\-{col}}$ and
a partition of vertices $V=V_1\sqcup V_2$.  For any edge $e$ in the original $\+E$ such that
$e\cap V_1\neq\varnothing$ and $e\cap V_2\neq\varnothing$, it is removed because either it is
satisfied by both $X$ and $Y$, or $k_2$ vertices in $e$ have been coloured.  In the latter case, all
vertices in $e$ are either coloured or in $V_1$, namely $e\subset V_1\cup V_{\-{col}}$.  Hence all
edges intersecting $V_1$ and $V_2\setminus V_{\-{col}}$ are satisfied by both $X$ and $Y$.  This
fact will be useful later.

For $u\in V$, let $\Gamma_{\-{ver}}(u)$ denote the neighbouring vertices of $u$ (including $u$),
namely $\Gamma_{\-{ver}}(u) = \left\{w\mid \exists e\in\+E, \{u,w\}\subseteq e\right\}$, and let
$\Gamma_{\-{ver}}(U) = \bigcup_{u\in U}\Gamma_{\-{ver}}(u)$ for a subset $U\subseteq V$.  The
following lemma summarizes some properties of this random process.

\begin{lemma}\label{lem:coupling}
  The following properties of Algorithm \ref{alg:coupling} hold:
  \begin{enumerate}
  \item All coloured vertices are either in $V_1$ or incident to $V_1$, namely
    $V_{\-{col}}\subseteq \Gamma_{\-{ver}}(V_1)$;
  \item The distributions of $X$ and $Y$
    are pre-Gibbs with respect to $\mu_{\+C_1}$ and $\mu_{\+C_2}$ respectively.
  \end{enumerate}
\end{lemma}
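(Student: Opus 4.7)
The plan is to prove both claims by structural induction on the iterations of the while loop of Algorithm~\ref{alg:coupling}. For (1), I would maintain the invariant $V_{\-{col}} \subseteq \Gamma_{\-{ver}}(V_1)$ throughout the execution. The base case is immediate since $V_{\-{col}} = V_1 = \set{v}$ and $v \in \Gamma_{\-{ver}}(v)$. For the inductive step, whenever a new vertex $u$ is added to $V_{\-{col}}$, it is drawn from $e \cap V_2$ for some edge $e$ with $e \cap V_1 \neq \varnothing$; any witness $w \in e \cap V_1$ then shows $u \in \Gamma_{\-{ver}}(w) \subseteq \Gamma_{\-{ver}}(V_1)$. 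The remaining updates in the loop body only grow $V_1$, so $\Gamma_{\-{ver}}(V_1)$ is non-decreasing and the invariant is preserved. This part is essentially bookkeeping and presents no real obstacle.

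For (2), by symmetry it suffices to prove that the distribution of $X$ is pre-Gibbs with respect to $\mu_{\+C_1}$. The guiding observation is that at every step the colour $c_x$ assigned to the current vertex $u$ is drawn from the marginal of $\mu_{\+C_1}$ conditioned on the current partial colouring $X$, because a maximal coupling preserves the marginals on each side. I would prove by induction on the iteration count $t$ that for every $\sigma \in \+C_1$,
\[
    \mu_{\+C_1}(\sigma) = \sum_{x} \Pr{X_t = x} \cdot \mu_{\+C_1}\tp{\sigma \mid X_t = x},
\]
and then take $t$ to be the terminating iteration to conclude. The inductive step is a one-line application of Bayes' rule: when $\sigma(u) = c$, the product $\Pr{X_{t+1} = x \cup \set{u \mapsto c}} \cdot \mu_{\+C_1}\tp{\sigma \mid X_{t+1}}$ collapses to $\Pr{X_t = x} \cdot \mu_{\+C_1}\tp{\sigma \mid X_t}$, precisely because $c_x$ is sampled from the correct conditional marginal. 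I would emphasise that the edge-removal steps (lines 14--15 and 17--19) only affect which vertex the algorithm chooses next; they do not modify the fixed reference distribution $\mu_{\+C_1}$, so the conditional marginal being sampled remains the one determined by the original instance.

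The one delicate point I foresee is ensuring that $\mu_{\+C_1}\tp{\cdot \mid X_t}$ is well-defined throughout, i.e.\ that $X_t$ always extends to at least one element of $\+C_1$. This follows by a parallel induction: $X_0 = \set{v \mapsto c_1}$ trivially extends to every member of $\+C_1$, and any colour with positive conditional marginal mass yields an $X_{t+1}$ that still extends. Once this consistency is in hand, the tower property closes the induction at the terminating iteration and yields the pre-Gibbs identity.
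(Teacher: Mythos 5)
Part (1) of your proposal is the same simple induction as the paper's. For part (2), your route — a forward induction showing that the law of $X_t$ is pre-Gibbs at every iteration, then passing to the (frozen) final time — is a valid alternative to the paper's argument, which instead inducts backwards over intermediate states $\+S=(X,Y,V_{\-{col}},V_1)$ and shows that, conditioned on any such state, the \emph{future} output on the $X$ side is pre-Gibbs with respect to $\mu_{\+C_X}$. Both hinge on the same key fact you identify: the maximal coupling preserves the $X$-side conditional marginal, and the $Y$-dependent control flow only decides \emph{which} vertex is coloured next, not \emph{how} it is coloured. The one step you state too quickly is the ``one-line Bayes collapse'': as written it treats the next vertex $u$ as a function of $X_t=x$, but $u$ (and whether the process halts) depends on $Y$ and the rest of the state, so the identity $\Pr{X_{t+1}=x^{u\gets c}}=\Pr{X_t=x}\cdot\mu_{\+C_1}\tp{\sigma(u)=c\mid x}$ does not hold term by term over partial colourings alone. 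The repair is exactly the paper's equation \eqref{eqn:independence-of-u}: condition on the full state (equivalently on the pair $(X_t,U)$), observe that the colour law at $u$ depends on the state only through $(x,u)$, and then sum the transition probabilities over all states consistent with $X^{u\gets c}$; after that, your chain-rule cancellation $\mu_{\+C_1}\tp{\sigma(u)=c\mid x}\cdot\mu_{\+C_1}\tp{\sigma\mid x^{u\gets c}}=\mu_{\+C_1}\tp{\sigma\mid x}$ goes through. Likewise, ``take $t$ to be the terminating iteration'' should be phrased as freezing $X$ after termination and taking $t=\abs{V}$, since the stopping time is random and $Y$-dependent. Your remarks that the edge removals do not alter the reference distribution $\mu_{\+C_1}$, and that $X_t$ always extends to some colouring in $\+C_1$, are correct and consistent with the paper's treatment.
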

\begin{proof}
  For (1), notice that whenever we add a vertex $u$ into $V_{\-{col}}$, it must hold that $u\in e$
  for some $e \cap V_1\ne \varnothing$ at the time.  The claim follows from a simple induction.

  For (2), we only prove the lemma for $X$.  The proof for $Y$ is similar.  The partial colouring
  $X$ is generated in the following way: at each step either the process ends, or the next
  uncoloured vertex $u$ is chosen and extend $X$ to $u$ with the correct (conditional) marginal
  probability and repeat.  Our decisions (whether or not to halt, and what is the next $u$) depend
  on $Y$ in addition to the partial colouring $X$ so far.

  An intermediate state $\+S$ of Algorithm \ref{alg:coupling} consists of partial colourings $X$,
  $Y$, $V_{\-{col}}$, and $V_1$.\footnote{We note that actually $V_{\-{col}}$ and $V_1$ are
    completely determined by $X$ and $Y$, but we do not need this fact here. The reason for
    $V_{\-{col}}$ is obvious, and $V_1$ can be deduced from $X,Y$ by simulating the whole process
    from start.}  Our claim is that, conditioned on any valid $\+S$, the distribution of the final
  output (on the $X$ side) of Algorithm \ref{alg:coupling} is pre-Gibbs with respect to
  $\mu_{\+C_{X}}$.  The lemma clearly follows from the claim by setting $\+S$ to the initial state
  of Algorithm \ref{alg:coupling}.

  We induct on the maximum possible future steps of $\+S$.  The base case is that $\+S$ will halt
  immediately.  Thus the output is simply $X$ and completing it yields the uniform distribution on
  $\+C_{X}$.  That is, the output is pre-Gibbs.

  For the induction step, $\+S$ will not halt but rather, extend the colourings to some vertex $u$
  which is deterministically selected by our algorithm. Let $\tau_{\+S}(\cdot)$ denote the
  measure on colourings obtained by completing the output of Algorithm \ref{alg:coupling}
  conditioned on $\+S$.  Let $X^{u\gets c}$ be a partial colouring defined on $V_{\-{col}}\cup\{u\}$
  by extending $X$ to $u$ with colour $c$, and $\+S'$ be an internal state consistent with
  $X^{u\gets c}$, denoted by $\+S'\models X^{u\gets c}$ .  Moreover, let $q(\+S')$ be the
  probability of transiting from $\+S$ to $\+S'$.  Since the marginal probability at $u$ only
  depends on the previous partial colourings $X'$, we have that
  \begin{align}\label{eqn:independence-of-u}
    \sum_{\+S'\models X^{u\gets c}}q(\+S') = \mu_{C_{X}}(X^{u\gets c}),
  \end{align}
  where $\mu_{C_{X}}(X^{u\gets c})$ is in fact the marginal probability of the colour $c$ at $u$
  conditioned on $X$.  By our induction hypothesis, conditioned on $\+S'$, the final output is
  pre-Gibbs with respect to $\+C_{X^{u\gets c}}$.  That is,
  \begin{align}\label{eqn:induction-hypothesis}
    \tau_{\+S'}(\cdot) = \mu_{\+C_{X^{u\gets c}}}(\cdot).
  \end{align}
  For $\sigma\in\+C_{X}$, suppose $X^{u\gets c}$ is the partial colouring of $\sigma$ restricted to
  $V_{\-{col}}\cup\{u\}$.  Then we have that
  \begin{align*}
    \tau_{\+S}(\sigma) & = \sum_{\+S'\models X^{u\gets c}}q(\+S')\tau_{\+S'}(\sigma)\\
                       & = \sum_{\+S'\models X^{u\gets c}}q(\+S') \mu_{\+C_{X^{u\gets c}}}(\sigma) \\
                       & = \mu_{\+C_{X^{u\gets c}}}(\sigma) \sum_{\+S'\models X^{u\gets c}}q(\+S') \\
                       & = \mu_{\+C_{X^{u\gets c}}}(\sigma) \mu_{C_{X}}(X^{u\gets c}) \\
                       & = \mu_{\+C_{X}}(\sigma),
  \end{align*}
  where in the second line we use \eqref{eqn:induction-hypothesis}, and in the fourth line we use
  \eqref{eqn:independence-of-u}.  The claim follows.
\end{proof}

Therefore, the output of Algorithm~\ref{alg:coupling} is a coupling of two pre-Gibbs measures such
that they are defined on the same set of vertices $V_{\-{col}}$. We use $\mu_{\-{cp}}(\cdot,\cdot)$
to denote this joint distribution.

It is possible to show that the final size of $\abs{V_1}$ is $O(\log \abs{V})$ with high
probability.  This fact will not be directly used, and is indeed not strong enough for the algorithm
and its analysis in the next section.  We will omit its proof.  What we will show eventually is
that, conditioned on a randomly chosen colouring from $\+C_1$ or $\+C_2$, the probability that the
coupling process terminates decays exponentially with the depth.  There are two levels of randomness
here, and they will be separated, since the linear program later will only be able to certify the
second kind randomness.

Later, in Section \ref{sec:sampling}, when we do sampling, we will consider a similar procedure,
Algorithm~\ref{algo:sampling}, and we will show that the connected components produced by
Algorithm~\ref{algo:sampling} are $O(\log \abs{V})$ with high probability (Lemma \ref{lem:fail}).
This is in the same vein as $\abs{V_1}$ being size $O(\log \abs{V})$ with high probability in
Algorithm~\ref{alg:coupling}.

\section{Computing the marginals}\label{sec:marginal}

In the previous section, we introduced a random process to generate a joint distribution of partial
colourings $\mu_{\-{cp}}(\cdot,\cdot)$, whose marginal distributions are pre-Gibbs.  Recall that we
fixed $X(v)=c_1$ and $Y(v)=c_2$. Let $\-q_i$ denote the marginal probability in $\mu_{\+C}$ of $v$
being coloured by $c_i$, for $i=1,2$.  That is, $\-q_i=\frac{\abs{\+C_i}}{\abs{\+C}}$ for $i=1,2$.
The coupling naturally induces an (imaginary) sampler to uniformly sample from $\+C_1\cup\+C_2$ as
follows:

\begin{itemize}
\item [\bf Step 1:] Sample $(X,Y)=(x,y)$ using Algorithm \ref{alg:coupling};
\item [\bf Step 2:] Let $v\gets c_1$ with probability $\frac{\-q_1}{\-q_1+\-q_2}$ and $v\gets c_2$
  otherwise;
\item [\bf Step 3:] If $v$ is coloured by $c_1$, uniformly output a colouring in $\+C_{x}$,
  otherwise uniformly output a colouring in $\+C_{y}$.
\end{itemize}

We denote this sampler by $\mathbb{S}$.  The output of $\mathbb{S}$ is uniform over $\+C_1\cup\+C_2$
is because by Lemma \ref{lem:coupling}, the output distribution of Algorithm \ref{alg:coupling},
projected to either side, is pre-Gibbs.  Then we choose the final colouring proportional to the
correct ratio.

One can represent the coupling process (Algorithm \ref{alg:coupling}) as traversing a
(deterministic) \emph{coupling tree} $\+T$ constructed as follows: each vertex in $\+T$ represents a
pair of partial colourings $(x,y)$\footnote{We use small letters $x,y$ to denote particular partial
  colourings, and reserve capital $X,Y$ to denote random ones.} defined on some $V_{\-{col}}$ that
have appeared in the coupling.  We write $(x,y)\in\+T$ if $(x,y)$ is a pair of partial colourings
represented by some vertex in $\+T$.  Although the intermediate state of Algorithm
\ref{alg:coupling} consists of partial colourings $x,y$ together with $V_{\-{col}}$ and $V_1$, we
can actually deduce $V_{\-{col}}$ from $x,y$, as well as $V_1$ by simulating Algorithm
\ref{alg:coupling} from the start given $x$ and $y$.  Thus the pair $(x,y)$ determines either that
the coupling should halt or the next vertex $u$ to extend to.  In the coupling tree $\+T$, $(x,y)$
either is a leaf or has $q^2$ children, which correspond to the $q^2$ possible ways to extend
$(x,y)$ by colouring $u$.  The root of the tree is the initial pair $(x_0,y_0)$ defined on
$\set{v}$.

In the following, we identify a collection of conditional marginal probabilities that keeps the
information of the coupling process.

First, consider a pair of partial colourings $(x,y)\in\+T$ which is a leaf, and any two proper
colourings $\sigma_x,\sigma_y$ such that $\sigma_x\models x$ and $\sigma_y\models y$.  In the
probability space induced by the sampler introduced above, define
\begin{align*}
  \-p_{x,y}^x&\defeq \Pr[(X,Y)\sim\mu_{\-{cp}}]{X=x,Y=y\mid \mathbb{S}\mbox{ outputs }\sigma_x};\\
  \-p_{x,y}^y&\defeq \Pr[(X,Y)\sim\mu_{\-{cp}}]{X=x,Y=y\mid \mathbb{S}\mbox{ outputs }\sigma_y}.
\end{align*}
These quantities are well defined and independent of the particular choices of $\sigma_x$ and
$\sigma_y$.  Essentially we only condition on the random choice at step 2 of $\mathbb{S}$.  Once
that choice is made, the output is uniform over $\+C_x$ or $\+C_y$.

Perhaps a clearer way of seeing this independence is to give more explicit expressions to
$p_{x,y}^x$ and~$p_{x,y}^y$.  By Bayes' rule,
\begin{align}
  \-p_{x,y}^x
  &=
    \frac{
    \Pr[(X,Y)\sim\mu_{\-{cp}}]{\mathbb{S}\mbox{ outputs }\sigma_x\mid X=x,Y=y}
    \mu_{\-{cp}}(x,y)}{\Pr{\mathbb{S}\mbox{ outputs }\sigma_x}} 
    \nonumber \\
  &=\-q_1\cdot\frac{\abs{\+C_1\cup\+C_2}}{\abs{\+C_x}}\cdot\mu_{\-{cp}}(x,y);\label{eqn:BayesX}\\
  \-p_{x,y}^y
  &=
    \frac{
    \Pr[(X,Y)\sim\mu_{\-{cp}}]{\mathbb{S}\mbox{ outputs }\sigma_y\mid X=x,Y=y}
    \mu_{\-{cp}}(x,y)}{\Pr{\mathbb{S}\mbox{ outputs }\sigma_y}} 
    \nonumber\\
  &=\-q_2\cdot\frac{\abs{\+C_1\cup\+C_2}}{\abs{\+C_y}}\cdot\mu_{\-{cp}}(x,y).\label{eqn:BayesY}
\end{align}
Combining two identities above we obtain
\begin{align}\label{eq:identity}
  \-q_1\cdot \-p_{x,y}^y\cdot\abs{\+C_y}&=\-q_2\cdot\-p_{x,y}^x\cdot\abs{\+C_x}.  
\end{align}

A crucial observation is that, for every pair of partial colourings $(x,y)$ that is a leaf of $\+T$
with corresponding $V_{\-{col}},V_1,V_2$, the ratio $\frac{\abs{\+C_{x}}}{\abs{\+C_{y}}}$ can be
computed in $q^{\abs{V_1\setminus V_{\-{col}}}}$ time.  This is because when
Algorithm~\ref{alg:coupling} terminates, all edges intersecting $V_1$ and $V_2\setminus V_{\-{col}}$
are satisfied by both $x$ and $y$.  The numbers of ways colouring blank vertices in $V_2$ cancel
out, and we only need to enumerate all colourings for blank vertices inside $V_1$.  Let
$r_{x,y}=\frac{\abs{\+C_{x}}}{\abs{\+C_{y}}}$.

Next, consider an internal $(x,y)$ in the coupling tree $\+T$. We interpret $\-p_{x,y}^x$ and
$\-p_{x,y}^y$ as the probability that the coupling process has ever arrived at an internal pair of
partial colourings $(x,y)$ conditioned on the output of $\mathbb{S}$ being $\sigma_x$ and $\sigma_y$
for any $\sigma_x,\sigma_y$ such that $\sigma_x\models x$ and $\sigma_y\models y$, respectively.
Note that the definition is consistent with our previous definition when $(x,y)$ is a leaf of $\+T$.
Recall that $(x_0,y_0)$ is the root of $\+T$, namely $x_0$ or $y_0$ only colours $v$ with $c_1$ or
$c_2$, respectively.  For $(x_0,y_0)$, we have that
\begin{align}\label{eq:root}
  p_{x_0,y_0}^{x_0}=p_{x_0,y_0}^{y_0}=1.
\end{align}
Moreover, for an internal $(x,y)$ whose children are defined on
$V_{\-{col}}'=V_{\-{col}}\cup\set{u}$, it holds that
\begin{align}
  \text{for every $c\in[q]$, }\-p_{x,y}^x&=\sum_{c'\in[q]}\-p_{x^{u\gets c},y^{u\gets c'}}^{x^{u\gets c}};\label{eq:int-conX}\\
  \text{for every $c\in[q]$, }\-p_{x,y}^y&=\sum_{c'\in[q]}\-p_{x^{u\gets c'},y^{u\gets c}}^{y^{u\gets c}}.\label{eq:int-conY}
\end{align}
where we use $x^{u\gets c}$ to denote the partial colouring that extends $x$ by assigning colour $c$
to the vertex~$u$. To see why~\eqref{eq:int-conX} holds, we note that conditioned on the event that
$\mathbb{S}$ outputs some $\sigma_x\models x$, the colouring on $u$ is $\sigma_x(u)$ and all the
randomnesses are from the choice of colours in $Y$ on $u$. The identity~\eqref{eq:int-conY} holds
for the same reason, by reversing the roles of $x$ and $y$.

In fact, when the coupling process is at some internal node of the coupling tree, say $(x,y)$,
defined on $V_{\-{col}}$, and the next step is to sample the colour on a vertex $u$, one can recover
the distribution of the colour on $u$ in the next step from the values
\[
  \set{\-p^{x^{u\gets c}}_{x^{u\gets c},y^{u\gets c'}},\-p^{y^{u\gets c}}_{x^{u\gets c'},y^{u\gets
        c}}\cmid c,c'\in[q]}
\]
by solving linear constraints using Bayes' rule.  Therefore, the collection
$\set{\-p_{x,y}^x,\-p_{x,y}^y\cmid (x,y)\in\+T}$ encodes all information of the coupling process.

\subsection{The linear program}

\newcommand{\ratiolb}{\underline{r}} \newcommand{\ratioub}{\overline{r}}

The values $\-p_{x,y}^x$ and $\-p_{x,y}^y$ are unknown and we are going to impose a few necessary
linear constraints on them.  The basic constraints are derived from \eqref{eq:identity},
\eqref{eq:root}, \eqref{eq:int-conX}, and~\eqref{eq:int-conY}.  To this end, for every node $(x,y)$
in $\+T$, we introduce two variables $\ul{p}_{x,y}^x$ and $\ul{p}_{x,y}^y$, aiming to mimic
$\-p_{x,y}^x$ and $\-p_{x,y}^y$.

The full coupling tree $\+T$ is too big, and we will truncate it up to some depth $L>0$.  The
quantity $L$ will be set later.  We will perform a binary search to estimate the ratio
$\frac{\-q_1}{\-q_2}$ using the truncated coupling tree.  Thus, we introduce two variables
$\ratioub$ and $\ratiolb$ as our guesses for upper and lower bounds of $\frac{\-q_1}{\-q_2}$.  Let
$\+T_L$ be the coupling tree truncated at depth $L$, and denote by $\+L(\+T)$ the leaves of a tree
$\+T$.  Since the coupling procedure colours one vertex at a time, for any node $(x,y)\in \+T_L$, we
have that $\abs{V_{\-{col}}}\le L$ where $V_{\-{col}}$ is determined by $(x,y)$.  Formally, we have
three types of constraints.

\bigskip
\noindent\textbf{Constraints 1}: For every leaf $(x,y)\in \+L(\+T_L)$ with corresponding
$\abs{V_{\-{col}}}<L$, we have the constraints:
\begin{align*}
  \ratiolb\cdot \ul{p}_{x,y}^y & \le \ul{p}_{x,y}^x\cdot r_{x,y};\\
  \ul{p}_{x,y}^x\cdot r_{x,y} &\le \ratioub\cdot \ul{p}_{x,y}^y;\\
  0\le \ul{p}_{x,y}^x,\,&\ul{p}_{x,y}^y\le 1.
\end{align*}
Constraints 1 are relaxed versions of identity \eqref{eq:identity}.  It will be clear soon that
these constraints are the most critical ones, as they guarantee that we can recover the marginal
probability on $v$ from these variables.  However, in order to compute $r_{x,y}$, one needs
$\exp(L)$ amount of time.  This forces us to truncate at only logarithmic depth in the coupling tree in order
to get a polynomial time algorithm, but we will show later that this is enough.

\bigskip
\noindent\textbf{Constraints 2}:
For the root $(x_0,y_0)\in \+T$, we have
\[
  \ul{p}_{x_0,y_0}^{x_0}=\ul{p}_{x_0,y_0}^{y_0}=1.
\]
Moreover, for every non-leaf $(x,y)\in\+T$ with corresponding $\abs{V_{\-{col}}}< L$, let $u$ be the
next vertex to couple.  We have the following constraints:
\begin{align*}
  \mbox{for every $c\in[q]$, }\ul{p}_{x,y}^x&=\sum_{c'\in[q]}\ul{p}_{x^{u\gets c},y^{u\gets c'}}^{x^{u\gets c}};\\
  \mbox{for every $c\in[q]$, }\ul{p}_{x,y}^y&=\sum_{c'\in[q]}\ul{p}_{x^{u\gets c'},y^{u\gets c}}^{y^{u\gets c}};\\
  0\le \ul{p}_{x,y}^x,\,&\ul{p}_{x,y}^y\le 1.
\end{align*}
These constraints faithfully realize the properties \eqref{eq:root}, \eqref{eq:int-conX}, and
\eqref{eq:int-conY}.

\newcommand{\tvalue}{\ensuremath{5\tp{e^2k^3\Delta^{3}}^{\frac{1}{1-\beta}}}}

\bigskip
\noindent\textbf{Constraints 3}: For every $c,c'\in[q]$ that $c\ne c'$, we add constraints:
\begin{align*}
  \ul{p}_{x^{u\gets c},y^{u\gets c'}}^{x^{u\gets c}}  &\le \frac{5}{t^*}\cdot \ul{p}_{x,y}^x;\\
  \ul{p}_{x^{u\gets c},y^{u\gets c'}}^{y^{u\gets c'}} &\le \frac{5}{t^*}\cdot \ul{p}_{x,y}^y.
\end{align*}
We will eventually set $t^*=\tvalue$ in Lemma~\ref{lem:inside}, where the parameter $0<\beta<1$ will
become clear in Definition~\ref{def:badcolouring}.

These constraints reflect the fact that the coupling at individual vertices is very likely to
succeed, due to Lemma~\ref{lem:LLL-prob}.  Assume the conditions of Lemma~\ref{lem:LLL-prob} are met
with $t=t^*$.  We claim the following property of those true values $\{\-p_{x,y}^x\}$.
\begin{claim}
  \[
    \frac{\-p_{x^{u\gets c},y^{u\gets c}}^{x^{u\gets c}}}{\-p_{x,y}^x}  \ge 1-\frac{5}{t^*}.
  \]
\end{claim}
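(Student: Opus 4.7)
The plan is to reduce the claim to a direct application of Lemma~\ref{lem:LLL-prob} on the conditional marginals at~$u$, via the structure of the maximal coupling used by Algorithm~\ref{alg:coupling}.

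First, I would unpack $\-p_{x,y}^x$ using Bayes' rule. Let $\mathcal{E}_{x,y}$ denote the event that Algorithm~\ref{alg:coupling} reaches the internal node $(x,y)$, and fix any $\sigma_x\models x$. By the pre-Gibbs property from Lemma~\ref{lem:coupling} applied at the intermediate state corresponding to $(x,y)$, the $X$-side output of $\mathbb{S}$ is uniform over $\+C_x$ once conditioned on $\mathcal{E}_{x,y}$ and on the step-2 choice of $c_1$. Therefore
\[
  \Pr{\mathbb{S}\text{ outputs }\sigma_x \mid \mathcal{E}_{x,y}} = \frac{|\+C_1|}{|\+C_1 \cup \+C_2|\cdot |\+C_x|},
\]
while the unconditional output is uniform on $\+C_1\cup \+C_2$. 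Together these give
\[
  \-p_{x,y}^x = \Pr{\mathcal{E}_{x,y}}\cdot \frac{|\+C_1|}{|\+C_x|},
\]
and the identical derivation yields $\-p_{x^{u\gets c},y^{u\gets c}}^{x^{u\gets c}} = \Pr{\mathcal{E}_{x^{u\gets c},y^{u\gets c}}}\cdot \frac{|\+C_1|}{|\+C_{x^{u\gets c}}|}$.

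Taking the ratio, the $|\+C_1|$ factors cancel and the result collapses to a coupling quantity:
\[
  \frac{\-p_{x^{u\gets c},y^{u\gets c}}^{x^{u\gets c}}}{\-p_{x,y}^x} = \frac{|\+C_x|}{|\+C_{x^{u\gets c}}|}\cdot\frac{\Pr{\mathcal{E}_{x^{u\gets c},y^{u\gets c}}}}{\Pr{\mathcal{E}_{x,y}}} = \frac{\min(\mu^x_u(c),\,\mu^y_u(c))}{\mu^x_u(c)},
\]
where $\mu^x_u,\mu^y_u$ denote the conditional marginals of $u$ under $\mu_{\+C_x},\mu_{\+C_y}$. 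The first equality is arithmetic; the second uses $|\+C_{x^{u\gets c}}|/|\+C_x| = \mu^x_u(c)$ together with the defining property of the maximal coupling, $\Pr{c_x = c_y = c \mid \mathcal{E}_{x,y}} = \min(\mu^x_u(c),\mu^y_u(c))$.

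To finish, I would invoke Lemma~\ref{lem:LLL-prob} with $t = t^*$ on the two residual instances obtained by pinning $x$ and $y$, yielding both marginals in $[\tfrac{1}{q}(1-\tfrac{1}{t^*}),\tfrac{1}{q}(1+\tfrac{4}{t^*})]$; the elementary inequality $(1-\tfrac{1}{t^*})/(1+\tfrac{4}{t^*}) \ge 1-\tfrac{5}{t^*}$, equivalent to $20/(t^*)^2 \ge 0$, then gives the claim. The only nontrivial step I anticipate is checking that the hypothesis of Lemma~\ref{lem:LLL-prob} survives pinning: each surviving edge of the residual instance must still be large enough to meet the local lemma condition with parameter $t^*$. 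This is precisely the role of the $k_2$-threshold in Algorithm~\ref{alg:coupling}, which discards any edge whose coloured intersection reaches $k_2$ (absorbing the rest into $V_1$); with a suitable choice of $k_2$ relative to $k_1$ in Section~\ref{sec:parameters}, every alive edge retains enough uncoloured vertices so that the local lemma condition is preserved throughout the coupling.
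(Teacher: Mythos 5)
Your argument is correct and essentially identical to the paper's: it uses the same Bayes-rule factorisation $\-p_{x,y}^x \propto \mu_{\-{cp}}(x,y)/\abs{\+C_x}$, writes the ratio as $\frac{\abs{\+C_x}}{\abs{\+C_{x^{u\gets c}}}}$ times the conditional probability of a successful coupling at $u$ (the maximal-coupling minimum), and bounds both factors by Lemma~\ref{lem:LLL-prob} with $t=t^*$, yielding $\frac{1-1/t^*}{1+4/t^*}=1-\frac{5}{t^*+4}\ge 1-\frac{5}{t^*}$. Your closing remark about verifying the hypotheses of Lemma~\ref{lem:LLL-prob} via the $k_2$-threshold (edges keep at least $k_1-k_2$ uncoloured vertices) matches exactly where the paper handles this, namely in the proof of Theorem~\ref{thm:marginal}.
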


The claim implies that these true values satisfy \textbf{Constraints 3} since they also satisfy
\textbf{Constraints 2}.  We use \eqref{eqn:BayesX} to show the claim.  By Lemma~\ref{lem:LLL-prob},
\begin{align*}
  \frac{\abs{C_{x}}}{\abs{C_{x^{u\gets c}}}} = \frac{1}{\mPr[\sigma\sim \mu_{\+C_x}]{\sigma(u)=c}} \ge \frac{qt^*}{t^*+4}.
\end{align*}
Again by Lemma~\ref{lem:LLL-prob}, the coupling at $u$ with any colour $c$ succeeds with probability
at least $\frac{1}{q}\left( 1-\frac{1}{t^*} \right)$.  Thus the ratio
$\frac{\mu_{\-{cp}}(x^{u\gets c},y^{u\gets c})}{\mu_{\-{cp}}(x,y)}$, which can be viewed as the
probability of coupling $u$ successfully with colour $c$ conditioned on reaching $(x,y)$, is at
least $\frac{1}{q}\left( 1-\frac{1}{t^*} \right)$.  Combine these facts with \eqref{eqn:BayesX},
\begin{align*}
  \frac{\-p_{x^{u\gets c},y^{u\gets c}}^{x^{u\gets c}}}{\-p_{x,y}^x} 
  = \frac{\abs{C_{x}}}{\abs{C_{x^{u\gets c}}}} \cdot \frac{\mu_{\-{cp}}(x^{u\gets c},y^{u\gets c})}{\mu_{\-{cp}}(x,y)}  \ge \frac{qt^*}{t^*+4}\cdot \frac{1}{q}\left( 1-\frac{1}{t^*} \right) = 1 - \frac{5}{t^*+4} \ge 1 - \frac{5}{t^*}.
\end{align*}
Similar inequalities hold for $\{p_{x,y}^y\}$ due to \eqref{eqn:BayesY}.

\subsection{Analysis of the LP}

In this subsection, we show that the LP can be used to obtain an efficient and accurate estimator of
marginals.

\newcommand{\badfraction}{e^{-\ell}}
\newcommand{\badthreshold}{\ensuremath{C\Delta^{\frac{3}{\beta(k_{2}-1)}}}}
\newcommand{\couplingfailure}{e^{-\ell}}
\newcommand{\couplingthreshold}{\ensuremath{C\Delta^{\frac{4-\beta}{(1-\beta)(k_1-k_2-1)}}}}
\newcommand{\gammavalue}{\ensuremath{4e^{-\frac{L}{k^3\Delta^2}}}}
\newcommand{\ellvalue}{\ensuremath{\frac{L}{k^3\Delta^2}}}
\newcommand{\marginalthreshold}{\ensuremath{\max\left\{\left( ek\Delta
      \right)^{\frac{1}{k_1-2}},\beta^{\frac{-1}{k_2-1}},\badthreshold,\couplingthreshold\right\}}}

\begin{theorem}\label{thm:marginal}
  Let $\Delta\ge 2$ and $k>0$ be two integers.  Let $0<\beta<1$ be a constant.  Let $0<k_2<k_1\le k$
  be integers.  Let $H=(V,\+E)$ be a hypergraph with pinnings $\+P$, maximum degree $\Delta$ such
  that $k_1\le \abs{e}\le k$ for every $e\in \+E$.  If \[ q >\marginalthreshold \] where
  \begin{align*}
    C>\max\left\{\left(\frac{e^{\beta+3}k^3}{\beta^\beta}\cdot\binom{k}{k_2}\right)^{\frac{1}{\beta(k_2-1)}}, \left(5e\tp{e^2k^3}^{\frac{1}{1-\beta}}\right)^{\frac{1}{k_1-k_2-1}}\right\},
  \end{align*}
  then there is a deterministic algorithm that, for every $v\in V$, $c\in[q]$ and $\eps > 0$, it
  computes a number $\widehat{p}$ satisfying
  \[
    e^{-\eps}\cdot \widehat{p}\le \Pr[\sigma\sim\mu_{\+C}]{\sigma(v)=c}\le e^{\eps}\cdot
    \widehat{p}.
  \]
  in time $\poly(\frac{1}{\eps})$.
\end{theorem}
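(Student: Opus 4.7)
The strategy is a binary search on a guess $r$ for the marginal-probability ratio $\-q_1/\-q_2$, using LP feasibility as the oracle: set $\ratiolb = \ratioub = r$ and check feasibility, then bisect. To estimate $\Pr[\sigma(v) = c]$, I would compute all pairwise ratios $\-q_c/\-q_{c'}$ in this way and recover each marginal via $\sum_{c'}\Pr[\sigma(v)=c']=1$; Lemma~\ref{lem:lwbnd} guarantees each marginal is bounded away from zero, so multiplicative closeness of the ratios transfers to multiplicative closeness of the marginals.

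\textbf{Feasibility at the true values} is largely built into the construction. Constraints~2 are just the identities \eqref{eq:root}--\eqref{eq:int-conY}. For any natural leaf $(x,y)$ of $\+T$, identity~\eqref{eq:identity} rearranges to $(\-q_1/\-q_2)\cdot \-p_{x,y}^y = \-p_{x,y}^x\cdot r_{x,y}$, which is Constraints~1 at equality. Constraints~3 are exactly the Claim already proved in the preceding subsection from Lemma~\ref{lem:LLL-prob}. Hence the LP is feasible at $r=\-q_1/\-q_2$.

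\textbf{Extraction from a feasible LP.} Summing Constraints~1 over $\+L^* \defeq \+L(\+T)\cap\+T_L$ (the natural leaves reached within depth $L$) and using Constraints~2 to telescope up to the root yields
\[
  \ratiolb \sum_{(x,y)\in\+L^*}\ul{p}_{x,y}^y \;\le\; \sum_{(x,y)\in\+L^*}\ul{p}_{x,y}^x\, r_{x,y} \;\le\; \ratioub \sum_{(x,y)\in\+L^*}\ul{p}_{x,y}^y.
\]
The analogous telescoping of \eqref{eq:identity} on the true values gives $\-q_1/\-q_2 = \sum_{(x,y)\in\+L(\+T)}\-p_{x,y}^x\, r_{x,y}$ and $\sum_{(x,y)\in\+L(\+T)}\-p_{x,y}^y=1$. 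Therefore the discrepancy between the LP window $[\ratiolb,\ratioub]$ and the true ratio is controlled by the LP mass $\alpha$ on the depth-$L$ cutoff of $\+T_L$; for the true values, $\alpha$ is exactly the probability that the coupling fails to terminate within $L$ steps.

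\textbf{Main obstacle: bounding $\alpha$.} I need $\alpha = e^{-\Omega(L/(k^3\Delta^2))}$ in order to take $L = \Theta(k^3\Delta^2\log(1/\eps))$. The naive per-step success bound $1-O(1/t^*)$ from Lemma~\ref{lem:LLL-prob} only applies while the pinned hypergraph continues to satisfy the LLL premise, and after enough vertices of an edge are coloured this can locally fail. Following the authors' own hint about two distinct kinds of errors, I would split histories through $\+T_L$ into those that remain in the \emph{good} regime (Lemma~\ref{lem:LLL-prob} applies at every step) and those that pass through a \emph{bad} partial colouring (as in Definition~\ref{def:badcolouring}). On good histories, Constraints~3 with $t^*=\tvalue$ together with a branching-process analysis of the BFS frontier give exponential decay in the depth. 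On bad histories, the total mass must be bounded separately by Lemma~\ref{lem:outside}, the new exponentially small estimate highlighted in the introduction; the fact that merely $V_1=O(\log|V|)$ with high probability is not enough is exactly what this second bound is designed to fix. Combining the two sources yields the required bound on $\alpha$. Since the truncated LP has size $q^{O(L)}=\poly(1/\eps)$ for fixed $k,\Delta,q$ and binary search adds only $O(\log(1/\eps))$ overhead, the algorithm runs in $\poly(1/\eps)$ time, as claimed.
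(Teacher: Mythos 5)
Your overall architecture --- binary search with LP feasibility as the oracle, feasibility of the true values via \eqref{eq:identity}, \eqref{eq:root}--\eqref{eq:int-conY} and the Claim, and a truncation-error analysis that separates bad colourings (Lemma~\ref{lem:outside}) from good ones certified by \textbf{Constraints 3} --- is the paper's. But two steps, as written, would fail. First, your extraction identities are false: in general $\sum_{(x,y)\in\+L(\+T)}\-p_{x,y}^y\ne 1$ and $\sum_{(x,y)\in\+L(\+T)}\-p_{x,y}^x r_{x,y}\ne \-q_1/\-q_2$, because $\-p_{x,y}^y$ is a probability conditioned on a \emph{fixed} output $\sigma\models y$. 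The correct statements are per colouring, $\sum_{(x,y)\in\+L(\+T):\,\sigma\models y}\-p_{x,y}^y=1$ for each $\sigma\in\+C_2$, and aggregating over $\sigma$ gives the \emph{weighted} identity $\sum_{(x,y)\in\+L(\+T)}\-p_{x,y}^y\abs{\+C_y}=\abs{\+C_2}$ (similarly on the $x$ side). This weighting is not cosmetic: it is exactly what turns the truncation error into a per-colouring quantity, so that the loss can be charged either to bad colourings --- Lemma~\ref{lem:outside} bounds their \emph{fraction of $\+C_1$} and says nothing about ``LP mass on histories'' --- or, for each good colouring, to the LP mass escaping depth $L$ conditioned on that colouring. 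This is precisely the paper's Lemma~\ref{lem:ratiobound}, carried out through the quantities $Z_i=\sum_{(x,y)}\widehat p\cdot\abs{\+C_x}$ (resp.\ $\abs{\+C_y}$), and your unweighted ``LP mass $\alpha$ on the depth-$L$ cutoff'' is not the right object.

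Second, your diagnosis of the obstacle is off, and the sketch for good histories would not go through as stated. Because of the $k_2$-cap in Algorithm~\ref{alg:coupling}, every surviving edge always has at least $k_1-k_2$ uncoloured vertices, so Lemma~\ref{lem:LLL-prob} (with $k'=k_1-k_2$, $t=t^*$) applies at \emph{every} step; the LLL premise never ``locally fails''. The real problem is that, conditioned on an output colouring which renders many disjoint edges (partially) monochromatic, those edges become blocked \emph{without any coupling failure}, so the $5/t^*$ charge from \textbf{Constraints 3} is simply unavailable for them; this is why the good/bad split must be over final colourings (Definition~\ref{def:badcolouring}) rather than over histories, and why a plain branching-process bound on the BFS frontier is too coarse. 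The paper's route is: any node at depth $L$ has blocked edges containing a $\{2,3\}$-tree of size $\ge L/(k^3\Delta^2)$ through $e_0$ (Lemma~\ref{lem:blocked_set}); for an $\ell$-good colouring at least $\floor{(1-\beta)\ell}$ of these edges must each contain a genuinely miscoupled vertex, and these vertices are distinct, hence the failures are independent; and the bound must hold for an \emph{arbitrary} feasible LP solution, which is achieved by running a simulated coupling driven by the $\widehat p$'s (well defined by \textbf{Constraints 2}) conditioned on the fixed good colouring, then union bounding over $\{2,3\}$-trees via Corollary~\ref{cor:2-tree-number}. This is Lemma~\ref{lem:inside}; without it (or an equivalent argument) your bound on the truncation error remains an assertion, while the rest of your proposal (feasibility of the true values, binary search, recovering $\widehat p=(\sum_{c'}p_{c'})^{-1}$, and the $\exp(O(L))=\poly(1/\eps)$ running time) matches the paper.
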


Before diving into the proof details, let us first imagine that we set up the LP for the whole
coupling tree.  To do this would require exponential amount of time, but we show that this indeed
can be used to estimate the marginals to arbitrary precision.  We use
$\set{\widehat{p}_{x,y}^x,\widehat{p}_{x,y}^y}_{(x,y)\in\+T}$ to denote a solution of this LP. Due to
\textbf{Constraints 2}, a simple induction shows that for every $L\le \abs{V}$ and $\sigma\in\+C_1$,
\begin{align*}
  \sum_{(x,y)\in\+L(T_L):\;\sigma\models x} \widehat{p}_{x,y}^{x} = 1.
\end{align*}
In particular, when $L=\abs{V}$, this means that
\begin{align*}
  \sum_{(x,y)\in\+L(T):\;\sigma\models x} \widehat{p}_{x,y}^{x} = 1.
\end{align*}
Similar equalities hold on the $Y$ side.  Using this, we rewrite the ratio
$\frac{\abs{\+C_1}}{\abs{\+C_2}}$ as follows:
\begin{align*}
  \frac{\abs{\+C_1}}{\abs{\+C_2}}
  & =\frac{\sum_{\sigma\in \+C_1}\,1}{\sum_{\sigma\in\+C_2}\,1}
    =\frac{\sum_{\sigma\in\+C_1}\sum_{(x,y)\in\+L(\+T): \sigma\models x}\widehat p_{x,y}^x}
    {\sum_{\sigma\in\+C_2}\sum_{(x,y)\in\+L(\+T): \sigma\models y}\widehat p_{x,y}^y}\\
  & =\frac{\sum_{(x,y)\in\+L(\+T)}\sum_{\sigma\models x}\widehat p_{x,y}^x}
    {\sum_{(x,y)\in\+L(\+T)}\sum_{\sigma\models y}\widehat p_{x,y}^y}\\
  & =\frac{\sum_{(x,y)\in\+L(\+T)}\widehat p_{x,y}^x\abs{C_x}}
    {\sum_{(x,y)\in\+L(\+T)}\widehat p_{x,y}^y\abs{C_y}}.
\end{align*}
Recall $r_{x,y}=\frac{\abs{C_x}}{\abs{C_y}}$.  By \textbf{Constraints 1}, we know that for any
$(x,y)\in\+L(\+T)$,
\begin{align*}
  \ratiolb \le \frac{\widehat p_{x,y}^x\abs{C_x}}{\widehat p_{x,y}^y\abs{C_y}} \le \ratioub.
\end{align*}
It implies that
\begin{align*}
  \ratiolb \le \frac{\abs{\+C_1}}{\abs{\+C_2}} \le \ratioub.
\end{align*}

Unfortunately, as the size and the computational cost of setting up the LP is exponential in $L$, we
have to truncate the tree at a suitable place.  The rest of our task is to show that the error caused by the truncation
is small.  One may notice that in the analysis above we do not use \textbf{Constraints $3$}.
Indeed, these constraints are used to bound the truncation error.

Intuitively, the truncation error comes from the proper colourings so that the coupling does not
halt at depth $L$ (since we cannot impose \textbf{Constraints 1} for these nodes).  A naive approach
would then try to show that conditioned on any proper colouring as the final output, the coupling
will terminate quickly.  This is unfortunately not true and there exist ``bad'' colourings so that
the coupling does not terminate at level $L$ with high probability.  For example, given a pre-determined ordering
of vertices and edges, a proper colouring $\sigma\in\+C_1$ may render all vertices encountered in
Algorithm \ref{alg:coupling} with the same colour.  Hence conditioned on this $\sigma$ on the $X$
side, Algorithm \ref{alg:coupling} will not stop until all edges are enumerated.

We will show, nonetheless, that the fraction of ``bad'' colourings is small.  Let us formally define
bad colourings first.  We need to use the notion of $\{2,3\}$-trees.  This notion dates back to
Alon's parallel local lemma algorithm \cite{Alo91}.

\begin{definition}[$\{2,3\}$-tree]
  Let $G=(V,E)$ be a graph.  A set of vertices $T\subseteq V$ is a \emph{$\{2,3\}$-tree} if (1) for
  any $u,v\in T$, $\-{dist}_G(u,v)\ge 2$; (2) if one adds an edge between every $u,v\in T$ such that
  $\-{dist}_G(u,v)=2$ or $3$, then $T$ is connected.
\end{definition}

%
We will need to count the number of $\{2,3\}$-trees later for union bounds.  The following lemma,
due to Borgs et al. \cite{BCKL13}, counts the number of connected induced subgraphs in a graph.

\begin{lemma}\label{lem:subnumber}
  Let $G=(V,E)$ be a graph with maximum degree $d$ and $v\in V$ be a vertex. The number of connected
  induced subgraphs of size $\ell$ containing $v$ is at most $\frac{(ed)^{\ell-1}}{2}$.
\end{lemma}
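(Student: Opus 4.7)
The plan is to reduce the count of connected induced subgraphs to a count of rooted labelled trees, and then invoke a standard tree-enumeration bound. First I would observe that every connected induced subgraph $S$ of size $\ell$ containing $v$ admits at least one spanning tree rooted at $v$. Fixing a canonical choice---say, the DFS tree built with respect to an a priori total ordering of $V$---gives an injective map $S \mapsto T_S$ from connected induced subgraphs containing $v$ to rooted spanning trees on $\ell$ vertices with root $v$. Hence it suffices to bound the number of such rooted trees in $G$.

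To count rooted trees, I would linearise each $T_S$ into a rooted \emph{plane} tree by ordering the children of every node according to the fixed vertex ordering. Since $G$ has maximum degree $d$, each node of this plane tree has at most $d$ children. The number of rooted plane trees on $\ell$ nodes with at most $d$ children per node can be computed exactly via Lagrange inversion applied to the functional equation $T(x) = x(1+T(x))^d$, and equals $\tfrac{1}{\ell}\binom{d\ell}{\ell-1}$. This upper bounds the number of rooted spanning trees in $G$ on $\ell$ vertices rooted at $v$.

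Finally, I would estimate $\tfrac{1}{\ell}\binom{d\ell}{\ell-1}$ by the standard inequality $\binom{n}{k}\le n^{k}/k!$ combined with Stirling's lower bound $k!\ge \sqrt{2\pi k}\,(k/e)^{k}$. After elementary manipulation this gives a bound of the form $\tfrac{(ed)^{\ell-1}}{2}$, as claimed.

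The main obstacle I expect is pinning down the constant $\tfrac{1}{2}$ rather than some larger absolute constant. The reduction step and the Lagrange-inversion count are both clean, but a naive Stirling estimate yields $c(ed)^{\ell-1}$ with $c$ close to but not exactly $\tfrac{1}{2}$, so careful Stirling bookkeeping (and perhaps a separate check for very small $\ell$) is required to land on the stated constant. A fallback would be to invoke the bound directly from the tree-enumeration literature (indeed it is a well-known folklore estimate, appearing explicitly in Borgs et al.~as cited), which is how the paper presents it.
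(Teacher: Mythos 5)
The paper does not actually prove this lemma --- it is imported verbatim from Borgs et al.\ \cite{BCKL13} --- so any self-contained argument is necessarily ``different''; your fallback of simply citing that reference is exactly what the authors do. Your proposed proof follows the standard route (canonical spanning tree, tree enumeration, Stirling), and its endpoint $\frac{1}{\ell}\binom{d\ell}{\ell-1}\le\frac{(ed)^{\ell-1}}{2}$ is the right computation, but the middle step has a genuine flaw as written. The map you define --- forget the vertex identities and keep only the plane tree obtained by ordering children according to the fixed vertex ordering --- is \emph{not} injective: in a star with centre $v$ and neighbours $a<b<c$, the subtrees $\{v,a\},\{v,b\},\{v,c\}$ all collapse to the same one-child plane tree, so the number of plane trees with at most $d$ children per node does not upper bound the number of rooted subtrees of $G$. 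Moreover, the functional equation you invoke, $T(x)=x\bigl(1+T(x)\bigr)^d$, does not enumerate plane trees with at most $d$ children (those satisfy $T=x(1-T^{d+1})/(1-T)$); it enumerates trees in which every node has $d$ \emph{labelled} child slots, each possibly empty --- equivalently, subtrees containing the root of the infinite complete $d$-ary tree.

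The fix is to inject into precisely that latter class: for each vertex $u$ of the subtree, order its at most $d$ neighbours in $G$ and send each tree-child of $u$ to the slot given by its rank among those neighbours. This encoding is injective (the subtree can be reconstructed slot by slot starting from $v$), and Lagrange inversion on $T=x(1+T)^d$ then gives exactly the count $\frac{1}{\ell}\binom{d\ell}{\ell-1}$ you wrote down, so your two slips cancel and the number you used is the correct one for the correct class. The final estimate also needs the small-$\ell$ caveat you anticipated: $\frac{1}{\ell}\binom{d\ell}{\ell-1}\le\frac{(d\ell)^{\ell-1}}{\ell!}\le\frac{(ed)^{\ell-1}\,e}{\ell\sqrt{2\pi\ell}}\le\frac{(ed)^{\ell-1}}{2}$ holds for all $\ell\ge 2$, whereas for $\ell=1$ the stated bound $\tfrac12$ is smaller than the true count $1$, so the lemma should be read for $\ell\ge 2$ (this is harmless for every application in the paper, e.g.\ Corollary~\ref{cor:2-tree-number} and the union bounds in Lemmas~\ref{lem:outside} and~\ref{lem:inside}).
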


\begin{corollary}\label{cor:2-tree-number}
  Let $G=(V,E)$ be a graph with maximum degree $d$ and $v\in V$ be a vertex. Then the number of
  $\{2,3\}$-trees in $G$ of size $\ell$ containing $v$ is at most $\frac{\tp{ed^3}^{\ell-1}}{2}$.
\end{corollary}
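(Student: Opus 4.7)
The plan is to reduce the counting of $\{2,3\}$-trees in $G$ to the counting of connected induced subgraphs in a carefully chosen auxiliary graph, and then apply Lemma~\ref{lem:subnumber} as a black box. The key observation is that condition~(2) in the definition of a $\{2,3\}$-tree is exactly connectedness in an auxiliary ``distance $\le 3$'' graph, while condition~(1) only makes the family of $\{2,3\}$-trees smaller, so a naive bound ignoring it will already suffice.

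Concretely, I would introduce $G^{(2,3)} = (V, E')$ on the same vertex set as $G$, where $\{u,w\} \in E'$ iff $\-{dist}_G(u,w) \in \{2,3\}$. Then by definition every $\{2,3\}$-tree $T \subseteq V$ of $G$ induces a connected subgraph of $G^{(2,3)}$ on $T$ (property~(2)), so the number of $\{2,3\}$-trees of size $\ell$ in $G$ containing a given vertex $v$ is at most the number of connected induced subgraphs of $G^{(2,3)}$ of size $\ell$ containing $v$. Applying Lemma~\ref{lem:subnumber} to $G^{(2,3)}$ will yield the bound $\tfrac{(ed')^{\ell-1}}{2}$, where $d'$ is the maximum degree of $G^{(2,3)}$, and it remains to check $d' \le d^3$.

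The main (routine) step is therefore to upper bound the degree in $G^{(2,3)}$. For any vertex $u$, the number of vertices at $G$-distance exactly $2$ from $u$ is at most $d(d-1)$, and the number at distance exactly $3$ is at most $d(d-1)^2$, by a standard neighbourhood expansion argument. Summing gives
\[
  d(d-1) + d(d-1)^2 \;=\; d(d-1)\bigl(1 + (d-1)\bigr) \;=\; d^2(d-1) \;\le\; d^3,
\]
so the maximum degree of $G^{(2,3)}$ is at most $d^3$, and substituting into Lemma~\ref{lem:subnumber} closes the argument.

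There is no real obstacle here beyond getting the bookkeeping right: the proof is essentially ``build the right auxiliary graph, bound its degree, quote the previous lemma.'' The only subtlety worth flagging is that condition~(1) of a $\{2,3\}$-tree (pairwise $G$-distance $\ge 2$) is discarded in the reduction, which is fine for an upper bound but means the count is not tight; one could sharpen the constant by using $d^2(d-1)$ in place of $d^3$, but the stated corollary asks only for $(ed^3)^{\ell-1}/2$.
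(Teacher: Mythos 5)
Your proposal is correct and follows essentially the same route as the paper: build the auxiliary graph on $V$ with edges between vertices at $G$-distance $2$ or $3$, note its maximum degree is at most $d^3$, observe that every $\{2,3\}$-tree is a connected vertex set there, and invoke Lemma~\ref{lem:subnumber}. Your explicit count $d(d-1)+d(d-1)^2=d^2(d-1)\le d^3$ is just a slightly more detailed version of the paper's degree bound, so there is nothing to add.
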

\begin{proof}
  Let $G'=(V,E')$ be the graph with vertex set $V$ and $(u,v)\in E'$ if $\-{dist}_G(u,v)=2,3$. The
  degree of $G'$ is at most $d^3$ and any $\{2,3\}$-tree in $G$ is a connected set of vertices in
  $G'$. Therefore, the number of $\{2,3\}$-trees in $G$ containing $v$ of size $\ell$ can be bounded
  by the number of induced subgraphs in $G'$ containing $v$ of size
  $\ell$. Lemma~\ref{lem:subnumber} then concludes the proof.
\end{proof}

Recall that $\-{Lin}(H)$ is the line graph of $H$, that is, vertices in $\-{Lin}(H)$ are hyperedges
in $H$ and two hyperedges are adjacent if they share some vertex in
$H$. 
Let $\-{L}^2(H)$ be a graph whose vertices are hyperedges in $H$ and two hyperedges are adjacent in
$\-{L}^2(H)$ if their distance is at most $2$ in $\-{Lin}(H)$.  Any connected subgraph in
$\-{L}^2(H)$ contains a large $\{2,3\}$-tree in $\-{Lin}(H)$.

\begin{lemma}\label{lem:2-tree}
  Let $B$ be a set of hyperedges which induces a connected subgraph in $\-{L}^2(H)$, and $e^*\in B$
  be an arbitrary hyperedge.  There exists a $\{2,3\}$-tree $T\subseteq B$ such that $e^*\in T$ in
  $\-{Lin}(H)$ and $\abs{T}\ge\frac{\abs{B}}{k\Delta}$.
\end{lemma}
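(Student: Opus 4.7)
\bigskip

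\noindent\emph{Proof plan for Lemma~\ref{lem:2-tree}.}

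The plan is to construct $T$ greedily and then analyse it with a simple counting argument. For a current candidate set $S \subseteq B$, write $\delta_S(e) := \min_{e' \in S} d_{\-{Lin}(H)}(e, e')$. Initialise $T_0 := \{e^*\}$, and at each step, if there exists $e \in B \setminus T_i$ with $\delta_{T_i}(e) \in \{2,3\}$, choose any such $e$ and set $T_{i+1} := T_i \cup \{e\}$. Let $T$ denote the final set when no such $e$ remains.

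The first step is to verify that $T$ is indeed a $\{2,3\}$-tree in $\-{Lin}(H)$. Pairwise distance $\geq 2$ is immediate from the stopping rule of the greedy addition (we only add $e$ when $\delta_{T_i}(e) \geq 2$, so $e$ shares no vertex with any previously chosen edge). Connectivity in the auxiliary graph where pairs at distance $2$ or $3$ are joined follows because every newly added $e$ is joined to a witness $e' \in T_i$ at distance $2$ or $3$, and $T_0 = \{e^*\}$ is trivially connected.

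The heart of the proof is the size bound, and this is where the hypothesis that $B$ is connected in $\-{L}^2(H)$ enters. I claim that at termination, every $e' \in B \setminus T$ satisfies $\delta_T(e') = 1$ (i.e.\ $e'$ shares a vertex in $H$ with some edge of $T$). Indeed, by termination $\delta_T(e') \notin \{2,3\}$, and $\delta_T(e') \neq 0$ since $e' \notin T$, so it suffices to rule out $\delta_T(e') \geq 4$. Suppose toward contradiction $\delta_T(e') \geq 4$. Connectivity of $B$ in $\-{L}^2(H)$ gives a path $e^* = e_0, e_1, \ldots, e_m = e'$ in $B$ with $d_{\-{Lin}(H)}(e_j, e_{j+1}) \leq 2$ for every $j$. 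By the triangle inequality $|\delta_T(e_{j+1}) - \delta_T(e_j)| \leq 2$, so the integer-valued function $\delta_T$ jumping from $0$ at $e_0$ to at least $4$ at $e_m$ must hit $\{2,3\}$ at some intermediate $e_j$; such $e_j$ lies in $B$ and is not in $T$ (else $\delta_T(e_j) = 0$), contradicting termination.

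Finally, since the maximum degree of $\-{Lin}(H)$ is at most $k(\Delta-1)$, each $e \in T$ has at most $k(\Delta-1)$ neighbours in $\-{Lin}(H)$, hence dominates at most $k(\Delta-1)$ hyperedges of $B \setminus T$. Therefore
\[
|B| \;\leq\; |T| + |T|\cdot k(\Delta-1) \;=\; |T|\bigl(k(\Delta-1)+1\bigr) \;\leq\; |T|\cdot k\Delta,
\]
which rearranges to $|T| \geq |B|/(k\Delta)$. The main (and really only) subtlety I anticipate is the dichotomy argument ruling out $\delta_T(e') \geq 4$ via a connecting path in $\-{L}^2(H)$; everything else is a direct book-keeping of the greedy procedure.
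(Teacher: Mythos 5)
Your proposal is correct and takes essentially the same route as the paper: the identical greedy construction of a $\{2,3\}$-tree rooted at $e^*$, the same use of connectivity of $B$ in $\mathrm{L}^2(H)$ to conclude that every edge of $B\setminus T$ ends up at distance $1$ from $T$ (the paper inspects the last two edges of a shortest path from a leftover edge to $T$, while you run a discrete intermediate-value argument on $\delta_T$ along a path from $e^*$; the two are interchangeable), and the same count of at most $k\Delta$ edges of $B$ charged to each element of $T$. Nothing is missing.
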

\begin{proof}
  We construct $T$ greedily starting from $T_0:=\{e^*\}$.  Given $T_i$, let
  $B\gets B\setminus\Gamma(T_i)$, and then let $T_{i+1}$ be $T_i$ plus the first hyperedge in $B$
  which has distance $\le 3$ from $T_i$.  If no such hyperedge exists, the process stops.

  We claim that when the process stops, all hyperedges in $B$ are removed.  If there is a nonempty
  subset $B'\subset B$ remaining, choose an arbitrary $e\in B'$.  Since $B$ is connected in
  $\-{L}^2(H)$, there is a shortest path $P\subset B$ from $e$ to some $e'\in T$ in $\-{L}^2(H)$.
  Assume that $P$ is $e\rightarrow \dots\rightarrow e_1\rightarrow e_2 \rightarrow e'$ (where $e_1$
  is possible to be $e$).  The minimality of $\abs{P}$ implies that $e_1,e_2\not\in T$.  If
  $\-{dist}_{\-{Lin}(H)}(T,e_2)=1$, then
  $\-{dist}_{\-{Lin}(H)}(T,e_1)\le 1 + \-{dist}_{\-{Lin}(H)}(e_1,e_2)\le 3$ and it contradicts the construction of $T$ as $e_1$ would be added to $T$.  Otherwise
  $\-{dist}_{\-{Lin}(H)}(T,e_2)=2$, and again it contradicts the construction of $T$ as $e_2$
  would be added to $T$.

  For the size of $T$, notice that in every step of the process, at most $k\Delta$ hyperedges are
  removed.  Hence $\abs{T}\ge\frac{\abs{B}}{k\Delta}$.
\end{proof}

We now define \emph{bad colourings}.  Let $e_0$ be the first edge in $\Gamma(v)$.  Recall that in
the coupling process we would attempt to colour at most $k_2$ vertices in an edge, where
$0<k_2<k_1$.  We will have another parameter $0<\beta<1$, which denotes the fraction of (partially)
monochromatic hyperedges\footnote{A hyperedge is
      (partially) monochromatic if every vertex in the hyperedge is either of the same colour or not
      coloured.} in a bad colouring.  All parameters will be set in Section
\ref{sec:parameters}.

\begin{definition}[bad colourings]\label{def:badcolouring}
  Let $\ell>0$ be an integer and $\beta>0$ be a constant.  A colouring $\sigma\in\+C_1$ is
  \emph{$\ell$-bad} if there exist a $\{2,3\}$-tree $T$ in $\-{Lin}(H)$ and vertices $V_{\-{col}}$ such that
  \begin{enumerate}
  \item $\abs{T} =\ell$ and $e_0\in T$;
  \item for every $e\in T$, $\abs{e\cap V_{\-{col}}} = k_2$;
  \item the partial colouring of $\sigma$ restricted to $V_{\-{col}}$
    makes at least $\beta\ell$ hyperedges in $T$ (partially) monochromatic.
  \end{enumerate}
  We say $\sigma\in\+C_1$ is $\ell$-\emph{good} if it is not $\ell$-bad.
\end{definition}

Note that since $T$ is a $\{2,3\}$-tree in $\-{Lin}(H)$ in Definition \ref{def:badcolouring}, all
hyperedges in $T$ are disjoint.

We show that the fraction of bad proper colourings among all proper colourings in $\+C_1$ is small.
This allows us to throw away bad colourings in the estimates later.

\begin{lemma}\label{lem:outside}
  Let $\Delta\ge 2$ and $0<k_2<k_1\le k$ all be integers.  Let $0<\beta<1$ be a constant.  Let
  $H(V,\+E)$ be a hypergraph with pinnings $\+P$, where the maximum degree is $\Delta$ and
  $k_1\le \abs{e}\le k$ for every $e\in\+E$.  If $q^{1-k_2} < \beta$,
  $q>\left( ek\Delta \right)^{\frac{1}{k_1-2}}$, and $q> \badthreshold$ where
  $C^{\beta(k_2-1)}\ge\frac{e^{\beta+3}k^3}{\beta^\beta}\cdot\binom{k}{k_2}$, then we have
  \[
    \frac{\abs{\set{\sigma\in\+C_1\cmid \text{$\sigma$ is
            $\ell$-bad}}}}{\abs{\+C_1}}\le\badfraction.
  \]
\end{lemma}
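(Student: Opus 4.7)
\smallskip

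The plan is to perform a union bound over the combinatorial data that witnesses badness: the $\{2,3\}$-tree $T$, the selection $V_{\mathrm{col}}$ picking $k_2$ vertices from each $e\in T$, and a subset $S\subseteq T$ of size $\lceil\beta\ell\rceil$ whose edges are actually monochromatic on their selected vertices. For a fixed witness, I bound the probability under $\mu_{\+C_1}$ that every $e\in S$ is monochromatic on its $k_2$ vertices using Theorem~\ref{thm:LLL-prob}. Since distinct hyperedges in $T$ are vertex-disjoint (distance $\ge 2$ in $\mathrm{Lin}(H)$ means no shared vertex), the corresponding bad events depend on disjoint vertex sets, so the product-measure probability factorises cleanly.

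First I would count witnesses. The line graph $\mathrm{Lin}(H)$ has maximum degree at most $k\Delta$, so Corollary~\ref{cor:2-tree-number} bounds the number of $\{2,3\}$-trees of size $\ell$ containing $e_0$ by $\tfrac12(ek^3\Delta^3)^{\ell-1}$. For each such $T$ the number of admissible $V_{\mathrm{col}}$ is at most $\binom{k}{k_2}^{\ell}$ since every edge has size at most $k$, and the number of subsets $S\subseteq T$ of size $\lceil\beta\ell\rceil$ is at most $\binom{\ell}{\lceil\beta\ell\rceil}\le(e/\beta)^{\beta\ell}$.

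Next, for a fixed triple $(T,V_{\mathrm{col}},S)$ let $B_e$ be the event that all $k_2$ vertices of $e\cap V_{\mathrm{col}}$ share a colour, and let $B=\bigcap_{e\in S}B_e$. Disjointness of edges in $T$ gives $\Pr[B]=q^{\beta\ell(1-k_2)}$ under the uniform product measure. The hypothesis $q>(ek\Delta)^{1/(k_1-2)}$ lets us set $x(e)=1/(k\Delta)$ and verify the LLL condition \eqref{eqn:LLL} for the pinned instance defining $\+C_1$; Theorem~\ref{thm:LLL-prob} then yields
\[
\mu_{\+C_1}(B)\le q^{\beta\ell(1-k_2)}\tp{1-\tfrac{1}{k\Delta}}^{-|\Gamma(B)|}\le q^{\beta\ell(1-k_2)}e^{2\beta\ell},
\]
using $|\Gamma(B)|\le\beta\ell\cdot k_2\Delta\le\beta\ell k\Delta$.

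Combining gives
\[
\frac{|\{\sigma\in\+C_1:\sigma\text{ is }\ell\text{-bad}\}|}{|\+C_1|}\le\tfrac12(ek^3\Delta^3)^{\ell-1}\cdot\binom{k}{k_2}^{\ell}\cdot(e/\beta)^{\beta\ell}\cdot q^{\beta\ell(1-k_2)}\cdot e^{2\beta\ell}.
\]
Taking $\ell$-th roots, requiring this to be at most $e^{-\ell}$ amounts to
\[
q^{\beta(k_2-1)}\ge e^{3+\beta}k^3\beta^{-\beta}\binom{k}{k_2}\cdot\Delta^{3},
\]
which matches the threshold $q>C\Delta^{3/(\beta(k_2-1))}$ with $C^{\beta(k_2-1)}\ge\tfrac{e^{\beta+3}k^3}{\beta^\beta}\binom{k}{k_2}$ exactly as in the hypothesis. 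The main obstacle is keeping the four sources of multiplicative blow-up (number of trees, per-edge selections, the entropy factor $\binom{\ell}{\beta\ell}$, and the LLL correction) all simultaneously dominated by the decay $q^{\beta\ell(1-k_2)}$; tracking constants carefully (rather than any probabilistic subtlety) is where the threshold constants in the statement get fixed.
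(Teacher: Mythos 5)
Your argument is essentially the paper's: fix the witness $(T,V_{\mathrm{col}})$, use disjointness of the edges of a $\{2,3\}$-tree to compute the product-measure probability, lift it to $\mu_{\mathcal{C}_1}$ via Theorem~\ref{thm:LLL-prob} with $x(e)=1/(k\Delta)$ (verified exactly as you do, from $q>(ek\Delta)^{1/(k_1-2)}$), and then union bound over trees via Corollary~\ref{cor:2-tree-number} and over the $\binom{k}{k_2}^{\ell}$ choices of $V_{\mathrm{col}}$. The only structural difference is that the paper bounds the count $Z$ of monochromatic edges of $T$ by a multiplicative Chernoff bound (this is where the hypothesis $q^{1-k_2}<\beta$ enters), whereas you union over the subset $S$ of $\lceil\beta\ell\rceil$ monochromatic edges and pay the entropy factor $\binom{\ell}{\lceil\beta\ell\rceil}\le (e/\beta)^{\beta\ell}$; these are interchangeable (your route does not even need $q^{1-k_2}<\beta$). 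One bookkeeping caveat: with your crude correction $(1-\frac{1}{k\Delta})^{-|\Gamma(B)|}\le e^{2\beta\ell}$, the per-$\ell$ factor you must beat is $e^{2+3\beta}\beta^{-\beta}k^3\binom{k}{k_2}\Delta^3$, which exceeds the hypothesised $e^{3+\beta}\beta^{-\beta}k^3\binom{k}{k_2}\Delta^3$ whenever $\beta>1/2$, so your closing claim that the computation ``matches the threshold exactly'' is not quite right across the full range $0<\beta<1$. This is easily repaired: since $k_2\le k-1$ and $\Delta\ge 2$, the event $B$ depends on at most $\lceil\beta\ell\rceil k_2\Delta\le \lceil\beta\ell\rceil(k\Delta-1)$ hyperedges, so the correction is at most $e^{\lceil\beta\ell\rceil}$ rather than $e^{2\beta\ell}$ (this mirrors the paper's $(1-\frac{1}{k\Delta})^{-(k\Delta-1)\ell}\le e^{\ell}$), after which the required inequality $q^{\beta(k_2-1)}\ge e^{2+2\beta}\beta^{-\beta}k^3\binom{k}{k_2}\Delta^3$ is implied by the stated hypothesis for all $\beta\in(0,1)$.
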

\begin{proof}
  Fix a $\{2,3\}$-tree $T=\set{e_1, e_2, \cdots, e_\ell}$ in $\-{Lin}(H)$ of size $\ell$ and
  $V_{\-{col}}$ such that for every $e\in T$, $\abs{e\cap V_{\-{col}}} =k_2$.  We say $\sigma$ is
  $\ell$-bad with respect to $T$ and $V_{\-{col}}$ if $\sigma$, $T$, and $V_{\-{col}}$ satisfy the
  requirments in Definition~\ref{def:badcolouring}.  Denote by $Z_{V_{\-{col}}}$ or simply $Z$ the
  number of (partially) monochromatic hyperedges by first drawing from $\mu_{\+C_1}$ and then
  revealing the colours of vertices in $V_{\-{col}}$.  We use Theorem \ref{thm:LLL-prob} to bound
  the probability that $Z\ge\beta\ell$.

  Indeed, $\mu_{\+C_1}$ can be viewed as the uniform distribution over proper colourings of an
  instance where $v$ is pinned to colour $c_1$.  In this instance, we have that
  $k_1-1\le\abs{e}\le k$ for every $e\in\+E$.  Hence, in the product distribution $\Pr{\text{$e$ is
      monochromatic}}\le q^{2-k_1} \le \frac{1}{ek\Delta}$ for every $e\in\+E$ by assumption.  We
  set $x(e)=\frac{1}{k\Delta}$ in Theorem~\ref{thm:LLL-prob} and verify \eqref{eqn:LLL}:
  \begin{align*}
    x(e)\prod_{e'\in\Gamma(e)}\tp{1-x(e')} \ge \frac{1}{k\Delta}\left( 1-\frac{1}{k\Delta} \right)^{k\Delta-1} 
    \ge \frac{1}{ek\Delta} \ge \Pr{\text{$e$ is monochromatic}}.
  \end{align*}

  In the product distribution (where all vertices are independent), for $e\in T$, the vertices in
  $e\cap V_{\-{col}}$ are monochromatic with probability $p^*\defeq q^{1-k_2} < \beta$.  Since $T$
  is a $\{2,3\}$-tree in $\-{Lin}(H)$, all edges are disjoint and these events are independent in
  the product distribution.  Hence, by a multiplicative Chernoff bound with mean $p^* \ell$ and
  $\gamma = \frac{\beta}{p^*}-1>0$,
  \begin{align*}
    \Pr{Z\ge \beta\ell}  = \Pr{Z \ge (1+\gamma)p^* \ell} 
    \le \left( \frac{e^{\gamma}}{(1+\gamma)^{1+\gamma}} \right)^{p^* \ell} \le \left( \frac{ep^*}{\beta} \right)^{\beta\ell}.
  \end{align*}
  For each edge $e\in T$, there are at most $k(\Delta-1)+1\le k\Delta-1$ edges that intersect with
  $e$ (including itself).  The random variable $Z$ thus depends on at most $(k\Delta-1)\ell$
  hyperedges in $\mu_{\+C_1}$.  By Theorem~\ref{thm:LLL-prob} with $x(e)=\frac{1}{k\Delta}$,
  \begin{align*}
    \mu_{\+C_1}(Z\ge \beta\ell) & \le \Pr{Z\ge \beta\ell}\cdot\left( 1-\frac{1}{k\Delta} \right)^{-(k\Delta-1)\ell}\\
                                & \le \left( \frac{ep^*}{\beta} \right)^{\beta\ell}\cdot e^{\ell} = \left( \frac{e^{1+1/\beta}p^*}{\beta} \right)^{\beta\ell}.
  \end{align*}

  To finish the argument, we still need to account for all $\{2,3\}$-trees and $V_{\-{col}}$ by an
  union bound.  Since the maximum degree in $\-{Lin}(H)$ is $k\Delta$, the total number of
  $\{2,3\}$-trees containing $e_0$ of size $\ell$, by Corollary \ref{cor:2-tree-number}, is at most
  $\frac{\tp{e(k\Delta)^3}^{\ell}}{2}$.  For a fixed $T$, since all edges in $T$ are disjoint, the
  number of possible $V_{\-{col}}$ is at most $\binom{k}{k_2}^\ell$.
  
  Putting everything together, we have that
  \begin{align*}
    \mPr[\sigma\sim\mu_{\+C_1}]{\text{$\sigma$ is $\ell$-bad}} 
    & \le \left( \frac{e^{1+1/\beta}p^*}{\beta} \right)^{\beta\ell} 
      \cdot \frac{\tp{e(k\Delta)^3}^{\ell}}{2}\cdot\binom{k}{k_2}^\ell\\
    & \le \left( \frac{e^{\beta+1}}{\beta^\beta}\cdot ek^3\cdot\binom{k}{k_2}\cdot q^{\beta-\beta k_2} \Delta^3\right)^{\ell}.
  \end{align*}
  By assumption, 
  \begin{align*}
    q^{\beta k_2-\beta} & \ge C^{\beta (k_2-1)} \Delta^{3}\ge
                          \frac{e^{\beta+2}}{\beta^\beta}\cdot ek^3\cdot \binom{k}{k_2}\cdot\Delta^{3}.
  \end{align*}
  Combining these two inequalities finishes the proof.
\end{proof}
  
Let $(x,y)\in\+T$ be a pair of partial colourings defined on $V_{\-{col}}$.
We are now going to prove some structural properties of $(x,y)$.  Say an edge $e\in\+E$ such that
$e\cap V_{\-{col}}\ne\varnothing$ is \emph{blocked} by $(x,y)$ if one of the following holds
\begin{enumerate}
\item $x(u)\ne y(u)$ for some $u\in e$.
\item $\abs{e\cap V_{\-{col}}}=k_2$ and $e$ is not satisfied by both $x$ and $y$.
\end{enumerate}
These two cases are called type 1 and type 2 errors respectively
in~\cite{Moi19}. Notice that all edges in $\Gamma(v)$ are always blocked, and in particular, $e_0$ is always blocked.

Let us denote the set of edges blocked by $(x,y)$ as $\+B_{x,y}$.  Then $\+B_{x,y}$ always contains
a large $\{2,3\}$-tree.
\begin{lemma}\label{lem:blocked_set}
  Let $(x,y)\in \+T$ be a pair of partial colourings in the coupling tree defined on $V_{\-{col}}$
  with corresponding $V_1$. Assume $\abs{V_{\-{col}}}=L$.  There exists a $\{2,3\}$-tree
  $T \subseteq \+B_{x,y}$ in $\-{Lin}(H)$ of size at least $\frac{L}{k^3\Delta^2}$ containing~$e_0$.
\end{lemma}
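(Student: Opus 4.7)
The plan is to reduce to Lemma~\ref{lem:2-tree} by showing that $\+B_{x,y}$ contains $e_0$, induces a connected subgraph of $\-{L}^2(H)$, and satisfies $\abs{\+B_{x,y}}\ge L/(k^2\Delta)$. Invoking Lemma~\ref{lem:2-tree} with $e^*=e_0$ then produces a $\{2,3\}$-tree $T\subseteq\+B_{x,y}$ in $\-{Lin}(H)$ containing $e_0$ of size at least $\abs{\+B_{x,y}}/(k\Delta)\ge L/(k^3\Delta^2)$.

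For the structural claims, I would induct on the steps of Algorithm~\ref{alg:coupling} leading up to the state $(x,y)$, maintaining two invariants: (i)~every $w\in V_1$ lies in some currently blocked edge, and (ii)~the current set of blocked edges is connected in $\-{L}^2(H)$ and contains $e_0$. The base case is immediate: initially $V_1=\{v\}$ and $e_0$ is type~$1$ blocked since $x(v)=c_1\ne c_2=y(v)$. In each step, when the algorithm processes an edge $e$ and colours a new vertex $u\in e\cap V_2$, the loop guard supplies some $w\in V_1\cap e$, and invariant~(i) before the step gives a blocked edge $b\ni w$. New blockings in this step come in two flavours: \emph{type~$1$} edges in $\Gamma(u)$ (when $c_x\ne c_y$), and \emph{type~$2$} edges $e''\in\Gamma(u)$ that reach $\abs{e''\cap V_{\-{col}}}=k_2$ in the crowding loop without being both-satisfied. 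Every such new blocked edge contains $u\in e$, so it is at distance at most $2$ from $b$ in $\-{Lin}(H)$, i.e.\ adjacent to $b$ in $\-{L}^2(H)$, preserving invariant~(ii). The new entrants to $V_1$ are $u$ (in the type~$1$ case) or $e''\setminus V_{\-{col}}$ (in the type~$2$ case), and each sits inside a newly blocked edge, preserving invariant~(i). One must also verify that blockings persist throughout the remaining execution: this is trivial for type~$1$, and for type~$2$ it follows because the algorithm moves $e''\setminus V_{\-{col}}$ into $V_1$ and never selects a $V_1$-vertex to colour, so $\abs{e''\cap V_{\-{col}}}$ stays at $k_2$ and $x,y$ remain frozen on $e''$.

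For the size lower bound, I would show that every $u\in V_{\-{col}}$ is \emph{covered} by some $b\in\+B_{x,y}$, in the sense that $u$ lies in an edge sharing a vertex with $b$. The vertex $v$ is covered by $e_0$, and for $u\in V_{\-{col}}\setminus\{v\}$ the edge $e(u)$ that triggered $u$'s colouring satisfies $e(u)\cap V_1\ne\varnothing$; for any $w$ in this intersection, invariant~(i) at the moment of processing supplies a blocked edge $b\ni w$, so $b$ covers $u$ via $e(u)$. A single blocked edge has at most $k$ vertices, each in at most $\Delta$ edges of size at most $k$, so it can cover at most $k^2\Delta$ vertices, giving $L\le k^2\Delta\cdot\abs{\+B_{x,y}}$. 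The main subtlety throughout is the persistence of type~$2$ blockings noted above: without it, an invariant established at an intermediate step could fail to carry over to the state $(x,y)$ we actually care about, so this monotonicity check is the essential technical step that makes both the connectedness and the covering arguments apply at the target state.
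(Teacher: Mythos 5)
Your proposal is correct and follows essentially the same route as the paper: show that $\+B_{x,y}$ contains $e_0$, is connected in $\-{L}^2(H)$ (via the observation that blocked edges persist and every newly blocked edge contains the just-coloured vertex, hence is within distance $2$ of an existing blocked edge), is of size at least $L/(k^2\Delta)$, and then invoke Lemma~\ref{lem:2-tree}. The only cosmetic difference is the size bound, which the paper obtains through $\abs{V_1}\le k\abs{\+B_{x,y}}$ combined with Lemma~\ref{lem:coupling}(1) ($V_{\-{col}}\subseteq\Gamma_{\-{ver}}(V_1)$), while you argue the equivalent covering bound of $k^2\Delta$ coloured vertices per blocked edge directly; your explicit verification that type-2 blockings persist is a point the paper asserts without detail.
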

\begin{proof}
  We first claim that $\+B_{x,y}$ is connected in $\-{L}^2(H)$ by inducting on $L$.  Once an edge is
  blocked during Algorithm~\ref{alg:coupling}, it will remain blocked till the end.  If $u$ is the
  next vertex to be coloured in Algorithm~\ref{alg:coupling}, then $u$ must be adjacent to some
  vertex $u'\in V_1$, and $u'$ is in some edge $e$ blocked by the current $(x,y)$.  Therefore any
  newly blocked edge caused by colouring $u$ has distance at most $2$ to $e$.

  Since $e_0$ is always blocked, $e_0\in \+B_{x,y}$.  By Lemma~\ref{lem:2-tree}, there exists a
  $\{2,3\}$-tree $T\subseteq \+B_{x,y}$ in $\-{Lin}(H)$ such that
  $\abs{T}\ge \frac{\abs{\+B_{x,y}}}{k\Delta}$.  Next we claim that
  $\abs{\+B_{x,y}}\ge \frac{L}{k^2\Delta}$.  This is because that every vertex in $V_1$ belongs to
  some blocked edge.  Hence $\abs{V_1}\le k \abs{\+B_{x,y}}$.  By item (1) of Lemma
  \ref{lem:coupling}, $V_{\-{col}}\subseteq \Gamma_{\-{ver}}(V_1)$.  It implies that
  $L = \abs{V_{\-{col}}} \le \abs{\Gamma_{\-{ver}}(V_1)}\le k\Delta \abs{V_1}$.  Combining these
  facts yields the lemma.
\end{proof}

Recall that $\+T_L$ is the tree obtained from $\+T$ by truncating at depth $L$, and $\+L(\+T_L)$ is
its leaves.
Because of \textbf{Constraints 2}, for every proper colouring $\sigma\in\+C_1$, it holds that
\begin{equation}\label{eq:sumone}
  \sum_{\substack{(x,y)\in\+L(\+T_L):\;\sigma\models x}}p_{x,y}^x = 1.
\end{equation}
However, in \textbf{Constraints 1}, our linear program only contains constraints for those
$\ul{p}_{x,y}^x$ and $\ul{p}_{x,y}^y$ whose $V_{\-{col}}$ is of size strictly smaller than $L$.  The
next lemma shows that, for a $\ell$-good colouring $\sigma$, solving
$\ul{p}_{x,y}^x, \ul{p}_{x,y}^y$ provides a good approximation for the identity \eqref{eq:sumone}.

\newcommand{\Lvalue}{k^3\Delta^2\ell}

\begin{lemma}\label{lem:inside}
  Let $0<\beta<1$ be a constant.  Let $H=(V,\+E)$ be a hypergraph with pinnings $\+P$ and maximum
  degree $\Delta$ such that $\abs{e}\le k$ for all $e\in\+E$.  Let $\sigma\in\+C_1$ be $\ell$-good
  where $\ell$ is an integer.  If $\set{\widehat p_{x,y}^x}$ is a collection of values satisfying
  all our linear constraints, with $t^*=\tvalue$ in \textbf{Constraints 3} up to level $L=\Lvalue$,
  then it holds that
  \begin{align}\label{eqn:coupling-failure}
    \sum_{\substack{(x,y)\in\+L(\+T_L):\;\abs{V_{\-{col}}}<L\\
    \mbox{~\scriptsize and }\sigma\models x}}\widehat p_{x,y}^x \ge 1-\couplingfailure.
  \end{align}
\end{lemma}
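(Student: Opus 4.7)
The plan is to reduce the claim, using Constraints~2, to bounding the LP mass on leaves of $\+T_L$ whose $|V_{\-{col}}|$ equals $L$ (i.e.\ those at maximum depth, not naturally halted). A level-by-level induction from the root, using the identity $\widehat p_{x,y}^x=\sum_{c'\in[q]}\widehat p_{x^{u\gets\sigma(u)},y^{u\gets c'}}^{x^{u\gets\sigma(u)}}$ of Constraints~2 applied along the $x$-branch with $c=\sigma(u)$, shows that $\sum_{(x,y)\in\+L(\+T_L),\,\sigma\models x}\widehat p_{x,y}^x=1$. The lemma is therefore equivalent to
\[\sum_{(x,y)\in\+L(\+T_L),\,\sigma\models x,\,|V_{\-{col}}|=L}\widehat p_{x,y}^x\le\couplingfailure.\]

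To bound the right-hand side, I extract a structural witness for each such depth-$L$ leaf $(x,y)$. By Lemma~\ref{lem:blocked_set}, $\+B_{x,y}$ contains a $\{2,3\}$-tree $T$ in $\-{Lin}(H)$ of size at least $L/(k^3\Delta^2)=\ell$ with $e_0\in T$; trim $T$ to exactly $\ell$ edges. Using the hypothesis that $\sigma$ is $\ell$-good, I claim strictly more than $(1-\beta)\ell$ edges of $T$ must be of \emph{type 1} (contain some $u$ with $x(u)\ne y(u)$). Otherwise at least $\beta\ell$ edges of $T$ are of \emph{type 2}: they have $|e\cap V_{\-{col}}|=k_2$ and $x|_{e\cap V_{\-{col}}}$ monochromatic, hence $\sigma|_{e\cap V_{\-{col}}}$ is also monochromatic since $\sigma\models x$. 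Completing these coloured vertices to an arbitrary $k_2$-subset of each type-1 edge of $T$ produces a $V'_{\-{col}}$ with $|e\cap V'_{\-{col}}|=k_2$ for every $e\in T$ and at least $\beta\ell$ monochromatic edges under $\sigma|_{V'_{\-{col}}}$, contradicting $\ell$-goodness via Definition~\ref{def:badcolouring}.

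For each type-1 edge $e\in T$, I pick a failure witness vertex $u_e\in e$ with $x(u_e)\ne y(u_e)$ and set $c_e=y(u_e)$; the $\{2,3\}$-tree property guarantees that distinct $e$'s give distinct $u_e$'s. Fix a structural tuple $(T,S,(u_e)_{e\in S},(c_e)_{e\in S})$ with $S$ a specified $\lceil(1-\beta)\ell\rceil$-subset of type-1 edges. I then bound the total LP mass of leaves matching this tuple by propagating mass through the coupling tree: at each of the $|S|$ forced-failure transitions, Constraints~3 multiplies the mass by a factor at most $5/t^*$, while Constraints~2 preserves the total mass summed over $y$-branches at every free (non-forced) transition. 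This yields
\[\sum_{(x,y)\text{ matching }(T,S,(u_e),(c_e))}\widehat p_{x,y}^x\;\le\;\bigl(5/t^*\bigr)^{\lceil(1-\beta)\ell\rceil}.\]

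The lemma then follows by a union bound over the structural tuples, with the number of $\{2,3\}$-trees of size $\ell$ containing $e_0$ counted via Corollary~\ref{cor:2-tree-number} and the obvious enumerations for $S$, $(u_e)$, and $(c_e)$. Plugging in $t^*=\tvalue$ and invoking the hypothesis $q>\marginalthreshold$ (in particular $q>\couplingthreshold$) from Theorem~\ref{thm:marginal} should collapse the resulting product to $\couplingfailure$. The hard part will be this final quantitative verification: the exponent $(4-\beta)/((1-\beta)(k_1-k_2-1))$ in $\couplingthreshold$ and the choice of $t^*$ are precisely calibrated so that the per-failure decay factor $(5/t^*)^{(1-\beta)\ell}$ just dominates the combinatorial overhead from the tree enumeration $(ek^3\Delta^3)^\ell$ and the vertex/colour enumeration per type-1 edge, leaving the residual exponential bound.
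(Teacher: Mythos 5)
Your structural reduction coincides with the paper's: use \textbf{Constraints 2} to reduce the claim to bounding the LP mass on depth-$L$ leaves, extract from $\+B_{x,y}$ a $\{2,3\}$-tree $T$ of size $\ell$ containing $e_0$ via Lemma~\ref{lem:blocked_set}, and use $\ell$-goodness of $\sigma$ to force more than $(1-\beta)\ell$ edges of $T$ to be of type~1 (your explicit completion of the coloured sets to $k_2$-subsets per edge, using disjointness of the tree's edges, is correct and in fact spells out a point the paper glosses over). The genuine gap is in your final counting step. You bound the mass per structural tuple $(T,S,(u_e),(c_e))$ by $(5/t^*)^{\lceil(1-\beta)\ell\rceil}$ and then union over tuples, which costs an additional factor of order $\binom{\ell}{\lceil(1-\beta)\ell\rceil}\,k^{\lceil(1-\beta)\ell\rceil}\,(q-1)^{\lceil(1-\beta)\ell\rceil}$ beyond the tree count $\tfrac12(ek^3\Delta^3)^\ell$. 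But $t^*=\tvalue$ is calibrated to cancel \emph{only} the tree count, since $(5/t^*)^{(1-\beta)\ell}(ek^3\Delta^3)^\ell= e^{-\ell}$; there is no slack for a per-failure factor of $k(q-1)$ nor for the $2^\ell$ from choosing $S$. Crucially, the hypotheses impose no upper bound on $q$ at all, so the residual factor $(k(q-1))^{(1-\beta)\ell}2^\ell$ cannot be absorbed by any quantity independent of $q$, and your final bound is not $\couplingfailure$. Dropping the colours $c_e$ from the tuple does not help: summing the written \textbf{Constraints 3} over the $q-1$ discrepant colours at a forced vertex yields the same $(q-1)\cdot 5/t^*$ per step, merely relocating the loss.

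The paper avoids all of these enumerations by keeping the witnesses inside a probability computation rather than in the union bound. It converts the LP values into an actual stochastic process: conditioned on the output being $\sigma$, the $y$-side colour at the next vertex $u$ is drawn with probability $\widehat p_{x^{u\gets\sigma(u)},y^{u\gets c}}^{x^{u\gets\sigma(u)}}/\widehat p_{x,y}^x$ (well defined by \textbf{Constraints 2}), giving a measure $\widehat\mu$ on pairs of partial colourings under which the depth-$L$ mass is exactly the probability of reaching depth $L$, and hence at most the sum over $\{2,3\}$-trees $T$ of size $\ell$ containing $e_0$ of the probability that $T\subseteq\+B_{X,Y}$. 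For a fixed $T$, each of the at least $\floor{(1-\beta)\ell}$ forced edges must contain a vertex where the coupling fails; since the edges of $T$ are pairwise disjoint, these failures occur at distinct vertices, the corresponding conditional probabilities multiply, and each is bounded by $5/t^*$ without summing over which wrong colour appeared or enumerating which vertex failed or which subset of edges is forced. The only union bound is then over the trees, counted by Corollary~\ref{cor:2-tree-number}, which is precisely what $t^*$ is sized against. To repair your argument you would need to replace the per-colour, per-vertex enumeration by this per-edge failure-probability bound within the mass-propagation (equivalently, within $\widehat\mu$), as the paper does.
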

\begin{proof}
  We construct a new coupling process similar to Algorithm \ref{alg:coupling}, and show the
  left-hand side of \eqref{eqn:coupling-failure} is the probability of an event defined by the new
  process.  We modify $\mathbb{S}$ in the following two ways: (1) condition on the final output
  being $\sigma$; (2) use probabilities induced by $\set{\widehat p_{x,y}^x}$ instead of
  $\set{\-p_{x,y}^x}$.  To be more specific, consider each step where one needs to extend $(x,y)$
  defined on $V_{\-{col}}$ to a new vertex $u$.  Call the new colourings $(x',y')$.  Since the
  output $\sigma$ is fixed, we simply reveal $x'(u)=\sigma(u)$.  In the original $\mathbb{S}$, the
  colour of $y'(u)$ is drawn according to an optimal coupling of $(x',y')$ on $u$.  Here, we set
  $y'(u)$ to colour $c$ with probability
  $\frac{\widehat p_{x^{u\gets\sigma(u)},y^{u\gets c}}^{x^{u\gets\sigma(u)}}}{\widehat p_{x,y}^x}$.
  This is well-defined since $\set{\widehat p_{x,y}^x}$ satisfies \textbf{Constraints 2}.  If this
  process reaches depth $L$, then it stops.

  The output of the new coupling defines a distribution over pairs of partial colourings $(x,y)$
  such that $\sigma \models x$ and we denote it by $\widehat\mu$.  We claim that
  \begin{equation}\label{eq:sumV1}
    \sum_{\substack{(x,y)\in\+L(\+T_L):\;\abs{V_{\-{col}}}=L\\\mbox{\scriptsize and } \sigma\models x}}\widehat p_{x,y}^x \le
    \sum_{\substack{\mbox{\scriptsize $\{2,3\}$-tree }T:\\\abs{T} = \ell,\;e_0\in T}}\mPr[(X,Y)\sim\widehat \mu]{T\subseteq \+B_{X,Y}}.
  \end{equation}

  Each summand on the left-hand side of \eqref{eq:sumV1} is the probability that our new coupling reaches some
  $(x,y)$ with $\abs{V_{\-{col}}}= L$.  Lemma~\ref{lem:blocked_set} implies that the set $\+B_{x,y}$
  of blocked edges contains a $\{2,3\}$-tree $T$ of size at least $\frac{L}{k^3\Delta^2}=\ell$.
  Thus the probability of reaching vertices of depth $L$ is upper bounded by the right-hand side of
  \eqref{eq:sumV1}.

  Fix a $\{2,3\}$-tree $T$ of size $\ell$.  Since $\sigma$ is $\ell$-good, whatever the choice of
  $V_{\-{col}}$ is, at least a $(1-\beta)$ fraction of hyperedges in $T$ must not be monochromatic
  on the $X$ side.  However, if $T\subseteq\+B_{X,Y}$, then at least $\floor{(1-\beta)\abs{T}}$
  hyperedges satisfy (1) $\sigma(v)\ne Y(v)$ for some $v\in e\cap V_{\-{col}}$, or (2)
  $\abs{e\cap V_{\-{col}}} = k_2$ and $\sigma|_{V_{\-{col}}}=X|_{V_{\-{col}}}$ satisfies $e$ but $Y$
  does not satisfy $e$.  It is clear that case (2) implies case (1), since if one partial colouring
  satisfies $e$ and another one does not, then they must differ at some $v\in e\cap V_{\-{col}}$.
  We use $T'=\set{e_1,e_2,\dots,e_{\abs{T'}}}$ to denote these hyperedges in $T$.  For each
  hyperedge in $T'$, there must be at least one vertex on which the (modified) coupling fails, which happens with
  probability at most $5/t^*$ due to \textbf{Constraints 3}.  Since $T$ is a $\{2,3\}$-tree in
  $\-{Lin}(H)$, all of these failed couplings are for distinct vertices and thus happen
  independently.  Hence, in this new coupling, the probability that every edge in $T'$ is blocked
  due to at least one failed vertex is at most
  $\left( \frac{5}{t^*} \right)^{\abs{T'}}\le \left( \frac{5}{t^*} \right)^{\floor{(1-\beta)\ell}}$.

  We still need to apply a union bound.  The number of $\{2,3\}$-trees of size $\ell$ in
  $\-{Lin}(H)$ and containing $e_0$ is, by Corollary \ref{cor:2-tree-number}, at most
  $\frac{\tp{e k^3\Delta^3}^{\ell}}{2}$.  Therefore the right-hand side of \eqref{eq:sumV1} is at
  most
  \begin{align}\label{eq:sumV1-bound}
    \sum_{\substack{\mbox{\scriptsize $\{2,3\}$-tree }T:\\\abs{T} = \ell,\;e_0\in T}}\mPr[(X,Y)\sim\widehat \mu]{T\subseteq \+B_{X,Y}}
    \le \left( \frac{5}{t^*} \right)^{\floor{(1-\beta)\ell}} \cdot \frac{\tp{e k^3\Delta^3}^{\ell}}{2}
    \le \couplingfailure, 
  \end{align}
  since we have chosen $t^*=\tvalue$ in \textbf{Constraints 3}.  The lemma follows by combining
  \eqref{eq:sumone}, \eqref{eq:sumV1}, and \eqref{eq:sumV1-bound}.
\end{proof}

Note that in Lemma~\ref{lem:inside} we do not explicitly require a lower bound of $q$ nor a lower
bound on the size of the edges.  However, these requirements are implicit since we have set $t^*$ to
be large in \textbf{Constraints 3}.

Lemma \ref{lem:outside} and Lemma \ref{lem:inside} also hold for any $\sigma\in\+C_2$.  Now we can
prove that any solution to the LP provides accurate estimates.

\begin{lemma}\label{lem:ratiobound}
  Assume the settings of Lemma~\ref{lem:outside} and Lemma~\ref{lem:inside}.  If the linear program
  up to level $L$ has a solution $\set{\widehat p_{x,y}^x,\widehat p_{x,y}^y}$ with guessed bounds
  $\set{\widehat \ratiolb,\widehat \ratioub}$, then it holds
  \[
    e^{-\gamma}\widehat\ratiolb \le \frac{\abs{\+C_1}}{\abs{\+C_2}} \le e^{\gamma} \widehat\ratioub,
  \]
  where $\gamma=\gammavalue$.
\end{lemma}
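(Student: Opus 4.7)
The plan is to compare the LP solution to the true cardinalities through a telescoping identity forced by \textbf{Constraints 2}. As already observed in the text, a straightforward induction on tree depth shows that $\sum_{(x,y)\in\+L(\+T_L):\,\sigma\models x}\widehat p_{x,y}^x=1$ for every $\sigma\in\+C_1$, and symmetrically on the $y$-side. Swapping summations yields the key decomposition
\[
  |\+C_1|=\sum_{(x,y)\in\+L(\+T_L)}\widehat p_{x,y}^x\,|\+C_x|,\qquad |\+C_2|=\sum_{(x,y)\in\+L(\+T_L)}\widehat p_{x,y}^y\,|\+C_y|.
\]
I would then split each sum into a \emph{resolved} part (over leaves with $\abs{V_{\-{col}}}<L$) and a \emph{truncated} part (over leaves with $\abs{V_{\-{col}}}=L$), calling them $A_1,A_2$ and $B_1,B_2$ respectively.

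On the resolved part, \textbf{Constraints 1} applies term by term and gives
$\widehat\ratiolb\,\widehat p_{x,y}^y\,|\+C_y|\le \widehat p_{x,y}^x\,|\+C_x|\le \widehat\ratioub\,\widehat p_{x,y}^y\,|\+C_y|$,
hence $\widehat\ratiolb A_2\le A_1\le \widehat\ratioub A_2$. All the work is therefore in controlling the truncated masses $B_1,B_2$. Swapping summations in $B_1$ yields
\[
  B_1=\sum_{\sigma\in\+C_1}\;\sum_{\substack{(x,y)\in\+L(\+T_L)\\\sigma\models x,\;\abs{V_{\-{col}}}=L}}\widehat p_{x,y}^x,
\]
and I would set $\ell=L/(k^3\Delta^2)$ and split $\+C_1$ into $\ell$-good and $\ell$-bad colourings. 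For each $\ell$-good $\sigma$, Lemma~\ref{lem:inside} (combined with the resolved-plus-truncated decomposition of the identity $\sum_{(x,y):\sigma\models x}\widehat p_{x,y}^x=1$) bounds the inner sum by $e^{-\ell}$; for each $\ell$-bad $\sigma$ the inner sum is at most $1$, and Lemma~\ref{lem:outside} limits the number of such colourings to $e^{-\ell}|\+C_1|$. Summing gives $B_1\le 2e^{-\ell}|\+C_1|$, and analogously $B_2\le 2e^{-\ell}|\+C_2|$.

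For the upper bound I then chain
\[
  |\+C_1|=A_1+B_1\le \widehat\ratioub A_2+B_1\le \widehat\ratioub|\+C_2|+2e^{-\ell}|\+C_1|,
\]
which rearranges to $|\+C_1|/|\+C_2|\le \widehat\ratioub/(1-2e^{-\ell})\le e^{4e^{-\ell}}\widehat\ratioub=e^\gamma\widehat\ratioub$, using $(1-x)^{-1}\le e^{2x}$ for $x\le 1/4$ (valid once $L$ is moderately large, which is the regime of interest). The matching lower bound $|\+C_1|/|\+C_2|\ge e^{-\gamma}\widehat\ratiolb$ follows by the mirror calculation, starting from $|\+C_2|\le A_1/\widehat\ratiolb+B_2$ and absorbing $B_2$ into the left-hand side.

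The only real subtlety is that one must perform the good/bad dichotomy on the $\sigma$ side rather than on the $(x,y)$ side, since Lemmas~\ref{lem:outside} and~\ref{lem:inside} classify colourings and not coupling states; this is precisely why the two-kinds-of-errors decomposition advertised in the introduction is needed, with Lemma~\ref{lem:outside} handling the count of bad colourings and Lemma~\ref{lem:inside} handling the truncation probability for good ones. Once the correct swap of summations is set up, everything else is pure bookkeeping with no further appeal to the local lemma.
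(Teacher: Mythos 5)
Your argument is correct and is essentially the paper's own proof repackaged: the paper sets $Z_i$ equal to your resolved sums $A_i$, deduces $\widehat{\ratiolb}\le Z_1/Z_2\le\widehat{\ratioub}$ from \textbf{Constraints 1}, and sandwiches $Z_i\le\abs{\+C_i}\le e^{\gamma}Z_i$ using exactly your good/bad dichotomy over colourings via Lemma~\ref{lem:outside} and Lemma~\ref{lem:inside}, which is the same accounting as your additive bound $B_i\le 2e^{-\ell}\abs{\+C_i}$ followed by rearrangement. The only cosmetic difference is the final elementary estimate, $(1-2e^{-\ell})^{-1}\le e^{4e^{-\ell}}$ for you versus $(1-e^{-\ell})^{-2}\le e^{4e^{-\ell}}$ in the paper, and both hold for every integer $\ell\ge 1$, so your hedge about $L$ needing to be moderately large is unnecessary.
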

\begin{proof}
  Let $\ell=\ellvalue$.  Let
  \begin{align*}
    Z_1\defeq \sum_{\sigma\in\+C_1}
    \sum_{\substack{(x,y)\in\+L(\+T):\;\abs{V_{\-{col}}}<L\\\mbox{ \scriptsize and }\sigma\models x}}\widehat p_{x,y}^x.
  \end{align*}
  Exchange the order of summation:
  \begin{align*}
    Z_1 & = 
          \sum_{(x,y)\in\+L(\+T):\;\abs{V_{\-{col}}}<L}\sum_{\sigma\in\+C_1:\;\sigma\models x}\widehat p_{x,y}^x 
          = \sum_{(x,y)\in\+L(\+T):\;\abs{V_{\-{col}}}<L}\widehat p_{x,y}^x \cdot \abs{\+C_x}.
  \end{align*}
  A similar quantity $Z_2$ can be defined and bounded by replacing $\widehat p_{x,y}^x$ with
  $\widehat p_{x,y}^y$.  \textbf{Constraints~1} impose that for any $(x,y)\in\+L(\+T)$ such that
  $\abs{V_{\-{col}}}<L$,
  \begin{align*}
    \widehat\ratiolb\le\frac{\widehat{p}_{x,y}^x \cdot \abs{\+C_x}}{\widehat{p}_{x,y}^y \cdot \abs{\+C_y}}\le\widehat{\ratioub}.
  \end{align*}
  Hence,
  \begin{align}\label{eqn:ratio-Z}
    \widehat\ratiolb\le \frac{Z_1}{Z_2} \le \widehat\ratioub.
  \end{align}
  We will relate $\abs{\+C_1}$ with $Z_1$.  It is easy to see, by \eqref{eq:sumone}, that
  \begin{align}\label{eqn:lb-C1}
    \abs{\+C_1} & = \sum_{\sigma\in\+C_1}1 = \sum_{\sigma\in\+C_1} \sum_{(x,y)\in\+L(\+T_L):\;\sigma\models x}\widehat{p}_{x,y}^x \ge Z_1.
  \end{align}
  The lower bound is more complicated:
  \begin{align}
    \abs{\+C_1} & = \sum_{\sigma\in\+C_1}1 \le 
                  \tp{1-\badfraction}^{-1} \sum_{\substack{\sigma\in\+C_1:\\\text{\scriptsize $\sigma$ is $\ell$-good}}} 1 \notag\\
                &\le \tp{1-\badfraction}^{-1}\tp{1-\couplingfailure}^{-1}
                  \sum_{\substack{\sigma\in\+C_1:\\\text{\scriptsize $\sigma$ is $\ell$-good}}}
    \sum_{\substack{(x,y)\in\+L(\+T):\;\abs{V_{\-{col}}}<L\\\text{ \scriptsize and }\sigma\models x}} \widehat p_{x,y}^x \notag \\
                & \le e^{\gamma} \sum_{\sigma\in\+C_1}
                  \sum_{\substack{(x,y)\in\+L(\+T):\;\abs{V_{\-{col}}}<L\\\mbox{ \scriptsize and }\sigma\models x}}\widehat p_{x,y}^x = e^{\gamma}Z_1, \label{eqn:ub-C1}
  \end{align}
  where in the first line we use Lemma \ref{lem:outside} and in the second line we use Lemma
  \ref{lem:inside}.  Similar bounds hold with $\abs{\+C_2}$ and $Z_2$.  Combining
  \eqref{eqn:ratio-Z}, \eqref{eqn:lb-C1}, \eqref{eqn:ub-C1}, and their counterparts for
  $\abs{\+C_2}$ and $Z_2$, we have that
  \begin{align*}
    e^{-\gamma}\widehat\ratiolb &\le \frac{\abs{\+C_1}}{\abs{\+C_2}} \le e^{\gamma}\widehat\ratioub. \qedhere
  \end{align*}
\end{proof}


We then set up a binary search, to find $\ratiolb$ and $\ratioub$ that are close enough to the true
ratio.


We are now ready to prove the main theorem of this section.

\begin{proof}[Proof of Theorem~\ref{thm:marginal}]
  Take $L = k^3\Delta^2\ceil{\log\left(\frac{4}{\eps}\right)}$ so that
  $\gamma= \gammavalue\le \eps$.  We claim the true values of $\set{p_{x,y}^x,p_{x,y}^y}$ always
  satisfy our LP.  This is trivial for \textbf{Constraints 1} and \textbf{2}.  For
  \textbf{Constraints 3}, recall that $t^*=\tvalue > k$ and we only need to verify the conditions of
  Lemma~\ref{lem:LLL-prob} with $t=t^*$.  At any point of Algorithm \ref{alg:coupling}, the size of
  an edge is at least $k_1-k_2$.  Hence we set $k'=k_1-k_2$ in Lemma~\ref{lem:LLL-prob}.  By our
  assumption,
  \begin{align*}
    q >\couplingthreshold 
    \ge \left(5e\tp{e^2k^3}^{\frac{1}{1-\beta}}\right)^{\frac{1}{k'-1}} \cdot
    \Delta^{\frac{4-\beta}{(1-\beta)(k'-1)}}  
    = \left( et^*\Delta \right)^{\frac{1}{k'-1}}.
  \end{align*}  

  Fix the colour $c$.  It follows from Lemma~\ref{lem:ratiobound} that for every $c'\in[q]$,
  we can apply the binary search algorithm to obtain a value $p_{c'}$, which is an estimate of
  $\frac{\Pr[\sigma\sim\mu_{\+C}]{\sigma(v)=c'}}{\Pr[\sigma\sim\mu_{\+C}]{\sigma(v)=c}}$ satisfying
  \[
    e^{-\eps}\cdot p_{c'}\le
    \frac{\Pr[\sigma\sim\mu_{\+C}]{\sigma(v)=c'}}{\Pr[\sigma\sim\mu_{\+C}]{\sigma(v)=c}} \le
    e^{\eps}\cdot p_{c'}.
  \]
  We then use $\widehat{p}\defeq\tp{\sum_{c'\in [q]}p_{c'}}^{-1}$ to estimate
  $\Pr[\sigma\sim\mu_{\+C}]{\sigma(v)=c}$.

  For the running time, we treat $\Delta$, $k$, and $q$ as constants.  The size of the linear
  program in the WHILE loop is $\exp(O(L))$.  This is because the coupling tree $\+T$ is $q^2$-ary,
  and therefore it has at most $\exp(O(L))$ vertices up to depth $L$, and we have a pair of
  variables $\ul{p}_{x,y}^x$ and $\ul{p}_{x,y}^y$ for each vertex.  The number of variables and the
  number of constraints is at most $\exp(O(L))$.  Note that for each set of constraints in
  \textbf{Constraints 1}, we need to enumerate all the possible colourings in $V_1$ to compute
  $r_{x,y}$ for every leaf $(x,y)$.  This costs at most $\exp(O(L))$ time.  Hence it takes
  $\exp(O(L))$ time to construct an LP of size $\exp(O(L))$, which requires again $\exp(O(L))$ time
  to solve.  Note that with our choice of $L$, $\exp(O(L))=\poly\left( \frac{1}{\eps} \right)$.  For
  the WHILE loop, we use binary search to find $\ratiolb$ and $\ratioub$.  Thus the number of loops
  of the binary search is at most $\log_2\frac{2}{e^\eps}=\poly\left( \frac{1}{\eps} \right)$.
  Therefore, the total running time of our estimator is $\poly\left( \frac{1}{\eps} \right)$.
\end{proof}

\section{Approximate counting}\label{sec:count}

\newcommand{\acthreshold}{\ensuremath{\left(4(k-k_1^C)\Delta\right)^{\frac{1}{k-k_1^C-1}}}}
\newcommand{\couplingthresholdC}{\ensuremath{C\Delta^{\frac{5-\beta}{(1-\beta)(k_1^C-k_2-1)}}}}

Now we give our FPTAS for the number of proper $q$-colourings of a $k$-uniform hypergraph $H$ with
maximum degree $\Delta$.  The next lemma guarantees us a ``good'' proper colouring $\sigma$ so that
we can use the algorithm in Theorem \ref{thm:marginal} to compute the marginal probability of
$\sigma$.
 
\begin{lemma} \label{lem:order}
  Let $k_1^C$ be an integer such that $0<k_1^C< k-1$.  Let $q\ge\acthreshold$.  Let $v_1,\dots,v_n$
  be an arbitrary ordering of the vertices of a $k$-uniform hypergraph $H=(V,\+E)$.
  There exists a proper colouring $\sigma$ such that for every hyperedge $e\in\+E$, the partial
  colouring $\sigma$ restricted to the first $k-k_1^C$ vertices is not monochromatic.  Moreover,
  $\sigma$ can be found in deterministic polynomial time.
\end{lemma}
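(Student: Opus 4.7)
My plan is to apply the Lov\'asz Local Lemma to a family of carefully chosen bad events, and then invoke a deterministic algorithmic LLL (Moser--Tardos derandomization) to actually construct the colouring in polynomial time.

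For each hyperedge $e \in \+E$, let $\-{first}(e)$ denote the set of the first $k - k_1^C$ vertices of $e$ with respect to the given ordering $v_1,\dots,v_n$, and let $B_e$ be the event that, under a uniformly random colouring $\sigma \in [q]^V$, all vertices in $\-{first}(e)$ receive the same colour. If no $B_e$ occurs, then in particular every hyperedge is non-monochromatic, so $\sigma$ is a proper colouring satisfying the stronger condition required by the lemma.

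Next I would verify the LLL condition. In the uniform product measure one has
\[
  \Pr[B_e] \;=\; q \cdot q^{-(k-k_1^C)} \;=\; q^{-(k-k_1^C-1)}.
\]
The event $B_e$ depends only on the colours of $\-{first}(e)$, so $B_e$ and $B_{e'}$ are independent whenever $\-{first}(e)\cap\-{first}(e')=\emptyset$. Since each vertex lies in at most $\Delta$ hyperedges, the dependency degree satisfies $D \le (k-k_1^C)\Delta$. The symmetric LLL condition $e \cdot \Pr[B_e]\cdot D \le 1$ therefore reduces to $q^{k-k_1^C-1} \ge e(k-k_1^C)\Delta$, and this is comfortably implied by the hypothesis $q \ge \acthreshold$, since the constant $4$ exceeds $e$ with room to spare.

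For the constructive part, the extra slack (the constant $4$ in place of $e$) is exactly what is needed to run a deterministic algorithmic LLL, such as the derandomized Moser--Tardos procedure of Chandrasekaran--Goyal--Haeupler, in polynomial time: under the strengthened condition $e\cdot \Pr[B_e]\cdot D \le 1-\Omega(1)$ their algorithm outputs, in $\poly(n)$ time, an assignment avoiding every $B_e$. Applying this to our events $\{B_e\}_{e\in\+E}$ produces the desired $\sigma$. There is no real obstacle: the only point requiring care is the dependency count, which correctly uses $k-k_1^C$ rather than $k$ because $B_e$ is determined by only the first $k-k_1^C$ vertices of $e$.
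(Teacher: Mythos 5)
Your proposal is correct and follows essentially the same route as the paper: the bad events are exactly "the first $k-k_1^C$ vertices of $e$ are monochromatic" (the paper phrases this as properly colouring the truncated hypergraph $H'$), the probability bound $q^{-(k-k_1^C-1)}$ and dependency count match, and both arguments use the slack in the constant $4$ versus $e$ to invoke a deterministic algorithmic Lov\'asz Local Lemma (the paper cites Moser--Tardos) to find $\sigma$ in polynomial time.
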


\begin{proof}
  Let $k'=k-k_1^C$.  Consider a new hypergraph $H'=(V,\+E')$ on the same vertex set $V$, but for
  every $e\in\+E$, we replace it with its first $k'$ vertices.  We set $x(e)=\frac{1}{k'\Delta}$ in
  Theorem~\ref{thm:LLL} and verify \eqref{eqn:LLL} for every $e\in\+E'$,
  \begin{align*}
    x(e)\prod_{e'\in\Gamma(e)}\tp{1-x(e')}
    & \ge \frac{1}{k'\Delta} \left( 1-\frac{1}{k'\Delta} \right)^{k'(\Delta-1)}
    & \ge \frac{1}{ek'\Delta} \ge q^{1-k'} 
    &\ge \Pr{\text{$e$ is monochromatic}}.
  \end{align*}
  Hence, Theorem~\ref{thm:LLL} implies that there exists a proper colouring $\sigma$ in $H'$, which
  satisfies the requirement of the lemma.

  In order to find $\sigma$, we have left a bit slack in our bound on $q$.  Thus the deterministic
  algorithm from \cite{MT10} applies.
\end{proof}

\begin{theorem}\label{thm:FPTAS}
  Assume the conditions of Theorem~\ref{thm:marginal} (on $q$, $\Delta$, $k$, $k_1$, $k_2$, and
  $\beta$) with $k_1=k_1^C$ hold, together with the conditions of Lemma~\ref{lem:order}.  There is
  an FPTAS for the number of proper $q$-colourings of a $k$-uniform hypergraph $H=(V,\+E)$ with
  maximum degree~$\Delta$.
\end{theorem}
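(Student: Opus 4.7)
The plan is a standard self-reducibility argument that peels off vertices one at a time in the order provided by Lemma~\ref{lem:order}, using Theorem~\ref{thm:marginal} as an oracle for the conditional marginals. First, I would fix an arbitrary ordering $v_1, \ldots, v_n$ of $V$ and apply Lemma~\ref{lem:order} to deterministically produce a proper colouring $\sigma$ such that, for every hyperedge $e$, the first $k - k_1^C$ vertices of $e$ in this ordering are not monochromatic under $\sigma$. Let $\+C_i$ denote the set of proper colourings that agree with $\sigma$ on $v_1, \ldots, v_i$; then the telescoping identity
\[
  \abs{\+C} = \prod_{i=1}^{n}\frac{\abs{\+C_{i-1}}}{\abs{\+C_i}}
\]
reduces the counting problem to estimating $n$ marginal probabilities — each ratio equals exactly $1/\mu_{\+C_{i-1}}(v_i = \sigma(v_i))$ in the instance $(H_{i-1}, \+P_{i-1})$ obtained by pinning $v_1, \ldots, v_{i-1}$ to their $\sigma$-values.

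The step I expect to be the main obstacle is verifying that at every stage $i$ the pinned instance $(H_{i-1}, \+P_{i-1})$ still meets the hypotheses of Theorem~\ref{thm:marginal} with $k_1 = k_1^C$. This is where Lemma~\ref{lem:order} pays off. After a vertex is pinned, I would delete it from each incident hyperedge and add its $\sigma$-colour to the corresponding pinning set $P_e$; as soon as $\abs{P_e} \ge 2$ the hyperedge is satisfied and dropped. An active hyperedge must therefore have all its pinned vertices coloured identically by $\sigma$, which by Lemma~\ref{lem:order} cannot happen once $k - k_1^C$ of its vertices are pinned. Consequently every surviving hyperedge has at least $k_1^C + 1$ unpinned vertices throughout the self-reduction, while the maximum degree cannot increase under pinning. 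Since the assumed lower bound on $q$ exceeds $\marginalthreshold$ with $k_1 = k_1^C$, Theorem~\ref{thm:marginal} applies at every step with the intermediate instance in place of $(H, \+P)$.

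The remainder is routine error tracking. I would invoke Theorem~\ref{thm:marginal} with accuracy $\eps/n$ at step $i$ to obtain $\widehat{p}_i$ satisfying $e^{-\eps/n}\widehat{p}_i \le \mu_{\+C_{i-1}}(v_i = \sigma(v_i)) \le e^{\eps/n}\widehat{p}_i$, and output $\widehat{Z} \defeq \prod_{i=1}^{n} \widehat{p}_i^{-1}$. Taking reciprocals preserves the multiplicative error, and multiplying the $n$ per-step bounds immediately yields $e^{-\eps}\widehat{Z} \le \abs{\+C} \le e^{\eps}\widehat{Z}$. Each oracle call runs in $\poly(n/\eps)$ time by Theorem~\ref{thm:marginal}, so the total running time is $\poly(n, 1/\eps)$, giving the desired FPTAS.
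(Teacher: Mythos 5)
Your proposal is correct and follows essentially the same route as the paper: find the good colouring $\sigma$ via Lemma~\ref{lem:order}, telescope $\abs{\+C}^{-1}=\mu_{\+C}(\sigma)$ into a product of conditional marginals along the ordering, verify via the prefix/non-monochromatic argument that every active hyperedge keeps at least $k_1^C$ unpinned vertices so Theorem~\ref{thm:marginal} applies at every step, and call it with accuracy $\eps/n$. The only cosmetic difference is that the paper explicitly skips vertices lying in no active hyperedge (assigning them marginal $1/q$), whereas you pin every vertex in order, which yields the same product.
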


\begin{proof}
  Let $n=\abs{V}$.  Choose an arbitrary ordering of the vertices $v_1,\dots,v_n$ of $V$.
  Lemma~\ref{lem:order} implies that we can find a proper colouring $\sigma$ so that any hyperedge
  is properly coloured by the first $k-k_1^C$ of its vertices.  Let $Z=\abs{\+C}$ be the number of
  proper colourings of $H$.  For every $\eps>0$, we will deterministically compute a number
  $\widehat{Z}$ in time polynomial in $n$ and $1/\eps$ such that
  $e^{-\eps}\widehat{Z} \le Z \le e^\eps\widehat{Z}$.

  As before, let $\mu_{\+C}$ be uniform over $\+C$, the set of all proper colourings of $H$.  We
  will actually estimate $\mu_{\+C}(\sigma) = \frac{1}{Z}$.  To this end, we create a sequence of
  hypergraphs $\{H_i\}$ with pinnings $\{\+P_i\}$ inductively.  Let $H_1=H$ and $\+P_1$ be empty.
  Given $H_i=(V_i,\+E_i)$ and $\+P_i$, we find the next vertex $u_i$ under the ordering that are
  contained in at least one hyperedge of $H_i$.  We pin the colour of $u_i$ to be $\sigma(u_i)$.
  This induces a pinning $\+P_{i+1}$ on all hyperedges in $\+E_i$.  Then, $H_{i+1}$ is obtained by
  removing $u_i$ from $V_i$ and removing all hyperedges that are properly coloured under $\+P_{i+1}$
  from $\+E_i$.  We also truncate the pinning $\+P_{i+1}$ accordingly.  If for some $n'\le n$,
  $\+E_{n'}$ is empty, then this process terminates.  Notice that the construction above yields a
  subset of vertices $u_1,\dots,u_{n'}$ where $n'\le n$.  Their ordering is consistent with the
  given ordering.

  We claim that for any $i\in[n']$, for any $e\in\+E_i$, it satisfies that
  $k_1^C \le \abs{e} \le k$.  This is because an edge $e$ shrinks in size in the process when
  vertices are pinned according to $\sigma$.  However, Lemma \ref{lem:order} guarantees that the
  edge $e$ will be removed in the process above before $k-k_1^C$ vertices are coloured.  Therefore,
  together with our assumptions, Theorem \ref{thm:marginal} applies with $k_1=k_1^C$.
   
  Let $p_i$ be the marginal probability of colour $\sigma(u_i)$ at $u_i$ in $H_i$ with pinning
  $\+P_i$.  Let $p_i=\frac{1}{q}$ for all $i\ge n'$.  It is easy to see that
  $Z^{-1} = \mu_{\+C}(\sigma) = \prod_{i=1}^{n}p_i$.
  Thus we can obtain our desired estimate $\widehat{Z}$ by approximating each $p_i$ within
  $e^{\pm \frac{\eps}{n}}$.  To this end, we appeal to Theorem \ref{thm:marginal} with
  $\eps'=\frac{\eps}{n}$.
\end{proof}

\section{Sampling} \label{sec:sampling} Finally we give the algorithm to sample proper colourings
almost uniformly.  As usual, let $H(V,\+E)$ be a $k$-uniform hypergraph with maximum degree
$\Delta$, $q$ be the number of colours, and $\+C$ be the set of proper colourings.  Let $n=\abs{V}$.
Algorithm~\ref{algo:sampling} samples a colouring in $\+C$ within total variation distance $\eps$
from $\mu_{\+C}$.  Similar to the coupling process in Section~\ref{sec:coupling}, we assume that
there is an arbitrary fixed ordering of all vertices and hyperedges.  There is a parameter
$0<k_1^S<k-1$ in Algorithm \ref{algo:sampling}, which will be set in Section~\ref{sec:parameters}.

\newcommand{\sizethreshold}{\ensuremath{k^2\Delta\log\left( \frac{2n\Delta}{\eps} \right)}}
\newcommand{\twotreesize}{\ensuremath{\frac{L}{k^2\Delta}}}

\begin{algorithm}[htbp]
  \caption{An almost uniform sampler for proper colourings}
  \begin{algorithmic}[1]
    \State \textbf{Input:} \emph{A $k$-uniforom hypergraph $H(V,\+E)$ with maximum degree $\Delta$
      and $0<\eps<1$}%
    \State \textbf{Output:} \emph{A colouring in $\+C$}%
    \State Let $X$ be the partial colouring that $X(v)=-$ for every $v\in V$ initially;%
    \While{$\+E$ is nonempty}%
    \State Choose the first uncoloured $v\in V$ such that every $e\in\Gamma(v)$ contains $>k_1^S$
    uncoloured vertex;%
    \If {no such vertex $v$ exists} \State \textbf{break} \EndIf \State Apply the algorithm in
    Theorem~\ref{thm:marginal} to compute the marginal distribution on $v$ with precision
    $\frac{\eps}{2n}$, and extend $X$ with the colour on $v$ according to the
    distribution; \label{algoL:marginal}%
    \State Remove from $\+E$ all hyperedges that are now satisfied.  \EndWhile \State $S\gets$
    uncoloured vertices in $V$;%
    \State Let $H_S=(S,\+E_S)$ where $\+E_S\defeq \set{e\cap S\cmid e\in \+E}$;%
    \If {$H_S$ contains a connected component with size at least
      $\sizethreshold$} \label{algoL:fail}%
    \State \Return an arbitary $x\in\+C$ \Else \State \Return a uniformly random proper colouring
    consistent with $X$ by enumerating all proper colourings of $H_S$.  \EndIf
  \end{algorithmic}
  \label{algo:sampling}
\end{algorithm}

We first assume that at Line~\ref{algoL:marginal}, the oracle call to Theorem~\ref{thm:marginal} is
always within the correct range.  This simplification allows us to identify a threshold involving
the parameter $k_1^S$ to guarantee small connected components, which will be put together with the
conditions of Theorem~\ref{thm:marginal} later.

\newcommand{\ccthreshold}{\ensuremath{C\Delta^{\frac{3}{k-k_1^S-1}}}}
\newcommand{\couplingthresholdS}{\ensuremath{C\Delta^{\frac{4-\beta}{(1-\beta)(k_1^S-k_2-1)}}}}

\begin{lemma}\label{lem:fail}
  Assume the oracle call to Theorem~\ref{thm:marginal} at Line~\ref{algoL:marginal} is within the
  desired range.  If $q>\left( ek\Delta \right)^{\frac{1}{k_1^S-1}}$ and $q>\ccthreshold$ where
  $ C^{(k-k_1^S)-1} > e^7k^3$, the condition in line~\ref{algoL:fail} of Algorithm
  \ref{algo:sampling} holds with probability at most $\eps/2$.
\end{lemma}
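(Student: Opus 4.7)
The plan is a $\{2,3\}$-tree argument in the spirit of Lemma~\ref{lem:outside}, with $\ell := \log(2n\Delta/\eps)$ so that $L = k^2\Delta\cdot\ell$. First I would show that if $H_S$ has a connected component $C$ of size at least $L$, one can extract a $\{2,3\}$-tree of ``obstruction'' hyperedges in $\-{Lin}(H)$ of size at least $\ell$. For each $v\in C$, the exit condition of the while loop supplies some $e_v\in\Gamma(v)$ still in $\+E$ with at most $k_1^S$ uncoloured vertices; thus $e_v$ has at least $k-k_1^S$ coloured vertices which must all share a single colour (otherwise $e_v$ would have been removed). Put $\+O := \set{e_v : v\in C}$: each such edge carries at most $k_1^S$ vertices of $C$, so $\abs{\+O}\ge L/k_1^S$. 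Tracing any path in $H_S$ within $C$ from $v$ to $v'$ shows $e_v$ and $e_{v'}$ lie within distance $2$ in $\-{Lin}(H)$, so $\+O$ is connected in $\-{L}^2(H)$, and Lemma~\ref{lem:2-tree} yields a $\{2,3\}$-tree $T\subseteq \+O$ in $\-{Lin}(H)$ with $\abs{T}\ge \abs{\+O}/(k\Delta)\ge L/(k^2\Delta) = \ell$.

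Next I would bound, under $\sigma\sim\mu_{\+C}$, the probability of the weaker combinatorial event: there exist a $\{2,3\}$-tree $T$ of size $\ell$ in $\-{Lin}(H)$ and subsets $U_e\subseteq e$ of size $k-k_1^S$ for each $e\in T$ such that every $U_e$ is monochromatic in $\sigma$. Because the oracle is within the desired range, the coloured part of $X$ is distributed as a partial $\mu_{\+C}$ sample, so this event does witness the algorithmic failure. Set $x(e) = 1/(k\Delta)$: the hypothesis $q>(ek\Delta)^{1/(k_1^S-1)}$ in particular implies the LLL condition for $\mu_{\+C}$ on the original hypergraph. Since a $\{2,3\}$-tree has pairwise disjoint edges, the events for different $e\in T$ are independent in the product distribution, with joint probability $q^{-\ell(k-k_1^S-1)}$, and the LLL amplification factor from Theorem~\ref{thm:LLL-prob} contributes at most $4^\ell$ as $\abs{\Gamma(B)}\le \ell k\Delta$. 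Hence the $\mu_{\+C}$-probability of any fixed $(T,\set{U_e})$ is at most $4^\ell\cdot q^{-\ell(k-k_1^S-1)}$.

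The union bound uses Corollary~\ref{cor:2-tree-number}: the number of $\{2,3\}$-trees of size $\ell$ anchored at any hyperedge is at most $(ek^3\Delta^3)^{\ell-1}/2$, so the total number of trees is at most $n\Delta\cdot(ek^3\Delta^3)^\ell$. Multiplying by $\binom{k}{k-k_1^S}^\ell$ choices of $\set{U_e}$ (absorbed into the $k^3$ factor under the stated hypothesis), the combined tail is at most $n\Delta\cdot\tp{C_0 k^3 \Delta^3 q^{-(k-k_1^S-1)}}^\ell$ for an absolute constant $C_0$. The hypothesis $q^{k-k_1^S-1}>e^7 k^3\Delta^3$ (from $C^{k-k_1^S-1}>e^7 k^3$) forces the inner factor below $e^{-1}$. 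With $\ell=\log(2n\Delta/\eps)$, this gives $n\Delta\cdot e^{-\ell} = \eps/2$, as required.

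The main obstacle is that the coloured set $V_{\-{col}}$, and hence the set of obstruction edges, is itself random and adaptively chosen by the algorithm, so one cannot pin down in advance which subsets $U_e$ cause the failure. We sidestep this by passing to the universal combinatorial event in Step~2 (taking a union over all possible $(T,\set{U_e})$), which reduces everything to a standard local-lemma tail bound. The essential combinatorial ingredient is Lemma~\ref{lem:2-tree}: the obstruction edges are only guaranteed to be connected in $\-{L}^2(H)$, not in $\-{Lin}(H)$ itself, so the weaker connectivity is exactly what we need to produce the $\{2,3\}$-tree.
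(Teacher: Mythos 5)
Your $\{2,3\}$-tree extraction (obstruction edges with $k-k_1^S$ monochromatic coloured vertices, connectivity in $\-{L}^2(H)$, Lemma~\ref{lem:2-tree}, size at least $L/(k^2\Delta)=\ell$) is exactly the paper's first step. The gap is in the probabilistic step. By passing to the static event ``there exist a $\{2,3\}$-tree $T$ of size $\ell$ and subsets $U_e\subseteq e$ with $\abs{U_e}=k-k_1^S$ that are monochromatic under $\sigma\sim\mu_{\+C}$,'' you must union over the choice of $U_e$ for every $e\in T$, which costs $\binom{k}{k-k_1^S}^{\ell}$; your per-edge factor then has the form $4ek^3\Delta^3\binom{k}{k-k_1^S}\,q^{-(k-k_1^S-1)}$. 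The lemma's hypothesis $C^{(k-k_1^S)-1}>e^7k^3$ only supplies $q^{(k-k_1^S)-1}>e^7k^3\Delta^3$ and leaves no room for the binomial, which is exponentially large in $k$ (with the eventual choice $k_1^S=\floor{13k/16}$ and $k=\kone$ it is already $\binom{28}{6}\approx 4\times 10^5$), so the inner factor is nowhere near $e^{-1}$ and the claimed bound $n\Delta e^{-\ell}\le\eps/2$ does not follow under the stated hypothesis. (Compare Lemma~\ref{lem:outside}, where exactly this subset-union cost forces the explicit $\binom{k}{k_2}$ term in the condition on $C$; your route would prove the lemma only under a correspondingly strengthened condition, i.e.\ a larger constant, not the one stated.) The paper avoids the binomial entirely by \emph{not} reducing to a static event: it fixes the tree, works under the sampler's own distribution $\widehat\mu$, and bounds via the chain rule the conditional probability that each tree edge is bad, vertex by vertex along the algorithm's trajectory — whichever $k-k_1^S$ vertices happen to get coloured, each successive one takes the designated colour with probability at most $\frac{1}{q}(1+4/k)e^{\eps/2n}$ by Lemma~\ref{lem:LLL-prob} (applied with $k'=k_1^S$, $t=k$ to the current pinned instance) together with the oracle accuracy, yielding $e^5q^{1-(k-k_1^S)}$ per edge with no factor $\binom{k}{k-k_1^S}$.

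A secondary inaccuracy: the assertion that ``the coloured part of $X$ is distributed as a partial $\mu_{\+C}$ sample'' is not literally true, since the algorithm colours using the oracle's estimates, which are only within $e^{\pm\eps/2n}$ of the true conditional marginals; $X$ is therefore only approximately pre-Gibbs (and even for the idealized process the pre-Gibbs property requires the induction of Lemma~\ref{lem:coupling}, as invoked in the proof of Theorem~\ref{thm:sampling}). This drift is fixable — either a global multiplicative factor $e^{\eps/2}$ on the failure probability or per-vertex factors as in the paper's bound — but it must be accounted for explicitly, since the target is exactly $\eps/2$ and your final computation has no slack reserved for it.
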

\begin{proof}
  The proof idea is to show the existence of a large components in $H_S$ implies the existence of a
  large $\{2,3\}$-tree in $\-{Lin}(H)$ whose vertices are edges that are not satisfied but $k-k_1^S$
  of their vertices are already coloured. Then we show the probability of the latter event is small.

  Now assume that the sampler ends the WHILE loop with a partial colouring $X$ and $H_S$.  We say an
  edge $e\in \+E$ is \emph{bad} if $X$ does not satisfy $e$ and $\abs{e\cap S}=k_1^S$, namely $e$ is
  partially monochromatic under $X$ but $k-k_1^S$ vertices have been coloured.  Also, say a vertex
  $v\in S$ is \emph{blocked} by an edge $e\in \+E$ if $v\in e$ and $e$ is bad.

  Fix an arbitrary hyperedge $e_0$ that is bad, and $e_0$ is contained in a connected component of
  size at least $L$ in $H_S$.  We denote the set of vertices of this component by $U$ and its
  induced hypergraph $H_U$.  It is clear that every vertex in $S$ is blocked by some bad edge.  Let
  $\+F$ be the set of all bad edges incident to $U$.  Then $e_0\in\+F$.  Since every vertex in $U$
  is blocked by some edge in $\+F$ and every edge in $\+F$ contains at most $k$ vertices,
  $\abs{\+F}\ge\frac{L}{k}$.
  
  We claim that $\+F$ is connected in $\-L^2(H)$.  The reason is the following.  For any two edges,
  say $e_1,e_2\in \+F$, since $H_U$ is connected, there exists a path in $H_U$ connecting $e_1$ and
  $e_2$.  Every vertex along this path must be blocked by some edge in $\+F$.  Each adjacent pair of
  vertices along this path corresponds to a pair of edges in $\+F$ that have distance at most $2$ in
  $\-{Lin}(H)$.
  
  Lemma~\ref{lem:2-tree} implies that $\+F$ contains a $\{2,3\}$-tree of size at least
  $\ell=\twotreesize$ containing $e_0$.  Fix such a $\{2,3\}$-tree $T=\set{e_1,\dots,e_{\abs{T}}}$.
  Let $\widehat\mu$ be the distribution of our sampler at the end of the WHILE loop.  It holds that
  \[
    \mPr[X\sim\widehat\mu]{\mbox{\scriptsize every $e_i\in T$ is
        bad}}=\prod_{i=1}^{\abs{T}}\mPr[X\sim\widehat\mu]{\mbox{$e_i$ is bad}\mid
      \bigwedge_{j<i}\mbox{$e_j$ is bad}}.
  \]
  Since $e_i\cap e_j=\varnothing$ for every $i\ne j$ and Theorem \ref{thm:marginal} guarantees our
  estimated marginals are within $e^{\eps/2n}$, for every $1\le i\le \abs{T}$, we can apply Lemma
  \ref{lem:LLL-prob} with $k'=k_1^S$ and $t=k$,
  \begin{align*}
    \mPr[X\sim\widehat\mu]{\mbox{$e_i$ is bad}\mid
      \bigwedge_{j<i}\mbox{$e_j$ is bad}}  
    \le q\cdot q^{-(k-k_1^S)}\cdot (1+8/t)^{k/2} \cdot e^{\frac{\eps (k-k_1^S)}{2n}}    \le e^5\cdot q^{1-(k-k_1^S)}.
  \end{align*}
  Applying Lemma~\ref{lem:LLL-prob} requires that $q>\left( ek\Delta \right)^{\frac{1}{k_1^S-1}}$.
  By Corollary \ref{cor:2-tree-number}, the number of $\{2,3\}$-trees of size $\ell$ in $\-{Lin}(H)$
  containing $e_0$ in $\+F$ is at most $\frac{\tp{ek^3\Delta^3}^\ell}{2}$.  Then by the union bound,
  the probability that $H_S$ contains a component with size at least $L$ is at most
  \begin{align}\label{eqn:failprob}
    n\Delta\tp{ek^3\Delta^3}^\ell \left(e^5\cdot q^{1-(k-k_1^S)}\right)^{\ell},
  \end{align}
  where the term $\abs{n\Delta}\ge\abs{\+E}$ accounts for the choice of $e_0$.  By assumption,
  \begin{align*}
    q^{(k-k_1^S)-1} > C^{(k-k_1^S)-1} \Delta^{3} > e^7k^3\Delta^{3}.
  \end{align*}
  As $L = \sizethreshold$ and $\ell= \twotreesize$, $e^{-\ell} \le \frac{\eps}{2n\Delta}$.  Hence,
  by \eqref{eqn:failprob} the probability in Line \ref{algoL:fail} is at most
  \begin{align*}
    n\Delta\tp{ek^3\Delta^3}^\ell \left(e^5\cdot q^{1-(k-k_1^S)}\right)^{\ell} 
    & \le n\Delta \cdot e^{-\ell} \le \frac{\eps}{2}. \qedhere
  \end{align*}
\end{proof}

Now we are ready to give the sampling algorithm.

\begin{theorem}\label{thm:sampling}
  Assume the conditions of Theorem~\ref{thm:marginal} (on $q$, $\Delta$, $k$, $k_1$, $k_2$, and
  $\beta$) with $k_1=k_1^S$ hold, together with the conditions of Lemma~\ref{lem:fail}.  For any
  $k$-uniform hypergraph $H=(V,\+E)$ with maximum degree $\Delta$ and $\eps>0$,
  Algorithm~\ref{algo:sampling} outputs a proper colouring whose distribution is within $\eps$ total
  variation distance to the uniform distribution, and the running time is
  $\-{poly}(n,\frac{1}{\eps})$ where $n=\abs{V}$.
\end{theorem}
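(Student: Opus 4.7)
The plan is to decompose the analysis of Algorithm~\ref{algo:sampling} into two pieces---the error from using approximate marginal oracles and the error from the failure branch at line~\ref{algoL:fail}---together with a running-time accounting.

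First, I would verify that the hypothesis of Theorem~\ref{thm:marginal} is met at every oracle call. The key invariant is that at every iteration of the WHILE loop, each edge $e$ still in $\+E$ has at least $k_1^S$ uncoloured vertices. This follows by induction: the uncoloured count of $e$ drops only when one of its vertices is coloured, but the algorithm colours $u$ only if every edge containing $u$ (in particular $e$) has strictly more than $k_1^S$ uncoloured vertices just before, so the count remains at least $k_1^S$ afterwards. Hence Theorem~\ref{thm:marginal} applies with $k_1 = k_1^S$ and the oracle's output lies within a multiplicative factor $e^{\pm\eps/(2n)}$ of the true conditional marginal.

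Second, I would compare the algorithm's distribution $\hat{\mu}$ to $\mu_{\+C}$ through an intermediate idealized distribution. Let $A^{*}$ be the variant of Algorithm~\ref{algo:sampling} that (i) uses exact conditional marginals, and (ii) always performs the uniform-enumeration step (dropping the failure check). By essentially the same pre-Gibbs induction as in Lemma~\ref{lem:coupling}(2), the partial colouring produced by the WHILE loop of $A^{*}$ is pre-Gibbs with respect to $\mu_{\+C}$, and completing it uniformly therefore yields exactly $\mu_{\+C}$. Now let $\hat{\mu}'$ be the output distribution of the real algorithm with the failure check removed. A standard hybrid/chain-rule argument, telescoping the $e^{\pm\eps/(2n)}$ per-step multiplicative errors across at most $n$ iterations along each realized execution path, implies that $\hat{\mu}'$ assigns each colouring a probability within $e^{\pm\eps/2}$ of its $\mu_{\+C}$-probability, hence $\mathrm{TV}(\hat{\mu}',\mu_{\+C}) \le \eps/2$. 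Since $\hat{\mu}$ and $\hat{\mu}'$ agree except on the failure event, whose probability is at most $\eps/2$ by Lemma~\ref{lem:fail}, the triangle inequality gives $\mathrm{TV}(\hat{\mu},\mu_{\+C}) \le \eps$.

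Third, for the running time: the WHILE loop executes at most $n$ iterations, each dominated by a single invocation of Theorem~\ref{thm:marginal} at precision $\eps/(2n)$, which runs in $\poly(n/\eps)$ time when $\Delta$, $k$, $q$ are treated as constants. When the failure branch is not taken, every connected component of $H_S$ has size at most $L = k^2\Delta\log(2n\Delta/\eps) = O(\log(n/\eps))$, so enumerating all proper colourings of each component to sample uniformly consistent with $X$ costs $q^{L} = \poly(n/\eps)$. Adding up, the total running time is $\poly(n,1/\eps)$.

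The main obstacle I expect is the hybrid step in the second paragraph: making rigorous the claim that $n$ oracle approximations, each within multiplicative $e^{\pm\eps/(2n)}$, propagate to an additive $\eps/2$ bound in total variation. Because the very sequence of vertices coloured by the algorithm is random and data-dependent, one needs to telescope along the realized execution path rather than a fixed sequence, and then relate the resulting pointwise multiplicative closeness to $\mu_{\+C}$ via the elementary inequality $|e^{\eps/2}-1| \le \eps$ for small $\eps$. Once set up, the rest of the argument reduces to standard coupling and enumeration bookkeeping.
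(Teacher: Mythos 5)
Your proposal is correct and follows essentially the same route as the paper: the same chain of intermediate distributions (real algorithm $\to$ no-failure-check version $\to$ exact-marginal version, the last being exactly $\mu_{\+C}$ by a pre-Gibbs induction as in Lemma~\ref{lem:coupling}), the same per-path telescoping of the $e^{\pm\eps/(2n)}$ oracle errors to get total variation at most $\eps/2$, the same use of Lemma~\ref{lem:fail} to charge the failure branch $\eps/2$, and the same running-time accounting. The telescoping step you flag as the main obstacle is handled in the paper exactly as you suggest, by conditioning on the partial colouring $x$ at the end of the WHILE loop, which determines the visiting order and lets one compare $\prod_i \widehat{p}_i$ with $\prod_i p_i$ directly.
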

\begin{proof}
  First we check that the condition of Theorem \ref{thm:marginal} is met with $k_1=k_1^S$, when it
  is called in Algorithm~\ref{algo:sampling} at Line~\ref{algoL:marginal}.  This is because whenever
  we colour a vertex, we make sure that all hyperedges have at least $k_1^S$ uncoloured vertices
  afterwards.  Hence we apply Theorem \ref{thm:marginal} with the pinnings $\+P$ induced by the
  partial colouring $X$ so far.
  
  We use $\widehat\mu(\cdot)$ to denote the distribution of the final output of
  Algorithm~\ref{algo:sampling}.  Recall thet $\mu_{\+C}$ is the uniform distribution over $\+C$.
  We shall bound the total variation distance $\-{dist}_{TV}(\mu_{\+C},\widehat\mu)$.  To this end,
  we introduce two intermediate distributions: Let $\mu_1(\cdot)$ be the distribution obtained from
  the output of Algorithm~\ref{algo:sampling} but ignoring the condition on line~\ref{algoL:fail} in
  Algorithm \ref{algo:sampling}.  Namely, it never checks the size of connected components in $H_S$
  and proceed to enumerate all the proper colourings on $S$ in any case.  This is unrealistic since
  doing so would require exponential time.  We also define another distribution $\mu_2(\cdot)$,
  which is the same as $\mu_1(\cdot)$ except at line~\ref{algoL:marginal}, it uses the true marginal
  instead of the estimate by calling Theorem~\ref{thm:marginal}.

  Denote by $B$ the event that the condition on line~\ref{algoL:fail} holds.  Let $p_{\-{fail}}$ be
  the probability of event $B$.  By Lemma \ref{lem:fail}, $p_{\-{fail}}\le\eps/2$.

  First note that $\mu_2=\mu_{\+C}$. Consider the distribution of the partial colouring obtained
  immediately after the WHILE loop, i.e., the partial colouring $X$. One can apply induction similar
  to the proof of Lemma~\ref{lem:coupling} to show that it follows a pre-Gibbs distribution.
  Therefore, conditioned on $X$, sampling a uniform proper colouring of the remaining vertices
  results in a uniform proper colouring.

  We then bound $\-{dist}_{TV}(\mu_1,\mu_2)$. For a particular partial colouring $x$, we use $E_x$
  to denote the event that the sampler produces $x$ at the end of the WHILE loop, namely $X=x$.  It
  holds that
  \begin{align*}
    \-{dist}_{TV}(\mu_1,\mu_2)
    &=\frac{1}{2}\sum_{\sigma\in\+C}\abs{\mPr[Z\sim\mu_1]{Z=\sigma}-\mPr[Z\sim\mu_2]{Z=\sigma}}\\
    &=\frac{1}{2}
      \sum_{\sigma\in\+C}
      \Bigg|\sum_{\substack{x:\;\sigma\models x}}\bigg(\mPr[Z\sim\mu_1]{Z=\sigma\mid E_x}
      \cdot\mPr[Z\sim\mu_1]{E_x} 
     -\mPr[Z\sim\mu_2]{Z=\sigma\mid
      E_x}\cdot\mPr[Z\sim\mu_2]{E_x}\bigg)\Bigg|,
  \end{align*}
  where $x$ runs over partial colourings.

  The partial colouring $x$ may never appear at the end of the WHILE loop in Algorithm
  \ref{algo:sampling}.  In this case,
  \begin{align*}
    \Pr[Z\sim\mu_1]{E_x}=\Pr[Z\sim\mu_2]{E_x}=0.
  \end{align*}
  Otherwise $x$ can be the partial colouring at the end of the WHILE loop.  Since the enumeration
  steps are identical and correct in both $\mu_1$ and $\mu_2$ conditioned on $E_x$, we have that
  \begin{align*}
    \Pr[Z\sim \mu_1]{Z=\sigma\mid E_x}=\Pr[Z\sim \mu_2]{Z=\sigma\mid E_x}=\frac{\*1_{\sigma\models x}}{\abs{\+C_x}},
  \end{align*}
  where $\+C_x$ is again the set of proper colourings consistent with the partial colouring $x$.
  
  It implies that
  \begin{equation}\label{eq:gap12}
    \-{dist}_{TV}(\mu_1,\mu_2)=\frac{1}{2}\sum_{\sigma\in\+C}\abs{\sum_{\substack{x:\;\sigma\models x}}\frac{1}{\abs{\+C_x}}\tp{\mPr[Z\sim\mu_1]{E_x}-\mPr[Z\sim\mu_2]{E_x}}}.
  \end{equation}
  Fix a partial colouring $x$ defined on $V_{\-{col}}\subseteq V$ that is a possible output of the
  WHILE loop.  We note that the order of visiting $V_{\-{col}}$ is determined by the random choices
  of $x$.  Say this order is $v_1,\dots,v_s$.  Let
  \begin{align*}
    p_i\defeq\Pr[Z\in\mu_C]{Z(v_i)=x(v_i)\mid \bigwedge_{1\le j<i}Z(v_j)=x(v_j)}.
  \end{align*}
  Hence
  \[
    \mPr[Z\sim\mu_1]{E_x}-\mPr[Z\sim\mu_2]{E_x}=\prod_{i=1}^s \widehat{p}_i-\prod_{i=1}^s p_i,
  \]
  where $\widehat{p}_i$ is our estimate of $p_i$ using Theorem~\ref{thm:marginal} with error
  $\frac{\eps}{2n}$.  Theorem~\ref{thm:marginal} implies that
  \[
    e^{-\frac{\eps}{2n}} \widehat{p}_i \le p_i \le e^{\frac{\eps}{2n}} \widehat{p}_i.
  \]
  Therefore, we have
  \begin{equation}\label{eq:gap12-aux}
    \abs{\mPr[Z\sim\mu_1]{E_x}-\mPr[Z\sim\mu_2]{E_x}} \le \eps \mPr[Z\sim\mu_2]{E_x}.
  \end{equation}
  Plugging \eqref{eq:gap12-aux} into \eqref{eq:gap12}, we obtain
  \begin{align*}
    \-{dist}_{TV}(\mu_1,\mu_2)\le
    \frac{1}{2}\sum_{\sigma\in\+C}\abs{\sum_{\substack{x:\;\sigma\models
    x}}\frac{\eps}{\abs{\+C_x}}\mPr[Z\sim\mu_2]{E_x}}= \frac{\eps}{2}\sum_{\sigma\in\+C}\mu_2(\sigma)=\frac{\eps}{2}.
  \end{align*}

  Finally we bound $\-{dist}_{TV}(\widehat\mu,\mu_1)$.  Since the behaviours of $\widehat\mu$ and
  $\mu_1$ are identical if $B$ does not happen, we have that
  $\mPr[Z\sim\widehat\mu]{Z=\sigma\mid
    \overline{B}}${}$ = \mPr[Z\sim\mu_1]{Z=\sigma\mid \overline{B}}$.  It implies that
  \begin{align*}
    \-{dist}_{TV}(\widehat\mu,\mu_1) 
    &=\frac{1}{2}\sum_{\sigma\in\+C}\abs{\mPr[Z\sim\widehat\mu]{Z=\sigma}-\mPr[Z\sim\mu_1]{Z=\sigma}}\\
    &=\frac{1}{2}\sum_{\sigma\in\+C}\left|\mPr[Z\sim\widehat\mu]{Z=\sigma \land B} +
      \mPr[Z\sim\widehat\mu]{Z=\sigma\mid \overline{B}}\cdot (1-p_{\-{fail}})\right.\\
    &\quad\quad\quad\left.-\mPr[Z\sim\mu_1]{Z=\sigma \land B} -
      \mPr[Z\sim\mu_1]{Z=\sigma\mid \overline{B}}\cdot (1-p_{\-{fail}})\right|\\
    &=\frac{1}{2}\sum_{\sigma\in\+C}\abs{\mPr[Z\sim\widehat\mu]{Z=\sigma \land B}
      -\mPr[Z\sim\mu_1]{Z=\sigma \land B}}\\
    &\le \frac{1}{2}\sum_{\sigma\in\+C}\tp{\mPr[Z\sim\widehat\mu]{Z=\sigma \land
      B}+\mPr[Z\sim\mu_1]{Z=\sigma \land B}}\\
    &\le p_{\-{fail}}.
  \end{align*}

  Combining the above and Lemma~\ref{lem:fail}, we obtain
  \begin{align*}
    \-{dist}_{TV}(\widehat\mu,\mu_{\+C})&\le
                                          \-{dist}_{TV}(\widehat\mu,\mu_1)+\-{dist}_{TV}(\mu_1,\mu_2)+\-{dist}_{TV}(\mu_2,\mu_{\+C})
    \\
                                        &\le
                                          p_{\-{fail}}+\frac{\eps}{2}\le \eps.
  \end{align*}

  It remains to bound the running time of the sampler. The sampler calls subroutines to estimate
  marginal at most $n$ times and each time the subroutine costs
  $\-{poly}(n,\frac{1}{\eps})$. Finally, upon the condition on line~\ref{algoL:fail} does not hold,
  the sampler enumerates proper colourings on connected components of size
  $O(\log\tp{\frac{n}{\eps}})$. Therefore, the total running time is $\-{poly}(n,\frac{1}{\eps})$.
\end{proof}

The distribution $\mu_1$ has a small multiplicative error comparing to the uniform distribution
$\mu_{\+C}$.  We remark that there are standard algorithms to turn such a distribution into an exact
sampler, dating back to \cite{Bac88,JVV86}.  However, since we cannot completely avoid event $B$, we
can only bound the error in the final distribution $\widehat{\mu}$ in terms of total variation
distance.

\section{Settling all parameters}\label{sec:parameters}

We have defined the following parameters throughout the paper:
\begin{itemize}
\item $k_1^{C}$: the number of vertices in a hyperedge that are \emph{not} fixed in approximate
  counting, Theorem~\ref{thm:FPTAS};
\item $k_1^{S}$: the number of vertices in a hyperedge that are \emph{not} fixed in sampling,
  Theorem \ref{thm:sampling};
\item $k_2$: the number of vertices in a hyperedge Algorithm~\ref{alg:coupling} would attempt to
  couple;
\item $\beta$: the fraction of hyperedges that are monochromatic in Definition
  \ref{def:badcolouring}.
\end{itemize}

We want our bound for approximate counting to have the form $C\Delta^\frac{A_1}{k - B_1}$.  By
Theorem~\ref{thm:FPTAS}, we want to make sure that, for any $k>0$, subject to $0<k_2<k_1^C<k-1$, and
$0<\beta<1$,
\begin{align*}
  \frac{A_{1}}{k - B_{1}} & \ge \frac{3}{\beta(k_2-1)};\\
  \frac{A_{1}}{k - B_{1}} & \ge \frac{4-\beta}{(1-\beta)(k_1^C-k_2-1)};\\
  \frac{A_{1}}{k - B_{1}} & \ge \frac{1}{k-k_1^C-1}.
\end{align*}
We assume $k_1^C$ and $k_2$ are proportional to $k$.  Minimizing $A_1$ yields the following
solutions:
$A_1 = 14,B_1=14,k_1^C=\floor{\frac{13k}{14}}, k_2= \floor{\frac{3k}{7}},\beta=\frac{1}{2}$.
Plugging these values into Theorem~\ref{thm:FPTAS}, we want to satisfy the following constraints:
\begin{align*}
  k-k_1^C-2 & \ge 0, &
                       C & \ge \left(5e\tp{e^2k^3}^{\frac{1}{1-\beta}}\right)^{\frac{1}{k_1^C-k_2-1}},\\
  q^{k_2-1} & > \frac{1}{\beta}, &
                                   C & \ge \left(\frac{e^{\beta+3}k^3}{\beta^\beta}\cdot\binom{k}{k_2}\right)^{\frac{1}{\beta(k_2-1)}},\\
  q & >\left( ek\Delta \right)^{\frac{1}{k_1^C-2}}, & 
                                                      C & \ge 4(k-k_1^C)^{\frac{1}{k-k_1^C-1}}.
\end{align*}
One can verify that $k\ge \kone$ and $C\ge \Cone$ suffice.  This yields Theorem~\ref{thm:main}.

Similarly, we want our bound for sampling to have the form $C\Delta^\frac{A_{2}}{k - B_{2}}$.  By
Theorem~\ref{thm:sampling}, we want to make sure that, for any $k>0$, subject to $0<k_2<k_1^S<k-1$
and $0<\beta<1$,
\begin{align*}
  \frac{A_{2}}{k - B_{2}} & \ge \frac{3}{\beta(k_2-1)};\\
  \frac{A_{2}}{k - B_{2}} & \ge \frac{4-\beta}{(1-\beta)(k_1^S-k_2 -1)};\\
  \frac{A_{2}}{k - B_{2}} & \ge \frac{3}{k-k_1^S-1}.  
\end{align*}
Similarly to the approximate counting case, minimizing $A_2$ yields the following solutions:
$A_2 = 16,B_2=\frac{16}{3},k_1^S=\floor{\frac{13k}{16}},k_2 =
\floor{\frac{3k}{8}},\beta=\frac{1}{2}$.  Plugging these values into Theorem~\ref{thm:sampling}, we
want to satisfy the following constraints:
\begin{align*}
  k-k_1^S-2 & \ge 0, &
                       C & \ge \left(5e\tp{e^2k^3}^{\frac{1}{1-\beta}}\right)^{\frac{1}{k_1^S-k_2-1}},\\
  q^{k_2-1} & > \frac{1}{\beta}, &
                                   C & \ge \left(\frac{e^{\beta+3}k^3}{\beta^\beta}\cdot\binom{k}{k_2}\right)^{\frac{1}{\beta(k_2-1)}},\\
  q & >\left( ek\Delta \right)^{\frac{1}{k_1^S-2}}, & 
                                                      C & > \tp{e^7k^3}^{\frac{1}{(k-k_1^S)-1}}.
\end{align*}
One can verify that $k\ge \kone$ and $C\ge \Ctwo$ suffice.  This yields Theorem~\ref{thm:main2}.  We
note that these constraints also hold for $k\ge 6$ and $C\ge 1.2\times 10^{11}$.

\section{Concluding remarks} \label{sec:conclusion}

In this paper we give approximate counting and sampling algorithms for hypergraph colourings, when
the parameters are in the local lemma regime.  One important open question is how to get an optimal
constant in the exponent of $\Delta$ in Theorem~\ref{thm:main} and~\ref{thm:main2}.  This constant
comes from three places: to bound the number of ``bad colourings'' (Lemma~\ref{lem:outside}), to
bound the error (in the LP) incurred by ``good colourings'' (Lemma~\ref{lem:inside}), and finally to
leave some slack for either counting (Theorem~\ref{thm:FPTAS}) or sampling
(Theorem~\ref{thm:sampling}).  It seems to us that the last slack is difficult to reduce, and a
tighter result, if possible, would come from improvements on the first two parts, although our
analysis has been pushed to the limit.


Another future direction is to generalize this approach for general constraint satisfaction problems
(CSP), or equivalently, the general setup of the (variable version) local lemma.  Our analysis
relies on some crucial property of hypergraph colourings, that all constraints can be satisfied by
partial assignments, ideally with appropriate probabilities.  To be more specific, suppose a
constraint $C$ contains $k$ variables.  We require a property that, when a subset of $k'$ variables
are randomly assigned, the probability that $C$ is still not satisfied is roughly $ c^{-k'}$ for
some constant $c >1$.  This property does not necessarily hold in general, even for symmetric
constraints.  One such example is when the variables take values from $[q]$, and the constraint is
satisfied unless the sum of all its variables is $0$ modulo $q$.  We can take $q$ to be large so
that the strong local lemma conditions hold, and yet this constraint cannot be satisfied by any
subset of variables.  In particular, it is problematic to bound our definition of ``bad colourings''
(Definition~\ref{def:badcolouring}) when constraints cannot be satisfied by partial assignments.
New ideas are required to handle more general settings.

Upon closer look, the success of our approach does not truly rely on that the
system is in the local lemma regime. What is essential is that the coupling tree can be truncated at
a suitable depth without incurring big error. This turns out to be a special form of the \emph{spatial
mixing} property. In the settings of this paper, a strong form of the local lemma condition guarantees
that the coupling process succeeds with sufficiently high probability at each step and therefore 
establishes the desired property. A consequence is that we can use \emph{local} linear constraints to 
certify the coupling. It remains unclear whether a global correlation decay argument would suffice as well.



\section*{Acknowledgements}

We thank anonymous referees for many improvements,
and in particular, for pointing out the connection to spatial mixing properties.

\bibliographystyle{alpha} \bibliography{bib}

\newcommand{\etalchar}[1]{$^{#1}$}
\begin{thebibliography}{CDM{\etalchar{+}}19}

\bibitem[AI16]{AI16}
Dimitris Achlioptas and Fotis Iliopoulos.
\newblock Random walks that find perfect objects and the {L}ov{\'{a}}sz {L}ocal
  {L}emma.
\newblock {\em J. {ACM}}, 63(3):22, 2016.

\bibitem[Alo91]{Alo91}
Noga Alon.
\newblock A parallel algorithmic version of the local lemma.
\newblock {\em Random Struct. Algorithms}, 2(4):367--378, 1991.

\bibitem[Bac88]{Bac88}
Eric Bach.
\newblock How to generate factored random numbers.
\newblock {\em {SIAM} J. Comput.}, 17(2):179--193, 1988.

\bibitem[BCKL13]{BCKL13}
Christian Borgs, Jennifer Chayes, Jeff Kahn, and L{\'a}szl{\'o} Lov{\'a}sz.
\newblock Left and right convergence of graphs with bounded degree.
\newblock {\em Random Struct. Algorithms}, 42(1):1--28, 2013.

\bibitem[BDK06]{BDK06}
Magnus Bordewich, Martin~E. Dyer, and Marek Karpinski.
\newblock Stopping times, metrics and approximate counting.
\newblock In {\em ICALP}, pages 108--119, 2006.

\bibitem[BDK08]{BDK08}
Magnus Bordewich, Martin~E. Dyer, and Marek Karpinski.
\newblock Path coupling using stopping times and counting independent sets and
  colorings in hypergraphs.
\newblock {\em Random Struct. Algorithms}, 32(3):375--399, 2008.

\bibitem[BG08]{BG08}
Antar Bandyopadhyay and David Gamarnik.
\newblock Counting without sampling: Asymptotics of the log-partition function
  for certain statistical physics models.
\newblock {\em Random Struct. Algorithms}, 33(4):452--479, 2008.

\bibitem[BGG{\etalchar{+}}19]{BGGGS19}
Ivona Bez\'{a}kov\'{a}, Andreas Galanis, Leslie~Ann Goldberg, Heng Guo, and
  Daniel \v{S}tefankovi\v{c}.
\newblock Approximation via correlation decay when strong spatial mixing fails.
\newblock {\em SIAM J. Comput.}, 48(2):279--349, 2019.

\bibitem[CDM{\etalchar{+}}19]{CDMPP19}
Sitan Chen, Michelle Delcourt, Ankur Moitra, Guillem Perarnau, and Luke Postle.
\newblock Improved bounds for randomly sampling colorings via linear
  programming.
\newblock In {\em Proceedings of the Thirtieth Annual ACM-SIAM Symposium on
  Discrete Algorithms}, pages 2216--2234. SIAM, 2019.

\bibitem[CPS17]{CPS17}
Kai{-}Min Chung, Seth Pettie, and Hsin{-}Hao Su.
\newblock Distributed algorithms for the {L}ov{\'{a}}sz local lemma and graph
  coloring.
\newblock {\em Distributed Computing}, 30(4):261--280, 2017.

\bibitem[DFHV13]{DFHV13}
Martin~E. Dyer, Alan~M. Frieze, Thomas~P. Hayes, and Eric Vigoda.
\newblock Randomly coloring constant degree graphs.
\newblock {\em Random Struct. Algorithms}, 43(2):181--200, 2013.

\bibitem[DFK91]{DFK91}
Martin~E. Dyer, Alan~M. Frieze, and Ravi Kannan.
\newblock A random polynomial time algorithm for approximating the volume of
  convex bodies.
\newblock {\em J. {ACM}}, 38(1):1--17, 1991.

\bibitem[EL75]{EL75}
Paul Erd\H{o}s and L\'aszl\'o Lov{\'{a}}sz.
\newblock Problems and results on 3-chromatic hypergraphs and some related
  questions.
\newblock {\em Infinite and finite sets, volume 10 of Colloquia Mathematica
  Societatis J\'anos Bolyai}, pages 609--628, 1975.

\bibitem[Erd63]{Erd63}
Paul Erd\H{o}s.
\newblock On a combinatorial problem.
\newblock {\em Nordisk Mat. Tidskr.}, 11:5--10, 1963.

\bibitem[FA17]{FA17}
Alan~M.\ Frieze and Michael Anastos.
\newblock Randomly coloring simple hypergraphs with fewer colors.
\newblock {\em Inf. Process. Lett.}, 126:39--42, 2017.

\bibitem[FM11]{FM11}
Alan~M.\ Frieze and P{\'{a}}ll Melsted.
\newblock Randomly coloring simple hypergraphs.
\newblock {\em Inf. Process. Lett.}, 111(17):848--853, 2011.

\bibitem[FM13]{FM13}
Alan~M.\ Frieze and Dhruv Mubayi.
\newblock Coloring simple hypergraphs.
\newblock {\em J. Comb. Theory, Ser. {B}}, 103(6):767--794, 2013.

\bibitem[GJ19]{GJ19}
Heng Guo and Mark Jerrum.
\newblock A polynomial-time approximation algorithm for all-terminal network
  reliability.
\newblock {\em SIAM J. Comput.}, 48(3):964--978, 2019.

\bibitem[GJL19]{GJL19}
Heng Guo, Mark Jerrum, and Jingcheng Liu.
\newblock Uniform sampling through the {L}ov\'asz local lemma.
\newblock {\em J. ACM}, 66(3):18:1--18:31, 2019.

\bibitem[GSV15]{GSV15}
Andreas Galanis, Daniel Stefankovic, and Eric Vigoda.
\newblock Inapproximability for antiferromagnetic spin systems in the tree
  nonuniqueness region.
\newblock {\em J. {ACM}}, 62(6):50:1--50:60, 2015.

\bibitem[HLL{\etalchar{+}}17]{HLLWX17}
Kun He, Liang Li, Xingwu Liu, Yuyi Wang, and Mingji Xia.
\newblock Variable-version {L}ov{\'{a}}sz local lemma: Beyond {S}hearer's
  bound.
\newblock In {\em {FOCS}}, pages 451--462, 2017.

\bibitem[HS13a]{HS13b}
David~G. Harris and Aravind Srinivasan.
\newblock Constraint satisfaction, packet routing, and the {L}ov\'asz {L}ocal
  {L}emma.
\newblock In {\em STOC}, pages 685--694, 2013.

\bibitem[HS13b]{HS13a}
David~G.\ Harris and Aravind Srinivasan.
\newblock The {M}oser-{T}ardos framework with partial resampling.
\newblock In {\em FOCS}, pages 469--478, 2013.

\bibitem[HSS11]{HSS11}
Bernhard Haeupler, Barna Saha, and Aravind Srinivasan.
\newblock New constructive aspects of the {L}ov{\'{a}}sz {L}ocal {L}emma.
\newblock {\em J. {ACM}}, 58(6):28:1--28:28, 2011.

\bibitem[HSZ19]{HSZ19}
Jonathan Hermon, Allan Sly, and Yumeng Zhang.
\newblock Rapid mixing of hypergraph independent sets.
\newblock {\em Random Struct. Algorithms}, 54(4):730--767, 2019.

\bibitem[HV03]{HV03}
Thomas~P. Hayes and Eric Vigoda.
\newblock A non-{M}arkovian coupling for randomly sampling colorings.
\newblock In {\em {FOCS}}, pages 618--627, 2003.

\bibitem[HV15]{HV15}
Nicholas J.~A. Harvey and Jan Vondr{\'{a}}k.
\newblock An algorithmic proof of the {L}ov\'asz {L}ocal {L}emma via resampling
  oracles.
\newblock In {\em FOCS}, pages 1327--1346, 2015.

\bibitem[Jer95]{Jer95}
Mark Jerrum.
\newblock A very simple algorithm for estimating the number of k-colorings of a
  low-degree graph.
\newblock {\em Random Struct. Algorithms}, 7(2):157--165, 1995.

\bibitem[JS93]{JS93}
Mark Jerrum and Alistair Sinclair.
\newblock Polynomial-time approximation algorithms for the {I}sing model.
\newblock {\em {SIAM} J. Comput.}, 22(5):1087--1116, 1993.

\bibitem[JSV04]{JSV04}
Mark Jerrum, Alistair Sinclair, and Eric Vigoda.
\newblock A polynomial-time approximation algorithm for the permanent of a
  matrix with nonnegative entries.
\newblock {\em J. {ACM}}, 51(4):671--697, 2004.

\bibitem[JVV86]{JVV86}
Mark Jerrum, Leslie~G. Valiant, and Vijay~V. Vazirani.
\newblock Random generation of combinatorial structures from a uniform
  distribution.
\newblock {\em Theor. Comput. Sci.}, 43:169--188, 1986.

\bibitem[Kol16]{Kol16}
Vladimir Kolmogorov.
\newblock Commutativity in the algorithmic {L}ov{\'{a}}sz local lemma.
\newblock In {\em {FOCS}}, pages 780--787, 2016.

\bibitem[KS11]{KS11}
Kashyap Babu~Rao Kolipaka and Mario Szegedy.
\newblock {M}oser and {T}ardos meet {L}ov{\'{a}}sz.
\newblock In {\em STOC}, pages 235--244, 2011.

\bibitem[LY13]{LY13}
Pinyan Lu and Yitong Yin.
\newblock Improved {FPTAS} for multi-spin systems.
\newblock In {\em {RANDOM}}, pages 639--654, 2013.

\bibitem[LYZZ17]{LYZZ17}
Pinyan Lu, Kuan Yang, Chihao Zhang, and Minshen Zhu.
\newblock An {FPTAS} for counting proper four-colorings on cubic graphs.
\newblock In {\em {SODA}}, pages 1798--1817, 2017.

\bibitem[Moi19]{Moi19}
Ankur Moitra.
\newblock Approximate counting, the {L}ov{\'{a}}sz local lemma, and inference
  in graphical models.
\newblock {\em J. {ACM}}, 66(2):10:1--10:25, 2019.

\bibitem[MT10]{MT10}
Robin~A. Moser and G{\'{a}}bor Tardos.
\newblock A constructive proof of the general {L}ov{\'{a}}sz {L}ocal {L}emma.
\newblock {\em J. {ACM}}, 57(2), 2010.

\bibitem[R{\"o}d85]{Rod85}
Vojt\v{e}ch R{\"o}dl.
\newblock On a packing and covering problem.
\newblock {\em Eur. J. Combin.}, 6(1):69 -- 78, 1985.

\bibitem[Spe77]{Spe77}
Joel Spencer.
\newblock Asymptotic lower bounds for {R}amsey functions.
\newblock {\em Discrete Mathematics}, 20:69 -- 76, 1977.

\bibitem[Vig00]{Vig00}
Eric Vigoda.
\newblock Improved bounds for sampling colorings.
\newblock {\em J. Math. Phys.}, 41(3):1555--1569, 2000.

\bibitem[Wei06]{Wei06}
Dror Weitz.
\newblock Counting independent sets up to the tree threshold.
\newblock In {\em {STOC}}, pages 140--149, 2006.

\end{thebibliography}

\end{document}